\documentclass[11pt]{article}
\usepackage{amsmath,amssymb}
\usepackage[left=1in,right=1in,top=1.1in,bottom=1in]{geometry}
\usepackage{bm}
\usepackage{expl3}

\usepackage{tikz}
\usepackage{appendix}
\usepackage{subfigure}
\usepackage[english]{babel}
\usepackage{bbm}
\usepackage{graphicx}
\usepackage[center]{caption2}
\usepackage{amsfonts,amssymb,amsmath,latexsym,amsthm}
\usepackage{multirow}
\usepackage{enumerate}
\usepackage{geometry}
\usepackage[usenames,dvipsnames]{pstricks}
\usepackage{epsfig}
\usepackage{pst-grad} % For gradients
\usepackage{pst-plot} % For axes
\usepackage[space]{grffile} % For spaces in paths
\usepackage{enumitem}% delete small point before items
\usepackage{etoolbox} % For spaces in paths
\makeatletter % For spaces in paths
\patchcmd\Gread@eps{\@inputcheck#1 }{\@inputcheck"#1"\relax}{}{}

\newtheorem{thm}{Theorem}[section]
\newtheorem{cor}[thm]{Corollary}
\newtheorem{lem}[thm]{Lemma}

\newtheorem{claim}[thm]{Claim}

\newtheorem{defn}[thm]{Definition}

\newtheorem{bq}[thm]{Main Question}

\newtheoremstyle{plainupright}
  {\topsep}   % Space above
  {\topsep}   % Space below
  {\upshape}  % Body font
  {}          % Indent amount
  {\bfseries} % Theorem head font
  {.}         % Punctuation after theorem head
  { }         % Space after theorem head
  {}          % Theorem head spec (can be left empty, meaning 'norma\lan')
\theoremstyle{plainupright}
\newtheorem{example}{Example}
\newtheorem{remark}{Remark}

\let\svthefootnote\thefootnote
\newcommand\blankfootnote[1]{%
	\let\thefootnote\relax\footnotetext{#1}%
	\let\thefootnote\svthefootnote%
}

\newcommand{\xhdr}[1]{\paragraph{\bf #1.}}

\newcommand{\omt}[1]{}

\newcommand\supproof[1]{} 
\newcommand\topo[1]{\tau_{#1}}

\def\coll{{\mathcal{X}}}

\def\trueL{{K}}

\newcommand{\Lc}[2]{L_{{#1}}^{({#2})}}
\newcommand{\Ic}[2]{\mathcal{I}_{{#1}}^{({#2})}}

\newcommand{\mX}{{\mathcal{X}}}
\newcommand{\ma}{{\mathcal{A}}}

\newcommand{\mP}{{\mathcal{P}}}

\newcommand{\mB}{{\mathcal{B}}}
\newcommand{\Acc}{\mathcal{A}_{\text{acc}}}

\newcommand{\mS}{{\mathcal{S}}}

\newcommand{\Succ}{\text{succ}}

\newcommand{\I}[1]{\mathcal{I}^{(#1)}}
\newcommand{\mCh}[1]{\mathcal{C}_{#1}}

\newcommand{\lan}{\mathsf{L}}
\newcommand{\lanj}[1]{\mathsf{J}_{#1}}

\newcommand{\fbga}[1]{\widetilde{\mathcal{I}}^{(#1)}}

\def\groundset{\mathcal{Y}}

\ExplSyntaxOn

\ExplSyntaxOff

\begin{document}

\pagenumbering{gobble}
	
\title{Language Generation and Identification From Partial Enumeration: Tight Density Bounds and Topological Characterizations}
	
\date{\today}

 % \author{Authors anonymized for submission version}
  \author{Jon Kleinberg\thanks{Department of Computer Science and Information Science, Cornell University, Ithaca NY 14853 USA.  Supported in part by a Vannevar Bush Faculty Fellowship, AFOSR grant FA9550-23-1-0410, a Simons Collaboration grant, and a grant from the MacArthur Foundation.} \and Fan Wei\thanks{Department of Mathemaics, Duke University, 120 Science Drive, Durham, NC 27710, USA. Research supported by NSF grants DMS-2404167 and DMS-2401414. } }

\maketitle

\begin{abstract}
The recent successes of large language models (LLMs) have led to active lines of work in formal theories of language generation and learning. We build on one such theory, {\em language generation in the limit}, in which an adversary enumerates the strings of an unknown language $K$ drawn from a countable list of candidate languages, 
and an algorithm tries to generate unseen strings from the language. 
Initial work on this model showed there is an algorithm that can always succeed at this task, and more recent work has shown there is in fact an algorithm that can produce a positive-density subset of the language. These results on density reflect the \emph{validity--breadth} tension in language generation: the trade-off between generating only valid strings while also achieving wide coverage of the true language. 

Here we begin by resolving one of the main open questions from this work on density, establishing a tight bound of $1/2$ on the best achievable lower density of any algorithm. We then consider a more powerful adversary, capturing the fact that generation algorithms may typically be faced with an environment in which only a subset of the language is being produced. This is a model with only {\em partial enumeration of $K$}: We show that there is an algorithm with the property that if an adversary only outputs an infinite subset $C$ of the true language $K$, it can still achieve language generation in the limit; and moreover, if the subset $C$ has lower density $\alpha$ in $K$, then the algorithm produces a subset of lower density at least $\alpha/2$, which matches the upper bound. This generalizes the tight density bound of $1/2$ to the case where the algorithm must come within $1/2$ of the density of whichever subset of $K$ the adversary reveals. 

We also revisit the classical Gold-Angluin model of language identification (rather than generation) when the adversary need only partially enumerate an infinite subset $C$ of the true language $K$.
We characterize when it is possible for an algorithm to achieve the natural analogue of identification in the limit in this partial setting, producing languages $M_t$ (and finite representations of them) such that eventually $C \subseteq M \subseteq K$.  Our characterization builds on our earlier topological approach  on density in language generation \cite{kleinberg2025density}, and in the process we give a new topological formulation of Angluin's characterization for language identification in the limit, showing that her condition is precisely equivalent to some appropriate topological space having the $T_D$ separation property. 
\end{abstract}

\newpage

\tableofcontents

\newpage

\pagenumbering{arabic}
\setcounter{page}{1}

\section{Introduction}
The rapid growth of generative AI has made \emph{language generation}
central to modern computing. 
Large language models (LLMs) trained on text corpora
consistently produce new, valid text beyond their training examples,
and understanding how they do this has been the subject of
several important lines of theoretical analyses, ranging from 
studies of trasnformer-based architectures that dominate
current LLM design
\cite{peng-transformer,sanford-transformer,wang-transformer,weiss-transformer}
to more abstract models that formulate language generation
as a standalone computational problem
\cite{charikar-pabbarju,kalai2023calibrated,kalavasis-stoc25,
kleinberg2024limit, kleinberg2025density,li-generation-pac}.

In this paper, we work within this second type of model, adopting a {\textbf{basic,
assumption-free}} approach to study language generation, in a 
line of research that began with work by Gold and Angluin
on language identification in the limit
\cite{angluin1979finding,angluin1980inductive,gold1967language}.
In the Gold-Angluin model, 
there is a countable collection of languages
$\mathcal{X} = \{L_1, L_2, L_3, \dots\}$, each of which is an infinite
subset of a ground set $\groundset$ consisting of all finite-length strings over
some alphabet.
An adversary chooses a language $K \in \mathcal{X}$ and enumerates
all the strings of $K$ in some order, one in each time step.
An algorithm watches these strings one-by-one, and in each time step
the algorithm guesses an index $i_t$.
The algorithm succeeds --- it achieves
{\em language identification in the limit} --- if after
some finite time $t^*$ its guess is correct forever;
that is, if $L_{i_t} = K$ for all $t \geq t^*$.
Gold proved that no algorithm can succeed at this problem
even for very simple collections $\mathcal{X}$
\cite{gold1967language}, and
Angluin subsequently characterized the collections for which
an algorithm can succeed \cite{angluin1979finding,angluin1980inductive}.

Recently, a new type of question was introduced by
Kleinberg and Mullainathan \cite{kleinberg2024limit}
into this set of models,
motivated by developments in large language models:
in this new question, the goal is 
{\textbf{not identification but generation}}.
Specifically, there is still an adversary that 
chooses a language $K \in \mathcal{X}$ and enumerates
all the strings of $K$ in some order, one in each time step.
But now the algorithm that watches these strings one-by-one
is not required to identify the language, only to 
{\em generate} from it:
in step $t$, it must output a string $a_t$ from $K$ that the adversary
has not yet produced.
In other words, the algorithm seeks to satisfy the requirement
that $a_t \in K - S_t$, where $S_t$ is the set of strings
produced by the adversary in the first $t$ steps.
In this new model, the algorithm achieves 
{\em language generation in the limit}
if there is some time $t^*$ such that 
$a_t \in K - S_t$ for all $t \geq t^*$.
And the main result of Kleinberg and Mullainathan is that there
is an algorithm (which we will call the {\em KM algorithm}) that
achieves generation in the limit for {\em every}
countable collection of languages $\mathcal{X}$.
This is very different from the Gold-Angluin impossibility results, and
shows that generation in the limit is much more tractable than
identification in the limit.

\xhdr{Tension between Breadth and Validity in Language Generation}
An active line of research on language generation in the limit followed
up quickly on the initial Kleinberg-Mullainathan results
\cite{kalavasis-stoc25, kalavasis2024breadth, kalavasis2025limits, charikar-pabbarju, bai2025noise, raman2025noisy, li2025learning, hanneke2025union}. 
The portion of this work that is most relevant for our purposes
is concerned with the problem of {\em breadth}:
how ``much'' of the unknown language $K$ is an algorithm able to generate?
This question mirrors an issue of central important in applied
work on LLMs and language generation, which seeks to balance between
avoiding {\em hallucination} at one extreme \cite{kalai2023calibrated} --- 
in which
the generated strings are not valid members of the true language --- 
and avoiding {\em mode collapse} at the other extreme --- in which 
the set of outputs are very sparse relative to the full set of valid outputs
\cite{allen2024physics}.

The original KM algorithm generated a subset of $K$ that, while infinite,
could be very sparse in a sense we will make precise shortly.
It therefore raised the question of whether there was room for
improvement on its breadth, or whether this type of mode collapse 
was unavoidable, as asked in the paper by Kleinberg and Mullainathan \cite{kleinberg2024limit}. One might guess that this tension is inevitable. 
To understanding this question better, it is useful to observe that the original KM algorithm has the following property:  it generates strings by keeping track of a descending chain of candidate languages, each contained within the previous one, all consistent with observed data. In striving to avoid overshooting the true language in order to guarantee validity, is forced to descend down the chain continuously. It ensures correctness but at the cost of density: each update prunes the space further, so even though the algorithm eventually reaches subsets of the true language, its outputs occupy an asymptotically vanishing fraction of it.

Several follow-up papers pointed in the negative direction by
showing inherent limits on how much of $K$ could be generated.
In this direction, 
Charikar and Pabbaraju~\cite{charikar-pabbarju} showed an remarkable negative result that
there are instances where if the adversary stops showing further
samples from $K$ at any finite time, no algorithm can output the
entire rest of $K$, and Bai et al showed similar striking negative result even if the adversary only insert a finite noisy inputs \cite{bai2025noise}. 
Kalavasis et al.~\cite{kalavasis2024breadth} studied a statistical model
where strings from \( K \) are sampled independently from a fixed
distribution \( \mathcal{P} \), and the adversary reveals strings in
an average-case fashion; they showed limitations on the probability
any algorithm could achieve in trying to generate unseen strings. 

Our research aim to study this problem in a \textbf{worst-case} setting, rather than under average-case or probabilistic assumptions, under some density notion. 
As the first work among this direction, as a counterweight to these negative results, we  \cite{kleinberg2025density} have
 developed a framework for establishing
positive results for breadth, and for quantifying achievable levels
of breadth numerically, via {\em density measures} 
commonly used in analytic number theory and combinatorics.
The basic premise was the following:
\begin{quote} 
While it may be impossible to generate \emph{all} unseen strings, it is still meaningful—and potentially tractable and more practical—for an algorithm to achieve \textbf{nontrivial density} over the target language. 
\end{quote}

Concretely, in \cite{kleinberg2025density}, we defined breadth
by the \textit{asymptotic density} of strings generated by the algorithm within the true language. The density captures the fraction of relevant strings produced over increasingly large subsets of language instances, offering a precise way to measure the breadth of generative behavior:

\begin{defn}
    If $L$ and $L'$ are two languages, and the elements of $L'$ listed in order as
$L' = \{\ell_1', \ell_2', \ell_3', \ldots\}$,
we say that the {\em upper density} of $L$ in $L'$ is
   \[ d_{\text{up}}(L, L') = \limsup_{N \to \infty}  \frac{\mid L \cap \{\ell_1', \dots, \ell_N'\} \mid}{N}.\]
Analogously, we say that the {\em lower density} of $L$ in $L'$ is
   $d_{\text{low}}(L, L') = \liminf_{N \to \infty}  \frac{\mid L \cap \{\ell_1', \dots, \ell_N'\} \mid}{N}.$
\end{defn}

It is necessary to have both definitions: as shown in \cite{kleinberg2025density},
the upper and lower densities can differ dramatically if, for example,
$L$ consists
of increasingly long intervals that alternate between containing
many elements of $L'$ and then very few elements of $L'$.
There are many examples where
the upper density can be arbitrarily close to one while the lower density is arbitrarily close to zero.

Finally, to apply this definition, let  
$O(E,\ma)$, for an adversarial enumeration $E$ of a language
$K \in \mathcal{X}$ and a generation algorithm $\ma$,
 be the set of all strings that $\ma$ outputs, over all time steps,
when presented with the enumeration $E$.
For the original KM algorithm, there are instances where
the algorithm's output set $O(E,\ma)$ has upper density zero
(and hence lower density zero) 
in the true language $K$ --- this is the sense in which it can
generate an arbitrarily sparse subset of $K$.
It was therefore natural to ask whether zero density might 
be unavoidable for any algorithm that achieves generation in the limit.
But we showed that this is not the case:
our paper \cite{kleinberg2025density} gave an algorithm achieving
generation in the limit such that for some
absolute positive constant $c > 0$, the algorithm generates a set of
strings $O(E,\ma)$ of lower density at least $c$ in all instances 
specified by any collection of languages $\mathcal{X}$,
and any enumeration $E$ of one of the languages in $\mathcal{X}$
\cite{kleinberg2025density}.
Their work thus provides theoretical support to the idea that,
despite the impossibility results of 
\cite{kalavasis2024breadth,charikar-pabbarju},
intermediate guarantees are possible,
aligning with the empirical evidence that in practice, 
models can be trained to simultaneously reduce hallucination 
and alleviate mode collapse \cite{allen2024physics,arjovsky2017towards,arjovsky2017wasserstein}. 

A fundamental question arising from \cite{kleinberg2025density} was then the following:
\begin{bq}\label{bq:tension}
What is the best constant $c > 0$
such that some algorithm $\ma$ achieves generation in the limit,
and for every instance 
the lower density of the algorithm's output set $O(E,\ma)$ in $K$ is always
at least $c$?
\end{bq}
We were able to achieve a constant of $c = 1/8$ in \cite{kleinberg2025density}, but
the best known upper bound was $c = 1/2$ (which follows in a straightforward
way from the fact that an adversary can output roughly half the
strings before the algorithm can get to them, even if the algorithm
knows the identity of the true language $K$).
It was therefore a basic open question to determine whether the
natural upper bound of $c = 1/2$ was in fact the tight bound. This problem was also highlighted as a main open question in~\cite{kleinberg2025density}.

\subsection{The Present Work: Tight Density Bounds and Partial Enumeration}

Our first contribution is to completely resolve Question \ref{bq:tension},
by showing that the upper bound $c = 1/2$ is indeed tight:

\begin{thm}\label{thm:introbestboundfull}
    There is an algorithm $\ma$ that achieves generation in the limit and has the following property. Given any countable collection of languages $\coll$ with  an underlying ordering of the strings in all the languages, and for any adversarial enumeration $E$ of one of the languages $\trueL \in \coll$, the set of output strings $O(E,\ma)$ generated by the algorithm $\ma$ has a lower density in $K$  at least $1/2$.
\end{thm}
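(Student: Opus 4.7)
The plan is to build an algorithm whose tail behavior implements, as closely as possible, the idealized ``greedy racing'' strategy that would be used if $\trueL$ were known in advance, and to show that this strategy achieves the matching lower bound of $1/2$ via a one-output-per-side counting argument.

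\textbf{The idealized race inside $\trueL$.} First I would analyze the hypothetical algorithm that knows $\trueL$: list $\trueL = \{k_1, k_2, \ldots\}$ in the ambient order, and at every step output the smallest-rank $k_i$ not yet in $S_t \cup O_{t-1}$. A short induction on $t$ shows that after step $t$ the combined pool $S_t \cup O_t$ always contains a prefix $\{k_1, \ldots, k_{m_t}\}$ with $t \le m_t \le 2t$, and every output the algorithm has produced by time $t$ has $\trueL$-rank at most $2t$. Hence at any threshold $N$, after $\lfloor N/2 \rfloor$ steps the algorithm has claimed at least $\lfloor N/2 \rfloor$ elements of $\{k_1, \ldots, k_N\}$, giving lower density in $\trueL$ at least $1/2$. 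This is where the constant $1/2$ originates: the adversary and algorithm each consume one element per step from a shared pool inside $\trueL$, matching the known upper bound exactly.

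\textbf{Approximating the idealized race without knowing $\trueL$.} The real algorithm cannot access $\trueL$ directly. Instead, at time $t$ it maintains the consistency set $\coll_t = \{L \in \{L_1,\ldots,L_t\} : S_t \subseteq L\}$ together with partial inclusion data obtained by restricting each $L \in \coll_t$ to the finite prefix $\{y_1, \ldots, y_{M_t}\}$ for a dovetailed $M_t \to \infty$. The algorithm selects a working language $L^{(t)}$ that appears $\subseteq$-minimal in this partial picture and then outputs the earliest unclaimed element of $L^{(t)}$, exactly as in the idealized race. The central structural claim---whose proof I expect to draw on the topological chain machinery of \cite{kleinberg2025density} together with Angluin-style finite-prefix convergence---is that $L^{(t)}$ stabilizes at $\trueL$ from some finite time $\fint$ onward: any $L \not\supseteq \trueL$ is eventually refuted by an adversary output in $\trueL \setminus L$, and any strict superset $L \supsetneq \trueL$ is eventually demoted beneath $\trueL$ in the finite-prefix inclusion test once the prefix exposes a witness in $L \setminus \trueL$.

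\textbf{Assembling the density bound, and the main obstacle.} Once $L^{(t)} = \trueL$ for all $t \ge \fint$, the algorithm executes exactly the idealized race from time $\fint$ onward, and the finitely many pre-$\fint$ outputs contribute only an $O(1)$ additive term to the count $|O(E,\ma) \cap \{k_1,\ldots, k_N\}|$. Plugging into the racing inequality from Step~1 gives $|O(E,\ma) \cap \{k_1,\ldots, k_N\}| \ge \lfloor N/2 \rfloor - O(1)$, so the lower density of $O(E,\ma)$ in $\trueL$ is at least $1/2$. The main obstacle is the stabilization step: in the Gold--Angluin setting, $\subseteq$-minimality among infinitely many candidates is not decidable in finite time, so the selection rule must be cautious enough never to commit permanently to a strict superset of $\trueL$, while still being aggressive enough to converge to $\trueL$ itself. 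Designing a selection rule that dovetails candidates with prefix lengths in a way that makes both halves of this requirement simultaneously work---ideally by re-expressing the selection as a descending-chain argument in the topological spirit of \cite{kleinberg2025density}---is where I expect the real technical work of the proof to live.
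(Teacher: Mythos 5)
Your Step 1 (the idealized race when $K$ is known) is fine and matches the intuition the paper itself uses to explain where the constant $1/2$ comes from. The fatal problem is Step 2: the claim that a working language $L^{(t)}$ can be made to \emph{stabilize at exactly} $K$ after some finite time $t^*$. Eventually outputting (a representation of) exactly $K$ for all large $t$ is precisely identification in the limit, and by Gold--Angluin this is impossible for general countable collections, even with all languages infinite. The paper's own example makes this concrete: take $\mathcal{X} = \{\mathbb{N}\} \cup \{\mathbb{N}\setminus\{i\} : i \ge 1\}$ with $K = \mathbb{N}$. At every finite time, infinitely many not-yet-refuted languages $\mathbb{N}\setminus\{i\}$ are consistent and appear $\subseteq$-smaller than $K$ on any finite prefix, so any rule that is ``cautious enough never to commit to a strict superset of $K$'' will keep selecting these impostors and never stabilize on $K$; your demotion rule only handles strict supersets of $K$ and says nothing about the unending stream of consistent non-supersets entering at higher indices. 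More generally, validity forces the algorithm to err toward subsets of $K$, and those apparent-minimal subsets can have density zero in $K$ -- this is exactly the mode-collapse behavior of the KM algorithm that motivates the whole problem -- so the ``race inside $L^{(t)}$'' does not claim elements of $K$ at rate $1/2$ during such phases. Your own closing paragraph flags this stabilization as the remaining technical work, but the point is that no selection rule can supply it: the statement you need is false in general, not merely hard.

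The paper's proof is structured precisely to avoid needing stabilization. Its conjunction-based Algorithm~1 guarantees only that the identified intersection $\I{t}$ is eventually always a subset of $K$ (validity) and contains the enumerated set \emph{infinitely often} (fullness), with no control on the gaps between full timestamps; the paper explicitly notes that an adversary can exploit these gaps to create ``deserts'' of $K$, which is why the easy argument only gives the sharp bound for \emph{upper} density (Theorem~\ref{thm:halfupperbound}) and a weaker $1/3$ for lower density. The tight $1/2$ is then obtained by a different mechanism: each step the algorithm commits to a \emph{pod} of $s_t$ unused strings (with $s_t \to \infty$) and always outputs the smallest outstanding pod element; a partition of the adversary's unmatched strings into good and bad sets $\mB_g, \mB_b$, together with an injective map from bad strings to earlier pods (Lemma~\ref{claim:infrho2}), yields the counting inequality giving lower density at least $\alpha/(2+4/s) \to \alpha/2$. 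In short, the constant $1/2$ is recovered by amortized accounting against pods rather than by ever knowing $K$, and your proposal is missing a replacement for that entire mechanism.
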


In fact, we establish this result in a more general
model that addresses an important issue that has been essentially
unexplored in the emerging literature on 
language generation in the limit: the question of {\textbf{partial enumeration}}.
Specifically, the existing work on this problem makes crucial
use of the assumption that the adversary is required to enumerate
the entire language $K$, i.e., that for every string $w \in K$, 
there is a time-step $t$ such that the adversary outputs $w$ at time $t$.
But there is no a priori indication why complete enumeration should be
necessary, and in practice one of the most 
fundamental and perplexing phenomena in modern LLMs
is their remarkable ability to generalize
effectively from highly incomplete training data.

We therefore develop a formal framework for addressing the following
theoretical question:
\begin{quote}
\vspace{-0.05in}
\textit{If the adversary enumerates only an infinite  subset \(C\) of the true language \(K\), under what conditions can an algorithm still guarantee correct generation in the limit? What guarantees do we have about the output?}
\end{quote}

A model for this partial enumeration question 
offers the opportunity to abstract several 
key aspects of LLM behavior. First, LLMs often generalize from highly sparse data, and partial enumeration may help provide insight into this process.
Second, real-world training data is frequently incomplete, censored, or even adversarially selected; partial enumeration naturally allows for
adversarially selected subsets of the training data.
Finally, success under partial data suggests that sensitive examples may not be required: again keeping in mind that we are allowing arbitrary
subsets of the language to be enumerated, thus 
shedding light on the trade-off between memorization and privacy.

\paragraph{Mathematical Formulation of Partial Enumeration}
We therefore consider the following model of partial enumeration:
given a countable collection of languages $\mathcal{X}$ as before,
an adversary chooses $K \in \coll$, {\em then chooses an arbitrary
infinite subset $C$ of $K$}, and then enumerates the strings of $C$
in an arbitrary order.
An algorithm seeking to achieve generation in the limit operates
as before: it sees these enumerated strings, one in each time step,
and in time step $t$ it outputs a string $a_t$.
The goal is the same as before: for some time $t^*$, the algorithm's
outputs satisfy $a_t \in K - S_t$ for all $t \geq t^*$.
We say that the algorithm achieves 
{\em generation in the limit with partial enumeration} for the
collection $\coll$ if for any language $K \in \coll$, and
any infinite subset $C$ of $K$, and any enumeration of $C$,
the algorithm achieves the guarantee that for some time $t^*$,
its outputs $a_t$ satisfy $a_t \in K - S_t$ for all $t \geq t^*$.
The key distinction with the traditional definition is that
the adversary only needs to enumerate from the infinite subset $C$,
and thus there may exist infinitely many strings $w \in K$ 
that never appear in the enumeration.

With this definition, we can now ask our basic open question as follows:

\begin{bq}
Which classes of languages \(\mathcal{X}\) admit successful generation 
in the limit when the adversary can perform partial enumeration
(using an arbitrary infinite subset $C$ of its chosen language $K \in \coll$)?
\end{bq}

Our first main result on this model is that generation in the limit
with partial enumeration is possible for {\em every} countable collection 
of languages $\coll$:

\begin{thm}\label{thm:introKWnew}
There is an algorithm $\ma$ with the property
that it achieves generation in the limit with partial enumeration
for every countable collection of languages.
\end{thm}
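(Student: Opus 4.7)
The plan is to extend the Kleinberg--Mullainathan (KM) algorithm to the partial enumeration setting. The key structural observation is that although the adversary never forces the algorithm to see all of $K$, the family
$\mathcal{C}_\infty := \{i : C \subseteq L_i\}$
of languages that remain consistent with the entire partial enumeration necessarily contains the true language $K$. Consequently, $M := \bigcap_{i \in \mathcal{C}_\infty} L_i$ is a subset of $K$ that contains $C$, so $M$ is infinite and every string in $M \setminus S_t$ is a safe output. The algorithm does not know $\mathcal{C}_\infty$, but at time $t$ it has the finite approximation $\mathcal{C}_t := \{i \le t : L_i \supseteq S_t\}$. For each $i \notin \mathcal{C}_\infty$ there is a witness $w_i \in C \setminus L_i$ that the adversary eventually enumerates, after which $i$ is permanently expelled from $\mathcal{C}_t$; hence, for every fixed $n$, the set $\mathcal{C}_t \cap [1,n]$ stabilizes to $\mathcal{C}_\infty \cap [1,n]$ in finite time.

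I would then define the algorithm as follows: at time $t$, choose a cutoff $n_t$ and output any element of
$O_t := \bigl(\bigcap_{i \in \mathcal{C}_t \cap [1,n_t]} L_i\bigr) \setminus S_t.$
The cutoff $n_t$ must grow to infinity so that eventually $n_t$ exceeds the index of $K$ in the enumeration of $\coll$ (guaranteeing $K$ is among the languages being intersected, so the intersection lies inside $K$), while remaining cautious enough that $\mathcal{C}_t \cap [1,n_t]$ has already stabilized to $\mathcal{C}_\infty \cap [1,n_t]$. A concrete choice is to take $n_t$ to be the largest $n \le t$ for which $\mathcal{C}_s \cap [1,n]$ has been unchanged across all $s$ in a lookback window whose length grows with $n$; any unbounded, self-certifying rule of this flavor works.

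The correctness argument then has two parts. First, once $n_t$ is at least the index $k^*$ of $K$ in $\coll$ and $t$ exceeds the stabilization time for $[1,n_t]$, the intersection equals $\bigcap_{i \in \mathcal{C}_\infty \cap [1,n_t]} L_i$, which is a subset of $L_{k^*} = K$ and a superset of $C$. Therefore $O_t \supseteq C \setminus S_t$, which is infinite, so an output exists and it lies in $K$. Second, one must verify that the proposed cutoff rule actually yields $n_t \to \infty$ and, past some finite time, $n_t \ge k^*$. The main obstacle is making the cutoff rule self-certifying without knowing the stabilization times in advance: the algorithm has to recognize stability using only the past history of $\mathcal{C}_s$. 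This is the technical heart of the proof, but it can be handled by a standard waiting argument --- once $\mathcal{C}_s \cap [1,n]$ has been constant for a duration exceeding a prescribed function of $n$, one safely adopts $n$ as the cutoff --- and a diagonal argument across $n$ shows that every $n$ is eventually adopted, forcing $n_t \to \infty$ while preserving correctness.
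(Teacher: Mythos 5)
Your high-level skeleton (discard inconsistent languages, output from a finite prefix intersection of the survivors that eventually includes $K$) matches the paper's, and the validity half of your argument is fine: once $n_t$ exceeds the index $k^*$ of $K$, the intersection is contained in $K$. The gap is in the existence half, i.e., in the claim that a ``self-certifying'' waiting rule makes the cutoff safe. Stability of $\mathcal{C}_s \cap [1,n]$ over a lookback window of any prescribed length $w(n)$ certifies nothing: the adversary knows your deterministic rule and can delay the witness that expels a bad language past any prescribed wait. Concretely, let $K$ be the even numbers (index $1$), let $C=K$ be enumerated in increasing order, and for each $j$ let $B_j$ (index $j+1$) be $\{2,4,\dots,2m_j\}$ together with an infinite set of odd numbers, where $m_1 \ll m_2 \ll \cdots$ grow so fast that $m_j > m_{j-1} + w(j+1) + j$. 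At time $t=m_j$ the set $\mathcal{C}_s \cap [1,j+1]$ has been unchanged since time $m_{j-1}+2$, far longer than $w(j+1)$, so your rule adopts $n_t \ge j+1$; every consistent $B_i$ with index at most $n_t$ has $m_i \ge t$, and hence
\[
\bigcap_{i \in \mathcal{C}_t \cap [1,n_t]} L_i \;=\; \{2,4,\dots,2t\} \;=\; S_t ,
\]
so your prescribed output set $O_t$ is empty and the algorithm is forced to output arbitrarily. This happens at $t=m_j$ for every $j$, i.e., infinitely often, so generation in the limit fails; note the failure is not about whether $n_t \to \infty$ (your diagonal argument) but about what gets intersected at those bad times, and your correctness argument only covers times at which genuine stabilization has already occurred, which you cannot detect.

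The missing idea, and the route the paper takes, is to choose the depth of the intersection adaptively by an \emph{infinitude} test rather than an apparent-stability test: among the consistent languages listed in their original order, intersect the longest prefix whose intersection is still infinite (and when every finite prefix intersection is infinite but the total intersection is finite, intersect the first $t$ of them). This guarantees at every single step an infinite pool of unseen strings to output, and after the finite time at which all consistent languages preceding $K$ contain $C$, the prefix intersection through $K$'s position contains $C$ and is therefore infinite, so the chosen prefix necessarily reaches or passes $K$ and the output lies in $K - S_t$. In your framework this amounts to replacing the waiting rule by ``$n_t$ is the largest cutoff (up to $t$) for which the intersection of the consistent languages below it is infinite''; without some such replacement, the proposal as written does not establish the theorem.
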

In fact, we are able to prove a much stronger theorem. In Theorem \ref{thm:introaccuate}, we showed that not only can we guarantee generation in the limit, but the generation is basically ``accurate" or ``full" infinitely often. 
We will give an overview of how we prove this theorem below.
For now, we observe two interesting things about
Theorem \ref{thm:introKWnew}.
First, in contrast to the adversary from the standard model
of generation in the limit, who could choose a language to generate
from the countable collection $\coll$, 
the adversary in the model of partial enumeration can choose
from among the much larger {\em uncountable} collection of subsets of the
languages in $\coll$.
This is a concrete sense in which the adversary in the partial enumeration
model has much more freedom.
Second, the algorithm $\ma$ achieving the guarantee in 
Theorem \ref{thm:introKWnew} does not need to have its output
set $O(E,\ma)$ be a subset of the adversary's infinite subset $C \subseteq K$;
indeed, it does not for $O(E,\ma)$ to intersect $C$ at all, as long as
$O(E,\ma)$ is a subset of $K$.
And this is necessary, since in general the algorithm cannot guarantee
anything about the intersection of $C$ and $O(E,\ma)$.

We next combine the model of partial enumeration with the 
formulation of lower density to arrive at a result that combines
the conclusions of Theorems \ref{thm:introbestboundfull} 
and \ref{thm:introKWnew}:

\begin{thm}\label{thm:intropartialdensity}
    There is an algorithm $\ma$ that achieves generation in the limit with
partial enumeration and that has the following property. 
Given any countable collection of languages $\coll$ with an
underlying ordering of the strings in all the languages, and for any
adversarial enumeration $E$ of an infinite subset $C$ of
one of the languages $\trueL \in \coll$ such that $C$ has
lower density at least $\alpha > 0$ in $K$,
the set of output strings $O(E,\ma)$ generated by the algorithm has a lower density in $K$ that is at least $\alpha/2$.
\end{thm}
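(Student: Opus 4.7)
The plan is to combine the partial-enumeration algorithm of Theorem~\ref{thm:introaccuate} with the greedy output rule underlying the tight $1/2$-density bound of Theorem~\ref{thm:introbestboundfull}. I would take the algorithm of Theorem~\ref{thm:introaccuate} as a black box that produces, at each time $t$, a candidate language $L_t$ with $L_t \subseteq K$ eventually and $L_t = K$ at an infinite set of ``accurate'' times $t \in T \subseteq \mathbb{N}$. Then I would replace that algorithm's output rule with the greedy one: let $a_t$ be the smallest-underlying-indexed string in $L_t - S_t - O_{t-1}$. Because $L_t \subseteq K$ from some point on and $L_t$ is infinite, this is well-defined and yields valid outputs in $K - S_t$, so partial generation in the limit is preserved.

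The density analysis hinges on a ``greedy invariant'' at accurate times. Let $k_1, k_2, \ldots$ denote the elements of $K$ listed in the underlying ordering. At each $t \in T$, the rule outputs the smallest element of $K - S_t - O_{t-1}$, and since $|S_t \cup O_{t-1}| \leq 2t - 1 < 2t$, at least one of $k_1, \ldots, k_{2t}$ is uncovered; hence $a_t \in \{k_1, \ldots, k_{2t}\}$. In particular, each accurate time $t \leq N/2$ contributes a distinct output inside $\{k_1, \ldots, k_N\}$, giving
\[
\bigl| O \cap \{k_1, \ldots, k_N\} \bigr| \;\geq\; | T \cap [1, N/2] |.
\]

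To obtain the $\alpha/2$ bound, I would then establish a quantitative refinement of Theorem~\ref{thm:introaccuate}: the set $T$ of accurate times has lower density at least $\alpha$ in $\mathbb{N}$. The intuition is that the adversary produces one string of $C$ per time step, and since $C$ has lower density $\alpha$ in $K$, the information content delivered to the algorithm scales linearly with $\alpha$, so the algorithm should be able to certify $L_t = K$ on a set of times of lower density at least $\alpha$. Substituting $|T \cap [1, N/2]| \geq (\alpha/2)\, N - o(N)$ into the display above yields lower density at least $\alpha/2$ of $O$ in $K$.

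The main obstacle, and the real technical step beyond Theorems~\ref{thm:introbestboundfull} and~\ref{thm:introaccuate}, is this quantitative estimate on $T$. Theorem~\ref{thm:introaccuate} as stated only gives that $T$ is infinite; upgrading to the sharp statement $\liminf_N |T \cap [1, N]|/N \geq \alpha$ calls for a careful step-by-step analysis of how the partial-enumeration algorithm's candidate $L_t$ locks onto $K$, showing that the gaps between successive accurate times are amortized by $1/\alpha$. Once this estimate is in place, the conclusion follows from the greedy invariant with essentially no further work.
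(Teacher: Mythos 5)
There is a genuine gap, and it sits exactly at the step you yourself flag as ``the real technical step'': the claim that the set $T$ of accurate times has lower density at least $\alpha$ in $\mathbb{N}$. Nothing of the sort can be established, and the paper explicitly argues the opposite: the accurate (``full'') moments are triggered by adversary-controlled events (a previously consistent language finally being killed by a revealed string), so the adversary can space them out arbitrarily, creating long ``deserts'' of $K$ between consecutive accurate times; moreover the algorithm has no observable signal for when an accurate time occurs, so it cannot adapt its behavior to them. The ``information content scales with $\alpha$'' intuition has no force here because accuracy is about which languages remain consistent, not about how densely $C$ sits in $K$. A second, independent problem is that you state the black-box guarantee as $L_t = K$ at accurate times; Theorem~\ref{thm:introaccuate} only gives $C \subseteq M_t \subseteq K$, and $M_t = K$ is unattainable under partial enumeration (the algorithm cannot distinguish $K$ from a smaller consistent language containing $C$). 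With only $C \subseteq L_t$, your greedy invariant places the output at an accurate time $t$ among roughly the first $2t/\alpha$ elements of $K$, not the first $2t$, so even granting the (unavailable) density-$\alpha$ bound on $T$ your computation would yield $\alpha^2/2$ rather than $\alpha/2$.

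The paper's proof takes a different route precisely because the temporal distribution of accurate times cannot be controlled. It never counts accurate times; instead it runs a charging argument on strings: the adversary's strings in $C$ that the algorithm never outputs are split into ``good'' ones (where the algorithm's recent output is not far behind) and ``bad'' ones, and each bad string is injectively charged, via the chain structure of the identified intersections, to an earlier algorithm output (Lemma~\ref{claim:infrho}), giving $\alpha/3$; the optimal $\alpha/2$ is then obtained by replacing single outputs with ``pods'' of $s_t \to \infty$ reserved strings (Lemma~\ref{claim:infrho2}), which dilutes the double-counting between good and bad charges to $o(1)$. If you want to salvage your outline, the piece you must replace is the density claim on $T$ with some string-level accounting of this kind; as written, the proposal does not go through.
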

Again, $\alpha/2$ also serves as an upper bound on the lower density
achievable by any algorithm,
analogous to the situation in the full enumeration model. Note that
Theorem~\ref{thm:introbestboundfull} is a special case of
Theorem~\ref{thm:intropartialdensity}.

\subsection{Index-Based and Conjunction-Based Generation}

In order to discuss the approach to proving these results
on partial enumeration, it is useful to start by going back
to the definition of generation in the limit.
In particular, we will refer to the basic guarantee that
$a_t \in K - S_t$ for all $t \geq t^*$ as 
{\em element-based generation in the limit}, to contrast it
with other notions we introduce now.

The algorithms for generation in the limit in previous work,
beginning with the KM algorithm, typically 
achieve a stronger guarantee called
{\em index-based generation in the limit}:
they not only output a string $a_t$ in time step $t$, they
also guess an index $i_t$ for one of the languages in $\coll$
with the property that for some time $t^*$ and all $t \geq t^*$,
we have $L_{i_t} \subseteq K$.
Any algorithm that achieves index-based generation in the limit 
can also achieve element-based generation in the limit:
by outputting an unseen string from $L_{i_t}$, it is guaranteed
to be outputting a string in $K - S_t$, since $L_{i_t} \subseteq K$.

In the case of an adversary that does partial enumeration, however, 
we discover a surprising separation between element-based generation
and index-based generation: it is possible to achieve only the former,
but not always the latter.
To illustrate, consider a family of languages $\{L_n\}_{n \in
\mathbb{N}}$, where each $L_n$ contains all natural numbers except
$n$. If the true language $K$ is $L_z$ for some odd number $z$, and
the adversary only enumerates even numbers, an algorithm has no way to
correctly identify an index $i_t$ such that $L_{i_t} \subseteq K$.
This distinction is significant because it highlights a fundamental
difference between element-based and index-based generation, a contrast that is
absent in the standard model where the adversary enumerates all of $K$
rather than just an infinite subset.

Is there some weaker concept that we can use in place of 
index-based generation?
We show that in fact there is, via finite intersections of
languages in $\coll$.
We say that an algorithm $\ma$ achieves {\em conjunction-based}
generation in the limit with partial enumeration
if in each time step $t$, the algorithm chooses a finite set of
indices $i_t(1), i_t(2), \ldots, i_t(r_t)$, and for some time $t^*$,
the set 
$$M_t = L_{i_t(1)} \cap L_{i_t(2)} \cap \cdots \cap L_{i_t(r)}$$
has the property that $M_t \subseteq K$ for all $t \geq t^*$.
(We will also refer to conjunction-based generation as 
{\em semi-index-based generation}.)
If an algorithm achieves conjunction-based generation in the limit, it
can also achieve element-based generation in the limit, simply by
outputting a string in $M_t$ at time step $t$.

We prove Theorem \ref{thm:introKWnew} by establishing the
following stronger result:

\begin{thm}\label{thm:introConjunction}
There is an algorithm $\ma$ with the property
that it achieves conjunction-based
generation in the limit with partial enumeration
for every countable collection of languages.
\end{thm}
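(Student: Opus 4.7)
The plan is to exhibit a simple greedy \emph{conjunction-building} algorithm $\ma$ and analyse it by a short inductive argument. At each time step $t$, $\ma$ sweeps through the indices $j = 1, 2, \ldots, t$ in increasing order, maintaining a running intersection $P_j$ that starts at $P_0 = \groundset$: it adds $j$ to $I_t$ and updates $P_j = P_{j-1} \cap L_j$ if and only if (a) $L_j \supseteq S_t$ (consistency with the observed enumeration), and (b) $P_{j-1} \cap L_j$ is infinite; otherwise it skips $j$ and sets $P_j = P_{j-1}$. The algorithm outputs the finite index set $I_t$, whose conjunction is $M_t = \bigcap_{j \in I_t} L_j = P_t$, and for element-based generation any string in $M_t \setminus S_t$.

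To analyse $\ma$, fix $K = L_{j^*}$ and split the indices $j < j^*$ into \emph{Case A}, where $L_j \supseteq C$ (``permanently consistent''), and \emph{Case B}, where $L_j \not\supseteq C$ (``temporarily consistent''). Every Case B index $j$ eventually fails condition (a): there is a witness $c \in C \setminus L_j$ that the adversary must enumerate at some finite time $t_j$, and once $c \in S_t$ index $j$ is permanently excluded from all subsequent $I_t$. Because there are only finitely many indices below $j^*$, the quantity $T := \max\{t_j : j < j^*,\ j \in \text{Case B}\}$ is finite, and for every $t > T$ the portion of the sweep below $j^*$ sees only Case A indices.

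The key invariant is that for every $t > T$ and every $j \le j^*$, we have $P_{j-1} \supseteq C$. The base case $P_0 = \groundset$ is immediate. For the inductive step with $j < j^*$ and $j \in H_< := \{j < j^* : L_j \supseteq C\}$, we get $P_{j-1} \cap L_j \supseteq C \cap L_j = C$, which is infinite, so test (b) passes, $j$ is added, and $P_j \supseteq C$; Case B indices are simply skipped by (a). Specialising at $j = j^*$: $L_{j^*} = K \supseteq S_t$, and $P_{j^*-1} \cap K \supseteq C$ is infinite, so $j^*$ enters $I_t$ and $P_{j^*} \subseteq K$. Indices $j > j^*$ can only further restrict the running intersection (the infinite check guarantees $P_j$ stays infinite), so $M_t = P_t$ is an infinite subset of $K$ for every $t > \max(T, j^*)$. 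Hence $M_t \setminus S_t$ is nonempty, delivering both conjunction-based and (a fortiori) element-based generation in the limit.

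The main obstacle is precisely the interference from the transiently consistent Case B languages below $j^*$. Before time $T$ such an $L_j$ could slip into $I_t$ and shrink $P_{j^*-1}$ so much that $P_{j^*-1} \cap K$ becomes finite, which would cause $j^*$ to be skipped and leave $M_t \not\subseteq K$ at that moment. The waiting argument above resolves this: each bad $L_j$ with $j < j^*$ has a finite refutation time $t_j$, only finitely many such indices exist, and past $T$ the clean invariant takes over. The algorithm is not required to be computable, so the abstract predicate ``is this intersection infinite?'' used in (b) is perfectly legitimate in this setting.
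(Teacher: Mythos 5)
Your proof is correct and follows essentially the same approach as the paper's proof of Theorem~\ref{thm:introKWnew}: wait finitely long until every consistent language with index below that of $K$ contains $C$, then observe that the running intersection up to and including $K$ contains $C$ and hence is infinite and a subset of $K$. Your greedy sweep that \emph{skips} consistent languages failing the infiniteness test (rather than stopping at the first such language, as the paper's prefix-based algorithm does) is a minor algorithmic variation, and your explicit invariant $P_{j-1}\supseteq C$ arguably makes the argument a bit cleaner and avoids the paper's three-case split, but the underlying idea and key lemma are identical.
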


In fact we can design an algorithm with a stronger guarantee,
that the finite intersection $M_t$ it produces contains 
the adversary's enumerated set $C$ infinitely often.
This stronger guarantee is an important building block in
the positive-density result of 
Theorem \ref{thm:intropartialdensity}.

\begin{thm}\label{thm:introaccuate}
There is an algorithm $\ma$ with the property
that it achieves conjunction-based
generation in the limit with partial enumeration
for every countable collection of languages, and
the set $M_t$ produced by the algorithm in step $t$
has the property that if $C$ is the infinite subset of $K$
enumerated by the adversary, we have $C \subseteq M_t$
for an infinite set of time steps $t$.
\end{thm}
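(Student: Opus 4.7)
The plan is to construct an algorithm whose commitment set $F_t$ is a carefully chosen initial-segment truncation of the currently consistent indices $G_t$, and to identify an infinite sequence of times at which this truncation equals exactly the indices of languages containing $C$. Concretely, enumerate $\coll$ as $L_1, L_2, \ldots$, let $k^*$ be the index of $K$, and at each step $t$ write $S_t$ for the enumerated strings and $G_t = \{j : S_t \subseteq L_j\}$ for the set of consistent indices. The sequence $(G_t)$ is non-increasing with limit $G_\infty = \{j : C \subseteq L_j\} \ni k^*$. Writing $[m] := \{1,\ldots,m\}$, the finite set $G_t \cap [m]$ can drop at most $m$ times, so its last change time $T_m$ is finite, non-decreasing in $m$, and $G_t \cap [m] = G_\infty \cap [m]$ for every $t \geq T_m$. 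Let $c_t(m)$ be the most recent step $\leq t$ at which $G \cap [m]$ changed (zero if none); since every drop on $[m]$ is also a drop on $[m+1]$, $c_t(\cdot)$ is non-decreasing in its level argument. The algorithm commits at each step $t$ to
\[
f(t) = \max\{m \geq 0 : t - c_t(m) \geq m\}, \quad F_t = G_t \cap [f(t)], \quad M_t = \bigcap_{j \in F_t} L_j
\]
(with $M_t = \groundset$ when $F_t = \emptyset$).

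\emph{Eventual inclusion $M_t \subseteq K$.} For each fixed $m$ and each $t \geq T_m + m$ we have $c_t(m) = T_m$, so $t - c_t(m) \geq m$ and thus $f(t) \geq m$; hence $f(t) \to \infty$. Since $k^* \in G_t$ at every step (as $S_t \subseteq K$), we conclude $k^* \in F_t$ for every $t \geq T_{k^*} + k^*$, and so $M_t \subseteq L_{k^*} = K$ from that point onward.

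\emph{Infinitely-often inclusion $C \subseteq M_t$, and the main obstacle.} Set $t_m := T_m + m$, which is strictly increasing in $m$. The previous bound gives $f(t_m) \geq m$. For the matching upper bound, the monotonicity of $c_t(\cdot)$ in the level argument yields $c_{t_m}(m+1) \geq c_{t_m}(m) = T_m$, so $t_m - c_{t_m}(m+1) \leq m < m+1$, which forces $f(t_m) < m+1$. Therefore $f(t_m) = m$ exactly, $F_{t_m} = G_{t_m} \cap [m] = G_\infty \cap [m]$, and every $L_j$ indexed in this set contains $C$, so $M_{t_m} \supseteq C$ at the infinite witnessing set $\{t_m\}_{m \geq 1}$. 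The principal obstacle is that observed past stability of $G \cap [m]$ cannot by itself rule out a future drop on that level, so a naive "wait and commit" strategy can overshoot into still-unsettled levels where $M_t$ need not contain $C$; the two-sided pinning $f(t_m) = m$ sidesteps this by combining the level-dependent waiting threshold with the monotonicity of $c_t(\cdot)$, confining the commitment at the good times to a level whose entire drop history is already visible.
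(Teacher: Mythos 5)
Your approach is genuinely different from the paper's (the paper tracks a single level in the descending chain of intersections and descends one step at a time, whereas you compute the truncation level $f(t)$ from the change history of $G_t\cap[m]$ and pin $f(t_m)=m$ with a nice two-sided argument), and the validity and infinitely-often-full arguments you give are correct as far as they go. But there is a genuine gap: you never ensure that $M_t$ is an \emph{infinite} subset of $K$, which the theorem requires (see the restated version: ``After some finite time, the set of strings the algorithm identifies at time $t$ is an infinite subset of $K$'') and which is exactly what makes conjunction-based generation convertible to element-based generation. The paper's Algorithm~1 bakes a cardinality check into the descent rule (it refuses to descend to $\Ic{k+1}{t}$ if that set is finite); your $f(t)$ is chosen purely from stability of $G_t\cap[m]$, so $F_t$ can include a consistent index $j$ whose language meets $K$ in only a finite set, making $M_t$ finite and even equal to $S_t$ at arbitrarily large times, so the induced generator has nothing to output.

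Concretely, take $\groundset=\mathbb{Z}$, $L_1=K=\mathbb{N}$, $L_j=\{1,\dots,2^j\}\cup(-\mathbb{N})$ for $j\geq 2$, $C=\mathbb{N}$, and the adversary enumerating $1,2,3,\ldots$ in order. Then $G_\infty=\{1\}$, index $j$ leaves $G$ at time $2^j+1$, and $T_m=2^m+1$ for $m\geq 2$. Fix $j_0\geq 3$ and take $t=2^{j_0+1}$. For $m\geq j_0$ the last change of $G\cap[m]$ at or before $t$ is at $2^{j_0}+1$, so $t-c_t(m)=2^{j_0}-1$ and hence $f(t)=2^{j_0}-1$. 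Since $G_t=\{1\}\cup\{j\geq j_0+1\}$, you get $F_t=\{1\}\cup\{j_0+1,\dots,2^{j_0}-1\}$ and
\[
M_t=L_1\cap L_{j_0+1}\cap\cdots\cap L_{2^{j_0}-1}=\{1,\dots,2^{j_0+1}\}=S_t,
\]
so $M_t\setminus S_t=\emptyset$. This occurs for every $j_0\geq 3$, i.e., at arbitrarily large $t$, so your algorithm does not achieve generation in the limit.

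The repair is natural within your framework: replace $f(t)$ by the largest $m\leq f(t)$ for which $\bigcap_{j\in G_t\cap[m]}L_j$ is infinite. At each $t_m$ the cap is inactive because $G_{t_m}\cap[m]=G_\infty\cap[m]$, whose intersection contains $C$ and is therefore infinite, so $F_{t_m}$ is unchanged and the infinitely-often-full conclusion survives; and the capped level still reaches $k^*$ for $t\geq T_{k^*}+k^*$ because $\bigcap_{j\in G_t\cap[k^*]}L_j\supseteq C$ at such times. But as written the proposal is missing this step, and the missing step is precisely where the paper's cardinality checks do the work.
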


\subsection{Identification from Partial Enumeration, and Topological Connections}

Once we have a framework for reasoning about adversaries that
perform partial enumeration, it is interesting to shift
from generation back to identification, and to ask whether
we can say anything novel about the classical Gold-Angluin problem
of language identification in the limit when we add the
possibility of partial enumeration.

In the standard formulation of identification in the limit,
where the algorithm outputs the index $i_t$ of a language in $\coll$,
with the goal that $L_{i_t} = K$ for all $t \geq t^*$,
the possibility of partial enumeration makes it hard to say
anything non-trivial.
But when we allow for the possibility of a conjunction-based
algorithm --- one that in step $t$ 
outputs a finite intersection $M_t$ of languages 
in $\coll$ --- we can begin to provide much more interesting guarantees.

In particular, we would like to ask when the following 
type of identification is possible by a conjunction-based algorithm:
\begin{quote}
When the adversary enumerates the strings of an infinite subset $C$
of $K \in \coll$, there should be some time $t^*$ such that
for all $t \geq t^*$, the algorithm's set $M_t$ contains $C$ and is
contained in $K$.  That is, $C \subseteq M_t \subseteq K$ for all
$t \geq t^*$.
\end{quote}
We will refer to this as {\it conjunction-based identification in the
limit with partial enumeration}.
We observe that it is a generalization of traditional identification
in the limit, since in the special case when the adversary
enumerates the full language $C = K$, our requirement 
that $C \subseteq M_t \subseteq K$ becomes $M_t = K$:
the algorithm must output exactly $K$ after some finite time.

In order to express our characterization of when this is possible,
we extend some of the
topological definitions used in our 
analysis of generation in the limit with positive density
\cite{kleinberg2025density}; 
we develop the definitions here for the context of identification
rather than generation.

\xhdr{Topological definitions}
The paper~\cite{kleinberg2025density} was the first to draw a
connection between language generation in the limit
and topological constructions,
interpreting the generation process as the search for a limit
object within a carefully defined topological space under the full
enumeration model.

Without loss of generality, we assume that each language in $\mathcal{X}$ is a subset of $\mathbb{N}$
(since any discrete countable set can be enumerated by $\mathbb{N}$).
Given a collection of languages $\mathcal{X}$, we define a topology $\mathcal{T}$ on it as follows.
For each language $L \in \mathcal{X}$ and each finite subset $F \subseteq \mathbb{N}$,
define a basic open set by
\begin{equation}
U_{L,F} = \{L' \in \mathcal{X} \mid F \subseteq L' \subseteq L\}. \label{eq:base}
\end{equation}
Let the collection of sets \( U_{L,F} \), ranging over all \( L \in \mathcal{X} \) and finite \( F \subset \mathbb{N} \),
serve as a basis for the open sets; this defines a topology on \( \mathcal{X} \).
This topology has a key feature: its notion of limit points corresponds closely to the underlying
obstructions that cause the algorithm to become trapped in low-density output regions.

In~\cite{kleinberg2025density}, it was shown that when $(\mathcal{X}, \mathcal{T})$ has an empty perfect kernel
and finite Cantor--Bendixson rank~$r$, there exists an algorithm in the full enumeration model that
identifies in the limit with lower output density at least $1/(3(r+1))$.
(Note our new Theorem~\ref{thm:introbestboundfull} improves this bound and achieves the optimal guarantee.)

\xhdr{Topological characterizations}
We now discuss how we define topological objects that help in
characterizing identification rather than generation.
The first striking finding is that we can take Angluin's celebrated
characterization of the language collections for which 
identification in the limit is possible, and we can give a new
formulation of it in terms of a classical topological separation property.

Let us use $\mathcal{T}_{\rm{full}}$ to denote the topological space where
the open sets are generated by the basis $U_F = \{\lan \mid F \subset
\lan\}$ for finite $F$.   
(We use ``full'' as the subscript to denote that this is the relevant topological space for language identification when the adversary enumerates the full language $K$ rather than partially enumerating a subset.)
% The lower index is in honor of Gold and Angluin. 
The language identification problem is closely related to
separation properties of the space, i.e., how can we distinguish two
points in the space by open sets (and closed sets).

\begin{thm}[Topological restatement of the Angluin Theorem]\label{thm:introtopAngluin}
Suppose the set of underlying strings is countable.
    In the Gold-Angluin full enumeration model, the identification in the limit is possible if and only if (1) the set of languages is countable, and (2) in $\mathcal{T}_{\rm{full}}$
    every point $L$ satisfies \( \overline{\{L\}} \setminus \{L\} \) is closed. In other words, this is $T_D$ space.
\end{thm}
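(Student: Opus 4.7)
The plan is to reduce the theorem to Angluin's classical tell-tale characterization via a direct topological translation. Recall that Angluin's theorem states that for a countable collection $\coll$, identification in the limit is possible if and only if for every $L \in \coll$ there exists a finite ``tell-tale'' set $T_L \subseteq L$ such that no $L' \in \coll$ satisfies $T_L \subseteq L' \subsetneq L$. I will show that the $T_D$ property of $(\coll,\mathcal{T}_{\rm{full}})$ is exactly equivalent to the tell-tale condition holding simultaneously at every $L \in \coll$. The theorem then follows, with the countability hypothesis being necessary simply because any identification algorithm outputs indices into a countable set.

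The first step is to compute $\overline{\{L\}}$ in $\mathcal{T}_{\rm{full}}$. A language $L'$ lies in $\overline{\{L\}}$ iff every basic open $U_F$ containing $L'$ also contains $L$, i.e., every finite $F \subseteq L'$ satisfies $F \subseteq L$. Ranging $F$ over all finite subsets of $L'$ gives $L' \subseteq L$; conversely, if $L' \subseteq L$ then any $F \subseteq L'$ also lies in $L$, so $L \in U_F$. Hence $\overline{\{L\}} = \{L' \in \coll : L' \subseteq L\}$ and $\overline{\{L\}} \setminus \{L\} = \{L' \in \coll : L' \subsetneq L\} =: A_L$.

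Next I would characterize when $A_L$ is closed. By the same description of the topology, $M \in \overline{A_L}$ iff every finite $F \subseteq M$ is contained in some $L' \in \coll$ with $L' \subsetneq L$. If $M \not\subseteq L$, taking $F = \{x\}$ for any $x \in M \setminus L$ gives a contradiction, so $M \subseteq L$. If $M \subsetneq L$ with $M \in \coll$ then $M$ already lies in $A_L$. The only potential new point in $\overline{A_L} \setminus A_L$ is $M = L$ itself, and $L \in \overline{A_L}$ iff for every finite $F \subseteq L$ there exists $L' \in \coll$ with $F \subseteq L' \subsetneq L$, which is precisely the negation of the tell-tale condition at $L$. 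Thus $A_L$ is closed iff $L$ admits a tell-tale set, and ``$A_L$ closed for every $L$'' is equivalent to Angluin's condition across the entire collection.

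Combining this equivalence with Angluin's theorem gives both directions: identification in the limit is possible iff $\coll$ is countable and the tell-tale condition holds, iff $\coll$ is countable and $(\coll,\mathcal{T}_{\rm{full}})$ is $T_D$. I expect no serious obstacle here; the argument is essentially an unpacking of the two definitions. The only subtlety worth flagging is that, unlike the generation topology whose basis $U_{L,F}$ carries a built-in upper-bound constraint $L' \subseteq L$, the basic opens $U_F$ of $\mathcal{T}_{\rm{full}}$ carry no such constraint, so the inclusion $L' \subseteq L$ defining $\overline{\{L\}}$ arises only from the limiting computation above; once this is observed, the match with Angluin's tell-tale condition is immediate.
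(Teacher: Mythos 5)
Your argument is correct: the computations $\overline{\{L\}} = \{L' \in \coll : L' \subseteq L\}$ and $\overline{A_L}\setminus A_L \subseteq \{L\}$ are right, and the resulting equivalence ``$A_L$ closed for every $L$ iff every $L$ admits a tell-tale set'' is exactly the bridge the paper also builds (the paper phrases it via the specialization preorder and the existence of a basic open $U_F$ containing $L$ and disjoint from $\overline{\{L\}}\setminus\{L\}$, but it is the same equivalence). Where you diverge from the paper is in how the theorem itself is then obtained. You invoke Angluin's classical theorem as a black box and finish in one line, whereas the paper's proof of the theorem is self-contained: it proves the necessity of countability by a counting argument (an identifying algorithm stabilizes on each $L$ after some finite input sequence, and there are only countably many finite sequences), constructs an explicit algorithm when the space is $T_D$ (guess the smallest-index language $L$ whose certificate open set $U_F$ has been revealed and whose enumerated strings lie in $L$), and, when $T_D$ fails, runs an adversary that exploits the limit point $L$ of $\{L' : L' < L\}$ to force infinitely many mind changes. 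Your route is shorter and perfectly legitimate for the statement as written, but it is not independent of Angluin; the paper's direct route is longer precisely because it yields, as an advertised byproduct, a new certificate-free proof of Angluin's theorem. One small point to tighten: your one-sentence justification of the countability clause (``any identification algorithm outputs indices into a countable set'') presupposes an indexing; the cleaner version is the paper's counting argument just described, i.e., surjectivity of the map from finite input prefixes to identified languages. With that patch your proposal is complete.
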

Note that a
$T_D$ space is a standard notion in the hierarchy of topological spaces, capturing one of the intermediate levels of point separation by open sets.

When we move next to characterize the language collections
that admit conjunction-based identification in the limit 
with partial enumeration, the situation becomes significantly
more intricate. Nevertheless, we are able to provide a surprisingly
clean characterization of this property using a topological definition.
The characterization is again based on the $T_D$ separation property,
but now it is applied to a topological space
$\tau_C$ that is defined in detail in Subsection~\ref{subsec:toppartial};
essentially, it is like the space $\mathcal{T}_{\rm{full}}$ except that
the sets in its basis also include a constraint on the cardinality
of intersections with $C$.

\begin{thm}\label{thm:introidentificationpartial}
    Identification in the limit is possible in the partial enumeration model if and only if for any language $L$ and any infinite $C \subset L$, the Kolmogorov quotient of the space $\topo{C}$ is a $T_D$ space.
\end{thm}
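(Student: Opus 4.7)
The plan is to parallel the topological restatement of Angluin's theorem (Theorem~\ref{thm:introtopAngluin}), adapting it to the partial enumeration setting. In the Gold--Angluin case, Angluin's tell-tale sets correspond exactly to basic open neighborhoods in $\mathcal{T}_{\rm{full}}$ that witness the $T_D$ property. In the partial enumeration case, the observations available to the algorithm after finite time include not only which finite set of elements has been seen, but also the lower bound this places on $|M \cap C|$ for any candidate conjunction $M$ consistent with the data; these are exactly the additional constraints baked into the basis of $\topo{C}$. The role of the Kolmogorov quotient is to collapse together conjunctions that represent outputs the algorithm cannot --- and need not --- distinguish, since any $M$ with $C \subseteq M \subseteq K$ is equally acceptable. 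This is essential because, unlike $\mathcal{T}_{\rm{full}}$, the space $\topo{C}$ need not be $T_0$: two different conjunctions can agree on every finite observation compatible with $C$.

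For the backward direction, I would assume that for every infinite $C \subseteq L$ the Kolmogorov quotient of $\topo{C}$ is $T_D$, and build a Gold--Angluin style algorithm. Enumerate the equivalence classes $[M_1], [M_2], \ldots$ in a fixed order; at time $t$, output a representative of the first class $[M_i]$ whose tell-tale basic open neighborhood --- guaranteed to exist by the $T_D$ property --- is consistent with the data seen so far. The tell-tale for $[K]$ consists of a finite subset $F \subseteq C$ together with a lower bound on the intersection count with $C$, both of which are detectable after finite time. A standard fairness-plus-persistence argument then forces the algorithm to stabilize on some class $[M]$ with $C \subseteq M \subseteq K$: any class properly below $[K]$ is eventually ruled out by its tell-tale, and any class incompatible with some seen element of $C$ is ruled out immediately.

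For the forward direction, suppose conjunction-based identification in the limit is possible, yet for some pair $(K, C)$ the Kolmogorov quotient of $\topo{C}$ fails to be $T_D$. Then there exist a class $[K]$ and a distinct class $[M^*] \in \overline{\{[K]\}} \setminus \{[K]\}$ such that no basic open neighborhood of $[M^*]$ isolates it from $[K]$. Unpacking, this means that for every finite set $F$ and every cardinality threshold $n$, there is a candidate consistent with $F$ and with $|M \cap C| \geq n$ but topologically incomparable in the required way. I would then build a fooling adversary by a back-and-forth diagonalization on the algorithm's behavior: enumerate $C$ so that infinitely often the observations are consistent with $[M^*]$ (forcing the algorithm to eventually commit to an $M_t$ in its equivalence class or to switch its guess), yet the true language is $K$, so the containment $C \subseteq M_t \subseteq K$ fails infinitely often, contradicting identification in the limit.

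The hard part will be getting the bookkeeping right on both sides: one must verify that two points of $\topo{C}$ are topologically indistinguishable precisely when the corresponding conjunctions induce the same observable behavior under any finite prefix of a partial enumeration of $C$, so that passing to the Kolmogorov quotient really corresponds to identifying ``answers'' that the algorithm is permitted to return interchangeably. This equivalence between topological indistinguishability and observational equivalence is what justifies working with the quotient; once it is established, the $T_D$ separation becomes precisely the tell-tale condition that drives both directions, and the proof reduces to careful Angluin-style arguments in the new basis.
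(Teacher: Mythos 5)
Your forward direction is essentially the paper's route: if the $T_D$ property fails at some point of the quotient, that point is a limit of languages strictly below it in the specialization preorder, and an adversary defeats any learner by successively pretending the target is one of these witnesses (this is the content of Claims~\ref{claim:truelanguagelimit} and~\ref{claim:limitnoidentification}). Two details to fix there: the witnesses must have \emph{infinite} intersection with $C$ (so the adversary can keep drawing from them forever) and must fail to contain $C$ (so the hypothesis $C \subseteq M_t$ is violated infinitely often); and the $T_D$ failure is a statement about neighborhoods of the limit point itself, not about a neighborhood of $[M^*]$ ``isolating it from $[K]$,'' so your unpacking has the wrong quantifier structure, though this is repairable.

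The genuine gap is in the backward direction. Your learner enumerates the equivalence classes of the Kolmogorov quotient of $\topo{C}$ and waits for their tell-tale neighborhoods to become consistent with the data, but both the classes and the tell-tales are defined \emph{relative to $C$}, and the learner does not know $C$: there are uncountably many infinite subsets of $K$ the adversary might be enumerating, so there is no countable list of (class, tell-tale) pairs to cycle through, and a basic open set $U_{L,F,C}$ is specified by a finite $F \subseteq C$ together with the condition $|L' \cap C| = \infty$, which no finite prefix of the enumeration can certify (a finite lower bound on $|M \cap C|$ is not the basis condition and does not locate the neighborhood). This is precisely the obstruction the paper highlights --- in the full enumeration model each language has a unique associated $C$, namely itself, whereas here each language has uncountably many --- and it is why the paper's proof of Theorem~\ref{thm:identificationpartial} does not run an Angluin-style tell-tale search at all. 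Instead it uses a certificate-free algorithm: order the currently consistent languages by their original indices and output the intersection of the longest initial segment whose intersection is infinite. Validity ($M_t \subseteq K$) follows because every consistent language preceding $K$ eventually contains $C$, so the chosen initial segment reaches at least as far as $K$; fullness ($C \subseteq M_t$) is where the $T_D$ hypothesis enters, purely in the \emph{analysis}: the certificate $F \subseteq C$ for $K$ exists even though the algorithm never knows it, and once $F$ has been enumerated, every consistent language with infinite $C$-intersection must contain $C$, so the output intersection contains $C$. Without an idea of this kind, the learner you describe is not a well-defined algorithm in the partial enumeration model, and the backward implication as sketched does not go through.
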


This characterization is quite robust with respect to exactly how the algorithm chooses a representation for its languages $M_t$: when the required $T_D$ condition holds the algorithm can choose $M_t$ to be a finite intersection of languages in $\mathcal{X}$, as noted above; and when the $T_D$ condition does not hold, identification in the limit is not possible in the partial enumeration {\em even if} the algorithm is allowed to output arbitrary languages (disregarding the representational issues raised by an algorithm that is able to output arbitrary infinite languages).
Also, as noted above, it is straightforward to check that very little is possible in the partial enumeration model if the algorithm must output individual languages from $\mathcal{X}$, and so allowing finite intersections is arguably the natural baseline.

From the topological characterization, we are also able to establish an important additional robustness property:  identification in the limit with partial enumeration is not affected by the deletion of a finite subset of strings from the underlying ground set $\groundset$.  That is, if $W$ is any finite set of strings in the ground set $\groundset$, then a collection of languages $\mathcal{X}$ is identifiable in the limit with partial enumeration over the ground set $\groundset$ if and only if it is identifiable in the limit with partial enumeration over the ground set $\groundset - W$.
(We naturally require that any two languages $L_i, L_j \in \mathcal{X}$ that become the same after the deletion of $F$ are treated as the same language for purposes of identification.)
We can state this result as follows:
\begin{thm}
\label{thm:introRobustness}
    Let $\mathcal{X}$ be a countable collection of languages, and let $\mathcal{X}'$ be obtained from $\mathcal{X}$ by removing a finite set of strings from the underlying ground set $\groundset$ (thus removing those strings from each language and eliminating any duplicate languages that may result). Then $\mathcal{X}$ is identifiable in the limit with partial enumeration if and only if $\mathcal{X}'$ is identifiable in the limit with partial enumeration. 
\end{thm}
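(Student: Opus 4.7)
The plan is to give a direct algorithmic reduction in both directions, leveraging the robustness noted after Theorem~\ref{thm:introidentificationpartial}: identifiability in the partial enumeration model is preserved regardless of whether $M_t$ is required to be a finite intersection of languages in $\coll$ or merely to admit a finite description. This frees the reduction from shoehorning outputs into a single representational format.

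For the direction $\coll \Rightarrow \coll'$, suppose $\ma$ identifies $\coll$ in the limit with partial enumeration. Given an adversarial enumeration of an infinite subset $C' \subseteq L' \in \coll'$, I would feed the same sequence to $\ma$. Writing $L' = L - W$ for some $L \in \coll$ and noting $C' \cap W = \emptyset$, the sequence is a valid partial enumeration of $C' \subseteq L$, so $\ma$ eventually outputs $M_t$ with $C' \subseteq M_t \subseteq L$. The derived algorithm $\ma'$ then outputs $M'_t := M_t - W$, which satisfies $C' = C' - W \subseteq M_t - W = M'_t \subseteq L - W = L'$. A finite-intersection representation $M_t = \bigcap_k L_{i_k}$ in $\coll$ projects cleanly to $M'_t = \bigcap_k (L_{i_k} - W)$ in $\coll'$, so this direction preserves representation as well.

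For the direction $\coll' \Rightarrow \coll$, suppose $\ma'$ identifies $\coll'$ in the limit with partial enumeration. Given an adversarial enumeration $e_1, e_2, \ldots$ of $C \subseteq L \in \coll$, I would extract the subsequence of elements outside $W$ and feed it to $\ma'$. Since $W$ is finite and $C$ infinite, $C - W$ is an infinite subset of $L - W \in \coll'$, and $\ma'$ eventually outputs $M'_t$ with $C - W \subseteq M'_t \subseteq L - W$. The algorithm $\ma$ also tracks $W_t := W \cap \{e_1, \ldots, e_t\}$; because $W$ is finite, $W_t$ stabilizes at $W \cap C$ after finitely many steps. It then outputs $M_t := M'_t \cup W_t$, which, once both components have converged, satisfies $M_t \supseteq (C - W) \cup (W \cap C) = C$ and $M_t \subseteq (L - W) \cup (W \cap C) \subseteq L$, as required.

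The main obstacle is the representation issue in the reverse direction: $M'_t \cup W_t$ is not in general a finite intersection of languages drawn from $\coll$ alone, because insisting that a specific finite set $W_t$ lie inside $M_t$ may not be enforceable via intersection. I would resolve this by appealing to the representational robustness of Theorem~\ref{thm:introidentificationpartial}, which guarantees that identifiability is unchanged under any reasonable choice of finite description, and in particular admits the output $(M'_t, W_t)$ as a legitimate one. An alternative route that avoids the representation detour would work entirely on the topological side, showing that the $T_D$ property of the Kolmogorov quotient of $\topo{C}$ transfers between $\coll$ and $\coll'$ under the quotient map induced by deleting $W$; this would require opening up the basis of $\topo{C}$ in detail but would yield a purely topological proof that mirrors the characterization in Theorem~\ref{thm:introidentificationpartial}.
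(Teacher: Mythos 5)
Your proposal is correct, but it takes a genuinely different route from the paper's. The paper proves the theorem entirely on the topological side, invoking the characterization of Theorem~\ref{thm:introidentificationpartial}: it observes that deletion of $W$ induces a bijective correspondence between the basic open sets of $\topo{C'}$ on $\mathcal{X}$ and those of $\topo{C' \setminus W}$ on $\mathcal{X}'$ (for $C'$ disjoint from $W$), hence a homeomorphism preserving the $T_D$ property; and for the converse it notes that if $L$ is a nontrivial limit point witnessed by a sequence $\{L_i\}$, then discarding the finitely many $L_i$ that differ from $L$ only on $W$ leaves $L$ a limit point in the deleted topology. Your proof instead gives a direct algorithmic reduction: simulate $\ma$ on $C'$ and trim $W$ from the output in one direction, and filter the input through $W$, run $\ma'$, and patch the stabilized finite set $W_t = W \cap C$ back in the other. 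This is more elementary and transparent, and in the forward direction it even preserves the finite-intersection representation without any detour, since $\bigcap_k (L_{i_k} - W) = \bigl(\bigcap_k L_{i_k}\bigr) - W$. The price is the representation issue you correctly flag in the reverse direction: $M'_t \cup W_t$ need not be a finite intersection in $\coll$, so you must appeal to the paper's remark that identification in the limit with partial enumeration is representation-insensitive (arbitrary outputs confer no extra power beyond finite intersections). That remark is itself a consequence of Theorem~\ref{thm:introidentificationpartial}, so both routes ultimately rest on the same topological characterization; yours simply applies it once as a black box at the end, whereas the paper works inside the topology throughout. Your sketched "alternative route" in the final sentence is essentially the paper's proof.

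One small point worth making explicit in your forward direction: when you feed $C'$ to $\ma$, several distinct $L \in \coll$ may satisfy $L - W = L'$; since $\ma$ must be correct against every adversary consistent with the observed enumeration, its eventual $M_t$ is contained in each such $L$, so $M_t - W \subseteq L'$ regardless of which preimage you had in mind.
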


It is interesting to have this robustness property in the partial enumeration model, since it does not hold in the classical full enumeration model of Gold and Angluin; there, the property is ``brittle'' in the sense that there are instances where deleting even a single string from the ground set $\groundset$ can change a collection $\mathcal{X}$ that is identifiable in the limit to one that is not identifiable in the limit.

Using our topological definitions, we also examine additional separation properties, revealing their close connections to the spaces we define,
and showing how this topological perspective naturally guides the search for suitable algorithms.

Beyond providing a topological proof of Angluin’s theorem, the special case of Theorem~\ref{thm:introidentificationpartial} applied to the full enumeration model can also be viewed as an \emph{algorithmic} proof of Angluin’s result: one that does not require explicit knowledge of the certificate (tell-tale set) for each language.

\section{Generation in the limit with an algorithm that is accurate infinitely often}\label{sec:generating}

In order to achieve the positive density guarantees for the partial enumeration model, we begin by proving
Theorems~\ref{thm:introKWnew} and~\ref{thm:introaccuate}. We first show there is an algorithm that can achieve generation in the limit in the presence of an adversary that can perform partial enumeration.
As noted in the introduction, this cannot be achieved by an algorithm that must output an index $i_t$ in every step $t$ such that $L_{i_t} \subseteq K$ after some finite time, but we will see from the of Theorem~\ref{thm:introKWnew} that it can be achieved by an algorithm that produces a finite intersection $M_t$ of languages in $\mathcal{X}$.  

This use of finite intersections as representations, what we call {\em conjunction-based generation}, or {\em semi-index-based generation}, will be the basis of our stronger algorithms as well.
As the first step in this stronger direction, we say that a conjunction-based algorithm is 
{\em accurate} in step $t$ if the finite intersection $M_t$ that it produces in step $t$ contains the adversary set $C$: that is, $C \subseteq M_t \subseteq K$.  In the latter portion of this section, we will prove Theorem \ref{thm:introaccuate}, that there is a conjunction-based generation algorithm that is accurate infinitely often.

We first prove the following weaker statement.

\begin{thm}[Restatement of Theorem \ref{thm:introKWnew}]
 Let $\mathcal{X}$ be a countable collection of languages. Let $K$ be the true language. A learner observing any infinite subset \(C \subseteq K\) in any enumeration can successfully generate strings from the true language \(K\) in the limit.
\end{thm}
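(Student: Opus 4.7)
The plan is to construct a conjunction-based algorithm that at each time $t$ outputs a string from a finite intersection $M_t := \bigcap_{j \in J_t} L_j$ for a suitable $J_t \subseteq \mathbb{N}$, with the goal that eventually $C \subseteq M_t \subseteq K$. This sandwich simultaneously delivers validity ($M_t \subseteq K$, so the output lies in $K$) and availability ($M_t \setminus S_t \supseteq C \setminus S_t \neq \emptyset$, so an unseen string is always available to output), which is exactly what is needed for (conjunction-based) generation in the limit.

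The key structural observation driving the design is an asymmetry between two kinds of indices. Let $i^* \in \mathbb{N}$ be an index with $L_{i^*} = K$, and let $\mathcal{I}^* := \{i : C \subseteq L_i\}$, so in particular $i^* \in \mathcal{I}^*$. Every $i \in \mathcal{I}^*$ stays permanently in the consistency set $\mathcal{A}_t := \{i : S_t \subseteq L_i\}$ because $S_t \subseteq C \subseteq L_i$; every $i \notin \mathcal{I}^*$ has a witness $y_i \in C \setminus L_i$ that the adversary must eventually enumerate, after which $i$ is expelled from $\mathcal{A}_t$ forever. Hence, for each $N$, the set $\mathcal{A}_t \cap \{1,\dots,N\}$ is monotonically non-increasing in $t$ and stabilizes in finite time at $\mathcal{I}^* \cap \{1,\dots,N\}$, although the stabilization time depends on the adversary and is unknown to the algorithm. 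If we can arrange $J_t$ to coincide with $\mathcal{I}^* \cap \{1,\dots,r_t\}$ for a threshold $r_t \to \infty$, then $i^* \in J_t$ eventually (which forces $M_t \subseteq L_{i^*} = K$), and each $L_j$ with $j \in J_t \subseteq \mathcal{I}^*$ contains $C$ (which forces $M_t \supseteq C$).

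To pick $J_t$ without knowing the stabilization times, I will use a self-calibrated scheme: maintain a non-decreasing threshold $r_t$, and let $J_t$ consist of those indices in $\{1,\dots,r_t\}$ that have been \emph{certified} by remaining in $\mathcal{A}_{t'}$ throughout a growing window $t' \in [t - g(t), t]$, where $g(t) \to \infty$; equivalently, increment $r_t$ only when $\mathcal{A}_{t'} \cap \{1,\dots,r_t\}$ has been constant for long enough. This forces $r_t \to \infty$, because once $\mathcal{A}_t \cap \{1,\dots,r_t\}$ has truly stabilized the criterion must trigger; and it purges false positives, because any $i \notin \mathcal{I}^*$ has finite refutation time and therefore eventually fails any sufficiently large window criterion.

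The hard part, I expect, will be controlling the \emph{frontier} index just after a promotion: $r_t$ itself may still be a false positive whose witness has not yet been revealed, and na\"ively including $L_{r_t}$ in the intersection can collapse $M_t$ all the way down to $S_t$, leaving no unseen string to output. The fix is to decouple the growth of $r_t$ from the inclusion of indices in $J_t$, by requiring that any newly promoted index first pass its own certification before being admitted to $J_t$; alternatively, one shows that each such frontier collapse is itself quickly resolved because the frontier's refutation must then arrive within finitely many further steps, so only finitely many failures occur before the algorithm settles. Working through the bookkeeping gives $J_t \subseteq \mathcal{I}^*$ and $i^* \in J_t$ for all large $t$, hence $C \subseteq M_t \subseteq K$, as required for generation in the limit under partial enumeration.
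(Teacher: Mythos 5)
There is a genuine gap, and it sits exactly where you flag ``the hard part.'' Your route to availability is the sandwich $C \subseteq M_t \subseteq K$ for \emph{all} sufficiently large $t$, i.e.\ you need $J_t \subseteq \{i : C \subseteq L_i\}$ eventually-always. That target is strictly stronger than generation in the limit and is unachievable in general: it is precisely identification in the limit with partial enumeration (Definition~\ref{def:partial-ident}), which the paper characterizes by a $T_D$ condition (Theorem~\ref{thm:identificationpartial}) and which fails, e.g., for $L_i = \mathbb{N}\setminus\{i\}$ with $K=L_1$ and $C = 2\mathbb{N}$; even the paper's strongest positive statement only gets $C \subseteq M_t$ \emph{infinitely often} (Theorem~\ref{thm:introaccuate}), with no control on the gaps. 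Concretely, your certification scheme does not deliver $J_t \subseteq \mathcal{I}^*$ eventually: because $r_t \to \infty$, the adversary can delay the witness $y_j \in C\setminus L_j$ of each newly promoted false index $j$ until well after $j$ has passed any finite window $g(t)$, so false positives keep entering $J_t$ at arbitrarily late times. Your first fix (certify before admission) is defeated the same way, and your second fix's conclusion (``only finitely many failures occur before the algorithm settles'') is a non sequitur: each individual refutation arrives in finite time, but fresh frontier indices arrive forever, and during each collapse the algorithm must still output something every step with no guarantee it lies in $K$ --- so correctness can be violated at arbitrarily late times.

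The missing idea is to secure availability \emph{directly}, decoupled from containing $C$. The paper's algorithm orders all currently consistent languages by their original indices, forms the descending chain of prefix intersections $L_{i_1}\cap\cdots\cap L_{i_j}$, and outputs from the longest prefix whose intersection is \emph{infinite} (with a time-indexed truncation $j=t$ when every prefix is infinite but the full intersection is not). Infinitude of the chosen intersection guarantees an unseen string at every step, with no need for $C \subseteq M_t$; validity then follows from your own (correct) observation that after finite time every consistent language preceding $K$ contains $C$, so the prefix with infinite intersection extends at least to $K$'s position and hence $M_t \subseteq K$. Your consistency/stabilization analysis is sound and matches the paper's; what must change is the admission rule: cut the intersection by the ``largest infinite prefix'' criterion rather than trying to certify membership in $\mathcal{I}^*$, which no windowing scheme can do against adversarially scheduled witnesses.
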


\begin{proof}
At time~$t$, we say that a language is \emph{consistent} if it contains all strings revealed by the adversary up to that point.

After each time stamp, we eliminate all languages that are not consistent, while preserving their relative order in the original language ordering. Suppose that at time~$t$, the remaining consistent languages, in order, are 
\[
L_{i_1}, L_{i_2}, \dots \quad \text{where } 1 \le i_1 < i_2 < \dots.
\]
For each~$j$, consider the intersection 
\[
L_{i_1} \cap L_{i_2} \cap \dots \cap L_{i_j}.
\]
The idea is to find the largest~$j$ for which this intersection is infinite; denote this index by~$j^*$. Note that $j^* \ge 1$. Specifically, we have the following cases.

\begin{enumerate}
    \item {Case 1:} If there exists $j^*$ such that $L_{i_1} \cap \dots \cap L_{i_{j^*}}$ is infinite, but $L_{i_1} \cap \dots \cap L_{i_{j^* + 1}}$ is finite, then output any string in $L_{i_1} \cap \dots \cap L_{i_{j^*}}$.

    \item {Case 2:} If the intersection of all the consistent sets, $L_{i_1} \cap L_{i_2} \cap \dots$, is infinite, then output any string in this intersection.

    \item {Case 3:} If the intersection of all the consistent sets is finite (including being empty), but for every positive integer $m \geq 1$, $L_{i_1} \cap \dots \cap L_{i_m}$ is infinite, then if the current timestamp is $t$, output any string in $L_{i_1} \cap \dots \cap L_{i_t}$.
\end{enumerate}

To show that this algorithm generates in the limit, we will prove that after some finite time, all remaining consistent languages preceding~$K$ (with respect to the original language ordering) will contain~$C$. Indeed, if some language~$L$ does not contain an element~$c \in C$, then once~$c$ is revealed by the adversary,~$L$ will be eliminated. Since there are only finitely many languages preceding~$K$ in the original language ordering, it follows that after some finite time~$t > z$, where~$K$ is the $z$-th language in the original ordering, we have~$i_{j^*} \ge z$. Consequently, from that point onward, every output string will belong to~$K$. Our condition requiring that the specific intersection have infinite cardinality guarantees that there is always at least one string available for the algorithm to output at any stage of the learning process.
\end{proof}

\subsection{Algorithm which sees $C$ infinitely often}\label{subsec:accurate}

In this subsection we will prove Theorem \ref{thm:introaccuate}, restated as below.
\begin{thm}\label{thm:introaccuate}
There exists an algorithm that generates in the limit, such that 
\begin{enumerate}
    \item After some finite time, the set of strings the algorithm identifies at time $t$ is an infinite subset of $K$; and 
    \item For infinitely many time steps $t$, the set of strings the algorithm identifies at time $t$ fully contains the set $C$.
\end{enumerate}
\end{thm}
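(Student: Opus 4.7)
The plan is to augment the algorithm of Theorem~\ref{thm:introKWnew} with an adaptive exploration scheme that, at infinitely many time steps, outputs a specifically chosen intersection that contains all of $C$. Write $\Sigma = \{L \in \mathcal{X} : C \subseteq L\}$ for the permanently-consistent (``stable'') languages, list them in original-index order as $L_{j_1}, L_{j_2}, \ldots$, and let $u$ be the position with $L_{j_u} = K$. At time $t$, let $L_{i_1}, L_{i_2}, \ldots$ denote the currently-consistent languages in original order, set $I_r^t := L_{i_1} \cap \cdots \cap L_{i_r}$, and let $N_t$ be the output of the algorithm from Theorem~\ref{thm:introKWnew}. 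My first step would be to observe, by repeating the argument of that theorem, that for every $r \leq |\Sigma|$ there exists a finite time $t^\star_r$ beyond which the first $r$ entries of the consistent list are exactly $L_{j_1}, \ldots, L_{j_r}$; consequently, for $r \in [u, |\Sigma|]$ and $t \geq t^\star_r$ one has $C \subseteq I_r^t \subseteq K$. So the ``target'' output at late times is $I_u^t$, though the algorithm knows neither $u$ nor the times $t^\star_r$.

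Next I would define the algorithm by fixing a dovetailing schedule $(r_t)_{t \geq 1}$ over positive integers that visits every integer infinitely often with times going to infinity, and introducing an adaptive stability index $\ell(t)$ equal to the largest $m$ such that the first $m$ entries of the consistent list have been unchanged over the window $[t - \lfloor \sqrt{t}\rfloor, t]$. The intermediate lemma I would prove is that for every fixed $r \leq |\Sigma|$, $\ell(t) \geq r$ for all $t$ sufficiently large, since the first $r$ positions stabilize to $L_{j_1}, \ldots, L_{j_r}$ by time $t^\star_r$ and the window length $\sqrt{t}$ is eventually shorter than $t - t^\star_r$. At time $t$, the algorithm would output $M_t = I_{r_t}^t$ whenever $r_t$ lies in a safe-and-meaningful range defined via $\ell(t)$ together with a lower bound tied to the previous theorem's position invariant (to ensure $K$ is among the first $r_t$ consistent languages); otherwise the algorithm falls back to $M_t = N_t$.

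For verification, condition~(1) follows in both cases: when we output $N_t$, Theorem~\ref{thm:introKWnew} gives $N_t \subseteq K$ and $N_t$ infinite; when we output $I_{r_t}^t$, the lower-bound guard forces $K$ to be among the first $r_t$ consistent languages, so $I_{r_t}^t \subseteq K$, and the stability condition combined with $r_t \leq \ell(t)$ ensures this intersection is taken over stable languages whose intersection is infinite (it contains $C$). Condition~(2) follows by dovetailing: the schedule visits $r_t = u$ at infinitely many times, and at every sufficiently late such time we simultaneously have $u \leq \ell(t)$ (by the stability lemma) and $t \geq t^\star_u$, so the exploration fires and yields $M_t = I_u^t = L_{j_1} \cap \cdots \cap L_{j_u} \supseteq C$.

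The main obstacle I anticipate is coordinating the dovetailing schedule, the stability window $\ell(t)$, and the lower-bound guard so that exploration with $r_t < u$ is disallowed at late times (to preserve condition~(1)) while exploration with $r_t = u$ still succeeds infinitely often (for condition~(2)). This is particularly delicate when $|\Sigma|$ is infinite: the consistent list may contain a complicated interleaving of stable and transient languages under adversarial timing of eliminations, and one must argue that despite the adversary's ability to delay eliminations arbitrarily, the stability window nevertheless certifies sufficiently many positions of the consistent list sufficiently often for the dovetailed $r_t = u$ case to succeed infinitely often.
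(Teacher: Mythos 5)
Your proposal is structured around an unspecified ``lower-bound guard,'' and that guard is exactly where the theorem's difficulty lives; as written there is a genuine gap. For condition~(1) you need that whenever exploration fires, $K$ is among the first $r_t$ consistent languages, i.e.\ $r_t \geq \mathrm{pos}_t(K)$; for condition~(2) you need exploration to fire with $r_t = u$ (more precisely with the first $r_t$ consistent languages all containing $C$) at infinitely many late times. Both thresholds -- the position of $K$ and the length of the maximal prefix of consistent languages that all contain $C$ -- are unobservable to the algorithm. Your two candidate surrogates do not work. The window-based stability index $\ell(t)$ does not certify either property: a language that fails to contain $C$ can remain consistent for an arbitrarily long stretch (the adversary controls when the distinguishing element of $C$ appears), so the first $m$ entries being unchanged over $[t-\lfloor\sqrt{t}\rfloor, t]$ says nothing about whether they contain $C$ or whether $K$ is among them; hence your claim that ``the stability condition combined with $r_t \leq \ell(t)$ ensures this intersection is taken over stable languages'' is false. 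The other surrogate, the position invariant inherited from Theorem~\ref{thm:introKWnew} (e.g.\ the largest prefix length $j^*_t$ whose intersection is infinite, or the depth $t$ used in its Case~3), does eventually guarantee validity, but it can exceed $u$ at \emph{all} late times -- transient languages with infinite intersections keep entering the consistent list beyond position $u$ -- so a guard $r_t \geq j^*_t$ permanently blocks the $r_t = u$ explorations and condition~(2) fails. You flag this coordination problem as ``the main obstacle,'' but resolving it is the actual content of the theorem, not a detail to be filled in.

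For contrast, the paper does not use a schedule-driven (dovetailing) mechanism at all: it defines an identified intersection $\I{t}$ that moves down the descending chain of prefix intersections one level per step and, whenever the chain breaks because a consistent language not containing $C$ gets eliminated, jumps back up to the deepest level on which the old and new chains agree and which is infinite. Fullness is then \emph{event-driven}: at each such jump-up moment the identified intersection is an intersection of languages that all contain $C$ (everything to the left of the leftmost eliminated non-$C$-containing language), so it is full; and if no further eliminations occur among the prefix, all remaining consistent languages contain $C$ and every $\I{t}$ is full. Validity after finite time comes from the one-step-down discipline keeping $\I{t}$ at or below the stabilized prefix that includes $K$. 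If you want to salvage your approach, you would need to replace the dovetailing/guard mechanism with some elimination-triggered rule of this kind, since no observable time-based schedule can know when $r_t = u$ is simultaneously safe and full.
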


Note that this theorem is substantially stronger than Theorem~\ref{thm:introKWnew}, particularly in its ``breadth'' guarantee.  
In Theorem~\ref{thm:introKWnew}, if we remain indefinitely in Case~3, one can imagine the set of candidate languages considered by the algorithm becoming progressively thinner: since the learner cannot determine which of the first $t$ hypotheses is correct, its current conjecture is formed by intersecting an ever-increasing number of languages.  
As $t$ grows, this intersection becomes sparser within both the true language and the context set $C$, reflecting the learner’s cautious attempt to avoid overshooting after some finite time.  
At first glance, this outcome may appear unavoidable under such a constraint.  
However, Theorem~\ref{thm:introaccuate} shows that this pathological situation cannot persist indefinitely.  
Infinitely often, the learner will ``zoom in'' on an intersection that fully contains $C$, thereby restoring full breadth and ensuring that the learning process repeatedly recovers hypotheses that encompass the true context.  

Nevertheless, this theorem by itself does not yield a tight bound on the breadth, since it provides no guarantee on the \emph{gaps} between these ``infinitely often'' accurate stages: the accurate moments could, in principle, occur at arbitrarily sparse timestamps.  
This theorem will, however, play an important role later, when we establish a tight quantitative bound on the breadth of generation.

We begin with some observation.

\medskip
\noindent\textbf{Semi-index based generation.} 
We say that an algorithm is a \emph{semi-index based generation} if, at each time step~$t$, it outputs a finite set of indices corresponding to languages whose intersection is infinite. We say that an algorithm is a \emph{semi-index based generation in the limit} if there exists a time~$T$ such that for all~$t \ge T$, the intersection of the languages corresponding to the indices output by the algorithm at time~$t$ is contained in the true language~$K$.

\begin{lem}\label{lem:semiindex}
Any deterministic algorithm that is element-based generation in the limit can be translated into a deterministic algorithm that is a semi-index based generation, and vice versa. Furthermore, the optimal output density achievable by the best element-based generation algorithm in the worst case coincides with that achievable by the best semi-index based generation algorithm in the worst case.
\end{lem}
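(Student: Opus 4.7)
The plan is to prove the two directions of the equivalence separately, with the forward direction (semi-index to element-based) being essentially routine and the reverse direction (element-based to semi-index) requiring the main work.

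For the direction from semi-index based to element-based, I would proceed as follows. Given a semi-index based algorithm $\mathcal{B}$ that at each time $t$ commits to a finite set of indices whose intersection $M_t = L_{i_1(t)} \cap \cdots \cap L_{i_{r(t)}(t)}$ is infinite and eventually satisfies $M_t \subseteq K$, define an element-based algorithm $\mathcal{A}$ that at time $t$ outputs an unseen string from $M_t$. Since $M_t$ is infinite and $S_t$ is finite, $M_t \setminus S_t$ is nonempty. Once $t \ge T$ with $M_t \subseteq K$, each output lies in $K \setminus S_t$, giving element-based generation in the limit. For the density claim, $\mathcal{A}$ can simply mirror whichever string $\mathcal{B}$ effectively designates from $M_t$, so the density of $\mathcal{A}$'s string outputs matches that of $\mathcal{B}$.

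For the reverse direction, given an element-based algorithm $\mathcal{A}$ with outputs $a_t$ satisfying $a_t \in K \setminus S_t$ for all $t \ge T_1$, construct $\mathcal{B}$ as follows. At time $t$, $\mathcal{B}$ simulates $\mathcal{A}$ to obtain $a_t$, lists all consistent languages in index order that also contain $a_t$ (call them $L_{i_1}, L_{i_2}, \dots$), and then applies the case analysis from the proof of Theorem~\ref{thm:introKWnew} to this filtered list: identify the largest $j^*$ such that $L_{i_1} \cap \cdots \cap L_{i_{j^*}}$ is infinite (and handle the ``always infinite'' cases by intersecting up to the current timestamp). Let $M_t$ be this intersection, and have $\mathcal{B}$ output both the indices $\{i_1,\dots,i_{j^*}\}$ and the string $a_t$; crucially, $a_t \in M_t$ by construction since we restricted to languages containing $a_t$. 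The verification that $M_t \subseteq K$ eventually mirrors the argument in Theorem~\ref{thm:introKWnew}: after finite time every consistent language preceding $K$ in the original ordering must contain $C$ (otherwise it gets eliminated once the missing element of $C$ is revealed), and since there are only finitely many such languages, eventually $i_{j^*} \ge z$ where $z$ is the index of $K$, forcing $M_t \subseteq K$. Density of the string outputs coincides with $\mathcal{A}$'s by construction, proving both directions of the density equality in the lemma.

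The main obstacle is ensuring that the filtered family of consistent languages containing $a_t$ still admits the Case~1/2/3 analysis of Theorem~\ref{thm:introKWnew} — in particular, that restricting to languages containing $a_t$ does not eliminate the true language $K$ (which it cannot, because eventually $a_t \in K$) and preserves enough structure that the chosen intersection is both infinite and eventually contained in $K$. Once this is secured, the worst-case density equivalence follows immediately because both translations preserve the string outputs while adding or removing only the bookkeeping of indices.
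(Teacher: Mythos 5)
Your proposal is correct and takes essentially the same approach as the paper: converting an element-based algorithm to a semi-index one by intersecting a prefix of the consistent languages that contain the output string $a_t$ (so that eventually $K$ is among the intersected languages and the intersection lies in $K$), converting back by emitting an unseen string of the identified intersection, and noting that both translations preserve the string outputs, hence the worst-case densities. The only minor detail the paper handles that you omit is a fallback for early time steps when no consistent language contains $a_t$ (the paper then outputs the index of the smallest language containing $o_t$), but this does not affect the limiting behavior or the density claim.
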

\begin{proof}
Suppose that~$A$ is an element-based generation algorithm in the limit. We construct a corresponding semi-index based generation algorithm~$A'$. At time~$t$, let the output of~$A$ be denoted by~$o_t$. Then~$A'$ outputs at most~$t$ indices corresponding to the languages with the smallest indices satisfying the following conditions:
\begin{enumerate}
    \item each language is consistent with all adversary outputs $w_1, w_2, \dots, w_t$ up to time~$t$;
    \item each language contains the element~$o_t$ generated by~$A$ at time~$t$; and
    \item the intersection of these languages is infinite.
\end{enumerate}
If no such language exists, then~$A'$ outputs the index of the smallest language that contains~$o_t$.

Clearly,~$A'$ is a semi-index based generation algorithm. It remains to show that~$A'$ also generates in the limit. To see this, consider all languages whose indices are smaller than that of~$K$. For each such language that does not fully contain the adversary enumeration~$C$, it will become inconsistent after some finite time. Since there are only finitely many languages with indices smaller than that of~$K$, there exists a finite time~$T$ after which every consistent language with index smaller than that of~$K$ fully contains~$C$.  

Since~$A$ generates in the limit, we may take~$T$ to be the larger of the two time thresholds—one ensuring that~$A$ always outputs elements from~$K$ and the other ensuring consistency as described above. Therefore, for all~$t \ge T$, the language~$K$ satisfies both (1) it is consistent at time~$t$, and (2) it contains the element~$o_t$. Hence, as long as~$T$ exceeds the index of~$K$, the index of~$K$ will always be included among those output by~$A'$.  

Furthermore, after time~$T$, the intersection of all consistent languages with indices smaller than that of~$K$ remains infinite, since all these languages fully contain~$C$. Consequently, the indices output by~$A'$ necessarily include the index of~$K$. Because~$A'$ outputs multiple indices and we only consider the intersection of the corresponding languages, this intersection is always a subset of~$K$. Thus,~$A'$ also generates in the limit.

    The other direction is straightforward. For any algorithm~$A'$ that is a semi-index based generation in the limit, define an element-based generation algorithm~$A$ as follows: at each time~$t$, let~$A$ output the smallest unused string that lies in the intersection of the languages whose indices are output by~$A'$ at time~$t$.  

By this construction, at each time~$t$, the element generated by~$A$ is no smaller than the element that could be generated by~$A'$. Moreover, if~$A$ outputs a string from~$K$, then so does~$A'$. Hence, the set of outputs~$O(A)$ is strictly ``behind'' the set of outputs~$O(A')$ within~$K$. Consequently, for any family of languages and any adversary enumeration~$C \subseteq K$, both the upper and lower densities of~$O(A')$ in~$K$ are at least those of~$O(A)$.  

Since every semi-index based generation algorithm can be viewed as an element-based generation algorithm, it follows that the optimal achievable output density in the worst case is the same for both models.
\end{proof}

Note that the above equivalence works for {\it any} algorithm that generates in the limit, even in  the case where the adversary might enumerate the full set $K$.

% \begin{thm}
%     In the partial enumeration KM model, suppose the adversary enumerate lower density of $c$ fraction of the true language $K$, there is an algorithm that generates in the limit which could generate a lower density of at least $0.1c$ of elements in $K$. 
% \end{thm}

By Lemma~\ref{lem:semiindex}, it suffices to consider only the semi-index–based generation model.  
Note that in Lemma~\ref{lem:semiindex}, the number of indices produced by the algorithm can be as large as $t$ at each timestamp $t$, so the size of the semi-index may grow rapidly with time.  
This, however, serves primarily as a \emph{conceptual} argument demonstrating that the semi-index model is sufficient to capture the full generality of the setting.  
In practice, our actual algorithm will be far more efficient: in many cases, the number of indices it maintains will remain bounded, rather than growing unboundedly with~$t$.

Suppose that at time~$t$, the intersection of the finitely many languages whose indices are chosen by the algorithm is denoted by~$\I{t}$.

% The theorem below is a generalization to the previous theorem where when $C = K$, the algorithm is accurate infinitely often. 
It suffices to prove the following theorem to show that there is an algorithm that is accurate infinitely often. 
\begin{thm}\label{thm:accurateoften}
There exists a semi-index based generation algorithm that generates in the limit, such that for infinitely many time steps $t$, the intersection $\I{t}$ of the languages whose indices are output by the algorithm at time $t$ contains the set $C$.
\end{thm}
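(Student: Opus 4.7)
The plan is to apply Lemma~\ref{lem:semiindex} to reduce to constructing a semi-index based generation algorithm, then design an algorithm whose output at time $t$ is the intersection of an adaptively chosen prefix of $\mathcal{F}_t^{(s)} := \{L_j : j \leq s, L_j \text{ consistent at } t\}$, with $s = M(t)$ growing to infinity. The key structural observation is that only finitely many languages with index $\leq s$ can fail to contain $C$, so $\mathcal{F}_t^{(s)}$ undergoes at most $s$ changes over time and stabilizes, at some finite time $T_s$, at the ``permanent'' set $\mathcal{P}_s := \{L_j : j \leq s, L_j \supseteq C\}$. Crucially, for $s \geq z$ (with $K = L_z$) we have $K \in \mathcal{P}_s$, so $\cap \mathcal{P}_s$ both contains $C$ and is contained in $K$, and is infinite (since it contains the infinite set $C$).

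I will define $M(t)$ to be the largest $s \leq t$ such that $\mathcal{F}_r^{(s)}$ is constant over the window $r \in [\lfloor t/2 \rfloor, t]$. Output $\I{t}$ equal to the largest index-order prefix of $\mathcal{F}_t^{(M(t))}$ whose intersection is infinite (so that the output is a valid semi-index). First, I would prove $M(t)\to\infty$: for each $s$, once $t > 2T_s$, the window lies entirely past $T_s$, so $\mathcal{F}^{(s)}$ is unchanged on it and $M(t) \geq s$.

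Next, I would establish generation in the limit. For $t > 2 T_z$, we have $M(t) \geq z$, so $K$ lies in $\mathcal{F}_t^{(M(t))}$. In any time step $t$ in the ``good configuration'' where $\mathcal{F}_t^{(M(t))}=\mathcal{P}_{M(t)}$, the intersection $\cap \mathcal{P}_{M(t)}\supseteq C$ is infinite, so the entire $\mathcal{F}_t^{(M(t))}$ survives truncation and $\I{t} \subseteq K$. For the (possibly finitely many) other time steps past $2T_z$, a short argument shows that the truncated prefix must still reach position of $K$ in the list: because $K$ is always consistent and the prefix $L_{j_1}\cap\cdots\cap L_{j_p}\supseteq C$ when restricted to permanent factors, the infinite-prefix truncation includes $K$ as a factor, ensuring $\I{t}\subseteq K$.

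The main obstacle will be establishing breadth---namely, that infinitely many $t$ satisfy $\mathcal{F}_t^{(M(t))} = \mathcal{P}_{M(t)}$. I will handle this by a case analysis on how temporary languages (those not in $\mathcal{P}$) are distributed. If only finitely many temporaries exist overall, then past some finite time $\mathcal{F}^{(s)}=\mathcal{P}_s$ for all $s$, $M(t)=t$, and $\I{t}=\cap \mathcal{P}_{M(t)}\supseteq C$ for all large $t$. If infinitely many temporaries exist, then whenever a temporary $L_{s+1}$ is eliminated at some time $\tau_{s+1}>2T_s$, the window $[\lfloor t/2\rfloor,t]$ contains $\tau_{s+1}$ for $t\in[\tau_{s+1},2\tau_{s+1}]$, forcing $\mathcal{F}^{(s+1)}$ to be non-stable but $\mathcal{F}^{(s)}$ stable in this window; this produces a range of $t$ where $M(t)=s$ with $t>2T_s$, and hence $\mathcal{F}_t^{(s)}=\mathcal{P}_s$. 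Since infinitely many such elimination events occur (across levels), infinitely many good time steps follow, yielding $C\subseteq \I{t}$ for infinitely many $t$.
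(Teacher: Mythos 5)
Your algorithm (sliding-window stability, $M(t)$ chosen by stability over $[\lfloor t/2\rfloor,t]$) is a genuinely different construction from the paper's, which tracks a single identified intersection $\I{t}$ along a descending chain and proves accuracy-infinitely-often via a ``snap-back'' argument whenever the currently tracked chain element becomes inconsistent. The reduction to semi-index generation and the validity argument (prefix truncation reaches $K$'s position once $M(t)\ge z$ and $t>2T_z$) are sound. However, the breadth argument has a genuine gap.

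The problem is in your second case. You dichotomize on whether finitely or infinitely many temporaries (languages $L_j\not\supseteq C$) exist, and in the infinite case you assert ``Since infinitely many such elimination events occur (across levels)'' --- referring to events where a temporary $L_{s+1}$ is eliminated at a time $\tau_{s+1}>2T_s$. This does not follow from having infinitely many temporaries. A concrete counterexample: let $L_2,L_3,\dots$ all be temporaries with elimination times $\tau_j = j$ (e.g.\ $L_j=\mathbb{N}\setminus\{j\}$, $C=\mathbb{N}=K=L_1$, and the adversary enumerates $\mathbb{N}$ in order). Then $T_s=s$ and $\tau_{s+1}=s+1\le 2T_s$ for every $s\ge 1$, so \emph{no} elimination event of the form you require ever occurs, even though there are infinitely many temporaries and $T_s\to\infty$. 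Your two cases therefore do not exhaust the possibilities: the correct dichotomy would be $\sup_s T_s<\infty$ versus $T_s\to\infty$, but even then the unbounded case still needs a new argument, since your reasoning applies only when the $T_s$'s have jumps of ratio more than~$2$.

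To be fair, in the example above the algorithm itself does succeed (one computes $M(t)\approx \lfloor t/2\rfloor$ and $T_{M(t)}\approx \lfloor t/2\rfloor < t$, so $\mathcal{F}_t^{(M(t))}=\mathcal{P}_{M(t)}$ for all large $t$), which suggests the gap might be repairable; but as written, the proof is missing a substantive piece. You need to show directly that for infinitely many $t$, all temporaries with index $\le M(t)$ have been eliminated by time $t$ (equivalently $T_{M(t)}\le t$), and this requires an argument covering the regime where elimination times grow without bound but without the exponential spacing you implicitly assume. By contrast, the paper's chain-tracking construction sidesteps this entirely: each time the currently tracked intersection $\I{t-1}$ is falsified by a new adversary string, the algorithm jumps back to an ancestor in the chain that (for $t$ past a fixed threshold) contains $C$, and the injectivity of that ``jump-back'' map makes ``accurate infinitely often'' essentially automatic.
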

\begin{proof}
Now we describe the algorithm to determine $\I{t}$.

At time~$t$, we consider only the languages that remain consistent up to time~$t$, i.e., those for which all adversary enumerations $w_1, \dots, w_t$ are contained in the language.  
List these languages from left to right according to their original order in the list of languages, and denote them by
$
\Lc{1}{t}, \Lc{2}{t}, \dots.
$
Define
\[
\Ic{1}{t} = \Lc{1}{t}, \qquad 
\Ic{2}{t} = \Lc{1}{t} \cap \Lc{2}{t},
\]
and, in general, for $i \ge 1$,
\begin{equation}
    \Ic{i}{t} = \bigcap_{j=1}^i \Lc{j}{t}.
\end{equation}

In the case where only finitely many consistent languages remain at time~$t$—say, exactly~$k$ of them—for notational convenience we set
\[
\Lc{i}{t} = \Lc{k}{t} \quad \text{for all } i \ge k.
\]
In other words, the sequence of consistent languages stabilizes at~$\Lc{k}{t}$ and remains constant thereafter at time~$t$.

Clearly, at each time $t$, the sequence $\Ic{i}{t}$ forms a {\it descending chain of intersections}, i.e., 
\begin{equation}
  \mCh{t} =   \Ic{1}{t} \supseteq
 \Ic{2}{t} \supseteq
 \Ic{3}{t} \supseteq
 \dots. 
\end{equation}

\noindent\textbf{Algorithm 1. Finding {identified intersection} $\I{t}$.}
Now we will define the {\it identified intersection} $\I{t}$ for each $t$. 
When $t=1$, let $\I{t} = \Ic{1}{1}$. 

Suppose we already have the identified intersection $\I{t}$ for time $t$. For time $t+1$, there are several scenarios.

\begin{enumerate}
    \item If the descending chain of intersections at time $t$ is the same as time $t+1$, and suppose $\I{t} = \Ic{k}{t}$ for some finite $k$. Let $\I{t+1} = \Ic{k+1}{t} = \Ic{k+1}{t+1}$ if $\Ic{k+1}{t}$ is an infinite set, and let $\I{t+1} = \Ic{k}{t} = \Ic{k}{t+1}$ if $\Ic{k+1}{t}$ is a finite set. 
    \item If the descending chain of intersections at time $t+1$ is different from time $t+1$: let $k^*$ be the largest integer such that $\Ic{k^*}{t} = \Ic{k^*}{t+1}$ and $\Ic{k^*}{t}$ is infinite (which $k^*$ could be zero for example if $\Ic{1}{t} \neq \Ic{1}{t+1}$.)  Let $\I{t+1} = \Ic{k^*}{t}$ if such a $k^* \geq 1$; and $\I{t+1} =\Ic{1}{t+1} $ if $k^* = 0$. 
\end{enumerate} 

\textbf{Properties of $\I{t}$. }
Suppose an adversary plans to enumerate an infinite set~$C$ within the true language~$K$.  
We say that the intersection~$\I{t}$ is \emph{valid} if $\I{t} \subset K$.  
This condition means that every potential string in~$\I{t}$ belongs to~$K$, and thus outputting any string from~$\I{t}$ is valid.  
We say that~$\I{t}$ is \emph{full} if $C \subset \I{t}$, that is, if the entire adversary enumeration~$C$ is contained in~$\I{t}$.

Throughout the proof, all finite constants are understood to depend on~$C$,~$K$, and the underlying list of languages, unless specified otherwise.

\begin{lem}\label{lem:property} The following statements hold.

    \begin{enumerate}
        \item There is a finite time $T$  such that after which all the identified intersection $\I{t}$ are valid.
            \item After some finite time $T$, either $\I{t} \subset \I{t+1}$ or $\I{t+1} \subset \I{t}$. 

        \item The identified intersection $\I{t}$ will be full for infinitely many $t$. 
    \end{enumerate}
\end{lem}

\begin{proof}

We first prove Item~1. Observe that the true language~$K$ is always consistent.  
We claim that there exists a finite time~$T$ such that every consistent language whose position precedes that of~$K$ contains~$C$.  
This is the same as earlier arguments: note that for any language~$L$ that does not fully contain~$C$, since the adversary eventually enumerates every element of~$C$, there must exist a time at which a string in~$C \setminus L$ is revealed by the adversary.  
At that point,~$L$ becomes inconsistent.  
Because there are only finitely many languages whose positions in the original ordering come before~$K$, all such languages will become inconsistent after some finite time.  
Hence, the claim follows.

    Also notice that for each language $L$, if it fully contains $C$, then it will always be consistent. 

    So we can assume there is a time $T$ and an integer $k \geq 1$, such that for all $t \geq T$, 
    \begin{equation}
        C \subset \Lc{1}{t} =   \Lc{1}{t+1},\ \  C \subset \Lc{2}{t} =   \Lc{2}{t+1}, \dots,\ 
    \ C \subset \Lc{k}{t} =   \Lc{k}{t+1} =K.
    \end{equation}

    In particular, this means that for all $t \geq T$ and $i \leq k$, 
    \begin{equation}
        \Ic{i}{t} = \Ic{i}{t+1} = \Ic{i}{t+2} =\dots
    \end{equation}
    and \begin{equation}
        C \subset \Ic{k}{t} \subset K.
    \end{equation}

    By our choice of $\I{t}$, we have that for $t > T$, $\I{t} \subset \Ic{k}{t-1} \subset K$. Therefore $\I{t}$ is always valid for $t > T$.

We now prove Item~2.  
By Scenario~1 above, when the descending chain satisfies $\mCh{t} = \mCh{t+1}$, the intersection sets chosen at times~$t$ and~$t+1$ lie on the same chain and therefore will form an inclusion relation.  
Similarly, in Item~2 above, unless $k^* = 0$, both $\I{t}$ and $\I{t+1}$ lie on the chain~$\mCh{t}$ and thus must form an inclusion relation.  
It therefore suffices to show that after some finite time~$T$, whenever we are in Scenario~2, we have $k^* \ge 1$.  
From the argument above, we already know that $k \ge 1$, and since $k^* \ge k$, it follows that $k^* \ge 1$.

We are left to prove Item~3.  
We will show that there exist infinitely many times~$t$ such that $C \subset \I{t}$.  

Suppose that at the finite time~$T$ identified above, all consistent languages $\Ic{i}{T}$ for $i \ge k$ (and hence for all $i \ge 1$) contain~$C$.  
Then all these languages remain consistent for every~$t \ge T$, and the sequence of consistent languages remains unchanged for all such~$t$.  
Since~$C$ is infinite, it follows that $\I{t}$ always contains~$C$, as desired.  

Now suppose instead that at some time~$T_1 \ge T$, one of the previously consistent languages becomes inconsistent.  
Let~$k'$ be the smallest index such that $\Lc{k'}{T_1}$ does not fully contain~$C$.  
By our choice of~$T$, we have $k' > k$.

Since $\Lc{k'}{T_1}$ does not fully contain~$C$, there must exist a first time~$T_2 > T_1$ at which the adversary outputs a string in $C \setminus \Lc{k'}{T_1}$.  
When this occurs, the language $\Lc{k'}{T_1} = \Lc{k'}{T_2 - 1}$ becomes inconsistent at time~$T_2$.  

At time~$T_2$, we have $\Lc{i}{T_2 - 1} = \Lc{i}{T_2}$ for all $i \le k' - 1$ since they all contain $C$, but $\Lc{k'}{T_2} \ne \Lc{k'}{T_2 - 1}$.  
By Rule~2, it follows that
\[
\I{T_2} = \Ic{k' - 1}{T_2 - 1} = \bigcap_{i \le k' - 1} \Lc{i}{T_2 - 1}.
\]
By our choice of~$k'$ and~$T_2$, we have $\Lc{i}{t} \supset C$ for all $i \le k' - 1$ and all $t \ge T_1$.  
This implies that $\I{T_2}$ is infinite and contains~$C$.  

We can repeat this argument each time the leftmost consistent language fails to fully contain~$C$.  
If this never happens again, it simply means that all remaining consistent languages fully contain~$C$.

Hence, the identified intersection set $\I{t}$ will be full infinitely often.
\end{proof}
\end{proof}

\section{Achieving Positive Lower Density in the Partial Enumeration Model}\label{sec:tension}

We now return to the first of our main questions in this paper, to establish what can be said about {\em breadth} in language generation when the adversary may only enumerate a subset $C$ of the true language $K$.
As in the introduction, we formulate this question in terms of {\em density}: the algorithm is achieving breadth if the set of all strings it outputs has sufficiently large lower density in $K$.
With this in mind, let $O(E,\ma)$ as before denote the set of all strings output by the algorithm over all time steps, when presented with an enumeration $E$ of an infinite subset $C$ of $K$.

The central question, originally posed in  in~\cite{kleinberg2025density} for the special case of full enumeration, asks how high this density can be without compromising validity.  
This question captures the fundamental tension between \emph{breadth} and \emph{validity}: generating too broadly risks producing strings outside $K$ (analogous to hallucination), while generating too conservatively leads to sparse or mode-collapsed outputs.  
Understanding the optimal trade-off between these forces is both conceptually important and technically delicate, as it reflects the theoretical limits of language generation in the limit.  

In this section we work in the more general partial enumeration model, where the set $C$ enumerated by the adversary itself only has some lower density $d(C,K)$ in the full true language $K$.
Our goal in this section is to determine the precise boundary of the trade-off between breadth and validity by proving that the constant $d(C, K)/2$ represents a sharp, unimprovable limit on achievable breadth under worst-case adversarial enumeration. In other words, 
if the adversary outputs a set $C \subseteq K$ that has lower density $\alpha > 0$, then the algorithm's output set $O(E,\ma)$ has lower density at least $\alpha/2$.

As a special case, this theorem resolves the open question in  \cite{kleinberg2025density} (and similar questions in \cite{kleinberg2024limit}), about the tight lower bound for output lower density in the standard (full) enumeration model of language generation in the limit. 
Our proof in this section will make crucial use of Algorithm~1 from Subsection~\ref{subsec:accurate}.

\subsection{Warm-up: A Weaker Bound}
We first establish a weaker bound, accompanied by a simpler proof.  
In the next subsection, we build upon this argument to obtain the optimal bound.  
This proof also provides additional insights compared to~\cite{kleinberg2025density}, yielding a more natural and streamlined argument with an improved bound.

\begin{thm}\label{thm:ldweak}
There exists an algorithm~$\ma$ that achieves element-based generation in the limit in the partial enumeration model and satisfies the following property.  
Given any countable collection of languages~$\coll$ with an underlying ordering of the strings in each language, and for any adversarial partial enumeration~$C$ of some language $\trueL \in \coll$ such that the adversary enumerates at least a lower density~$\alpha > 0$ fraction of the strings in~$\trueL$,  
the set of output strings~$O(C, \ma)$ generated by the algorithm has lower density in~$K$ at least~$\alpha/3$.
\end{thm}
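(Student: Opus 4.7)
The plan is to invoke the semi-index algorithm of Subsection~\ref{subsec:accurate} to produce identified intersections $\I{t}$ with the three properties of Lemma~\ref{lem:property}, and to convert it into an element-based algorithm by letting $a_t$ be the smallest element of $\I{t}\setminus(S_t\cup O_{t-1})$ in the underlying ordering of $\groundset$. By Lemma~\ref{lem:property}(1), there is a finite $T$ with $\I{t}\subseteq K$ for all $t\geq T$, so $a_t\in K\setminus S_t$; this gives element-based generation in the limit. By Lemma~\ref{lem:property}(3), there are infinitely many \emph{accurate} times $t_1<t_2<\cdots$ at which $C\subseteq\I{t_j}$.

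The crucial step is a position bound at accurate times. At such a $t\geq T$ we have $S_t\subseteq C\subseteq\I{t}\subseteq K$, so $|\I{t}\cap(S_t\cup O_{t-1})|\leq|S_t|+|O_{t-1}|\leq 2t-1$, and hence $a_t$ is the element of $\I{t}$ of rank at most $2t$ in the underlying ordering. Since $\I{t}\supseteq C$ and $C$ has lower density at least $\alpha$ in $K$, the $k$-th smallest element of $\I{t}$ lies at position at most $k/\alpha+o(k)$ in the enumeration $\ell_1,\ell_2,\ldots$ of $K$. Combining, at every accurate time $t\geq T$ we have $a_t=\ell_{p_t}$ with $p_t\leq 2t/\alpha+o(t)$.

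To lift this pointwise control at the accurate times into a uniform lower density bound valid for all large $N$, I would use the chain property of Lemma~\ref{lem:property}(2): between consecutive accurate times $t_j$ and $t_{j+1}$, the sequence $\I{t_j},\I{t_j+1},\ldots,\I{t_{j+1}}$ consists of pairwise-comparable sets and can therefore be decomposed into monotone descending and ascending sub-stretches. On an ascending sub-stretch ending at an accurate time, the same rank argument applies at the endpoint and propagates backward with only a constant-factor loss; on descending sub-stretches I would charge each output to a nearby accurate-time output via a simple counting argument. An averaging over the windows between accurate times then yields $|O\cap\{\ell_1,\ldots,\ell_N\}|\geq\alpha N/3-o(N)$ for all sufficiently large $N$, giving lower density of $O$ in $K$ at least $\alpha/3$.

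The main obstacle is precisely this accounting on the descending sub-stretches, where $\I{t}$ may temporarily fail to contain $C$ and the clean density inequality no longer applies. The slack between $\alpha/3$ and the tight value $\alpha/2$ established in Theorem~\ref{thm:intropartialdensity} is exactly what allows the crude charging in which each accurate output pays for itself and up to two non-accurate outputs, producing the factor $1/(1+2)=1/3$ that the refined argument of the next subsection will improve to $1/2$ by extracting a matched contribution from the non-accurate intervals themselves.
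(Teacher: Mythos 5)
Your setup is reasonable---using Algorithm~1 to produce the identified intersections $\I{t}$ with the three properties of Lemma~\ref{lem:property}, converting to an element-based algorithm, and observing the rank bound at accurate times---but the central step, the lifting of pointwise control at the accurate times to a lower-density bound valid for all $N$, is where your proposal diverges from the paper and where the gap lies. You acknowledge ``the main obstacle is precisely this accounting on the descending sub-stretches'' but do not resolve it, and as stated I do not think your charging scheme can work. Between two consecutive accurate times $t_j$ and $t_{j+1}$, a descending sub-stretch can be arbitrarily long and can drive the algorithm into intersections of arbitrarily small density in $K$, so the outputs produced during that stretch may land at positions far beyond $2t_{j+1}/\alpha$. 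The rank bound you prove controls only the position of $a_{t_j}$ at the accurate endpoint; it says nothing about where the outputs during the intervening window fall. Consequently ``each accurate output pays for itself and up to two non-accurate outputs'' cannot be justified: the number of non-accurate outputs in a window is unbounded, and moreover those outputs may be entirely outside the prefix $[N]_K$ you are counting, which is exactly the desert phenomenon the paper explicitly warns about in its discussion following Theorem~\ref{thm:halfupperbound}.

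The paper avoids this problem by counting a different quantity. It does not attempt to bound where the outputs land globally. Instead it partitions the adversary's unrecovered strings $C\setminus O$ into ``good'' strings $\mB_g$ (those $w_t$ with $o_{t-1}\le\Succ_K(w_t)$) and ``bad'' strings $\mB_b$, and proves Lemma~\ref{claim:infrho}: there is an injective map $\rho:\mB_b\to O$ with $\rho(w_t)<w_t$. The construction of $\rho$ exploits precisely what you call the descending sub-stretch, but locally in time: if $w_t\in\mB_b$ then $\I{t-1}$ cannot contain $w_t$, so the chain jumps back up to some level $\lanj{k}$ at time $t$, and since Algorithm~1 descends only one level at a time there was a most recent prior time $t'$ with $\I{t'}=\lanj{k}$; at that time $o_{t'}$ was at most the smallest unused element of $\lanj{k}$, hence $o_{t'}<w_t$. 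Defining $\bar\rho(w_t)=o_{t-1}$ for $w_t\in\mB_g$ and $\bar\rho(w_t)=\rho(w_t)$ for $w_t\in\mB_b$ then yields a map $C\setminus O\to O$ with preimages of size at most $2$ and $\bar\rho(w_t)\le\Succ_K(w_t)$, so $|[N]_K\cap O|\ge\frac{1}{2}|[N]_K\cap(\mB_g\cup\mB_b)|-T$; combined with $|[N]_K\cap O|+|[N]_K\cap\mB_g|+|[N]_K\cap\mB_b|\ge|[N]_K\cap C|$ this gives $3|[N]_K\cap O|+O(T)\ge|[N]_K\cap C|$ and the $\alpha/3$ bound. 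The factor $1/3$ thus comes from the preimage bound of $2$ on the map from unrecovered adversary strings to algorithm outputs, not from accurate outputs amortizing non-accurate ones.

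Two further remarks. First, your element-based algorithm (output the smallest unused element of $\I{t}$) is simpler than the one the paper actually analyzes (which maintains a priority string list fed by aggressive sets and tokens), but the key invariant used by the paper's proof of Lemma~\ref{claim:infrho}---that $o_{t'}$ is at most the smallest unused string in $\I{t'}$---holds trivially for your version, so I believe your algorithm could also be analyzed by the paper's injective-map argument; but you would still need to carry out that argument rather than the averaging one. Second, your claim that the $k$-th smallest element of $\I{t}$ lies at position at most $k/\alpha+o(k)$ in $K$ needs care about uniformity in $t$: different accurate times give different intersections $\I{t}$, and the $o(k)$ error term should be controlled by the fixed set $C\subseteq\I{t}$ rather than by the varying $\I{t}$; this is fixable but worth stating explicitly.
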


Note that in~\cite{kleinberg2025density}, within the full enumeration model, a lower density bound of at least~$1/8$ was established.  
This corresponds to a special case of the present theorem with~$\alpha = 1$—in fact, it is an \textbf{even stricter} case in \cite{kleinberg2025density}, since the full enumeration model does not allow the adversary to omit even a set of density zero, whereas such omissions are permitted when~$\alpha = 1$ in our setting.  
Thus, our result not only generalizes but also strengthens the earlier bound of~$1/8$.  
Moreover, the proof presented here is considerably simpler than the more intricate argument in~\cite{kleinberg2025density}, thereby improving readability.

\vspace{0.1in}
\noindent\textbf{Aggressive set, priority string list, and token.}
Recall $\I{t}$ is the language identified by the algorithm $\Acc$ at time $t$. 

At each time $t$, the language will decide on 
\begin{enumerate}
\item an \emph{aggressive set}~$\fbga{t}$, which consists of strings that the language guess believes could be part of the true language, though including them may risk overshooting;
    \item a \emph{priority string list} $\mS_t$, which contains the strings that have the priority to be output by the algorithm.
\end{enumerate}
In the algorithm description later, we will explain how we chose $\fbga{t}$ and $\mS_t$.

By saying that the algorithm {\it aggressively guesses} $\fbga{t}$ at time $t$ with {\it token} $N_t$, we mean the following:

Suppose that at time~$t$, the string~$w$ is the largest string (with respect to the universal ordering of strings) that has been produced by either the input or the adversary up to time~$t$.  
Let $w' \in \fbga{t}$ be the smallest string in~$\fbga{t}$ that is larger than~$w$.  

We then add to the set~$\mS_{t-1}$ all unused strings in~$\fbga{t}$ that are at most~$w'$.  
In addition, we include the next~$N_t$ smallest strings in~$\fbga{t}$ that are larger than~$w'$ into~$\mS_{t-1}$.  
(Note that these $N_t$ strings have not been used by either the algorithm or the adversary.)  

Next, we order the strings in the updated set~$\mS_{t-1}$ from smallest to largest, and output the first unused string in~$\mS_{t-1}$.  
Finally, we update~$\mS_t$ by removing the most recently output string from~$\mS_{t-1}$.

\vspace{0.1in}
\noindent\textbf{Algorithm.}
\begin{enumerate}
\item Initialize the fallback list set $\mS_0 = \{\emptyset \}$. Set token $N_0 = 0$.
\item If $\I{t+1} = \I{t}$, do not fall back to any language. If $\mS_t \neq \emptyset$ and the smallest unused string in $\mS_t$ is smaller than the next available string in $\I{t}$, output the smallest unused string in $\mS_t$, and update $\mS_t$ to be $\mS_{t+1}$ by removing this string and all the strings already used by the adversary. Otherwise, output the smallest unused string in $\I{t}$ and set $\mS_{t+1}$ to be $ \mS_t$ after removal the strings already generated by the adversary. 
\item If $\I{t+1}$ is a strict subset of~$\I{t}$, we make an aggressive guess by setting $\fbga{t+1} = \I{t}$.  
Note that $\I{t+1}$ is a strict subset of~$\I{t}$ only when both sets belong to the descending chain of intersections at time~$t+1$, namely,
\[
\mCh{t+1} = \lanj{1} \supsetneq \lanj{2} \supsetneq \lanj{3} \supsetneq \dots.
\]
Suppose $\I{t}$ corresponds to the set~$\lanj{i}$; then, by Algorithm~1, we have $\I{t+1} = \lanj{i+1}$.
    % Set the token $N_{t+1} = 2(j-i)$. 

\item If $\I{t+1}$ is a strict superset of $\I{t}$. In this case we aggressively guess $\I{t+1}$, i.e., set $\fbga{t+1} = \I{t+1}$.  Set the token $N_{t+1} = 2$. 

\item \label{enum:alginf5} 
Else, we look for the minimal (with respect to set inclusion) identified intersection~$I$ such that~$I$ is a superset of (i.e., precedes)~$\I{t}$ in~$\mCh{t}$ and also a superset of~$\I{t+1}$ in~$\mCh{t+1}$, if such an~$I$ exists.  
Note that if there exist identified intersections~$I$ satisfying these conditions, then the minimal one (under set inclusion) must exist, since only finitely many sets precede~$\I{t}$ in~$\mCh{t}$ and finitely many sets precede~$\I{t+1}$ in~$\mCh{t+1}$.

    \begin{enumerate}
\item 
If no such~$I$ exists, then no fallback is performed.  
If~$\mS_t \neq \emptyset$ and the smallest unused string in~$\mS_t$ is smaller than the next available string in~$\I{t}$,  
output the smallest unused string in~$\mS_t$, and update~$\mS_t$ by removing this string together with any strings used by the adversary;  
denote the resulting set by~$\mS_{t+1}$.  
Otherwise, output the smallest unused string in~$\I{t}$ and set~$\mS_{t+1} = \mS_t$ after removing the strings that have been used by the adversary.
        \item
If such an $I$ exists, then aggressively guess $\fbga{t+1} = I$. 

Since  $I = \fbga{t+1}$ is also part of $\mCh{t+1}$. Suppose $\mCh{t+1} = \lanj{1} \supsetneq \lanj{2} \supsetneq \dots$. 
Suppose $\fbga{t+1} = \lanj{i}$ and $\I{t+1} = \lanj{j}$. 
Set the token $N_{t+1} = 2(j-i)$. 
    \end{enumerate}
\end{enumerate}

We now prove the validity of the algorithm.
\begin{lem}\label{lem:validityinfinite}
There exists a finite time~$T^*$, possibly depending on the adversary’s enumeration of strings in~$K$, such that for all~$t \ge T^*$, the algorithm always outputs a string in~$K$.
\end{lem}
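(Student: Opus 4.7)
The plan is to combine Lemma~\ref{lem:property} with a finiteness argument for the priority list $\mS_t$. By Lemma~\ref{lem:property}(1) there is a finite time $T_1$ after which every identified intersection is valid, i.e.\ $\I{t}\subseteq K$ for all $t\ge T_1$. By Lemma~\ref{lem:property}(2) there is a finite time $T_2$ after which the consecutive intersections $\I{t}$ and $\I{t+1}$ are always comparable under set inclusion. Set $T_0=\max(T_1,T_2)$. The goal is to show that every aggressive guess made after time $T_0$ is a subset of $K$, and that any remaining ``contamination'' that $\mS$ inherited from steps $t\le T_0$ gets flushed out in finite time.

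I first analyze the algorithm's behavior for $t\ge T_0$. Cases~3 and~4 set $\fbga{t+1}\in\{\I{t},\I{t+1}\}$, both of which are subsets of $K$ by the choice of $T_0$. Case~5 is triggered only when $\I{t}$ and $\I{t+1}$ are incomparable, which is ruled out by Lemma~\ref{lem:property}(2); hence Case~5 never fires for $t\ge T_0$, and in Case~2 no aggressive guess is made at all. Therefore every new string appended to $\mS$ at a step $t>T_0$ is drawn from an aggressive set $\fbga{t}\subseteq K$ and so lies in $K$. The ``direct'' branch of the output rule just emits the smallest unused string of $\I{t}\subseteq K$, which is again in $K$.

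It remains to handle the strings placed in $\mS$ during the initial window $t\le T_0$, some of which may lie outside $K$. The key observation is that each invocation of the aggressive-guess procedure adds only finitely many strings to $\mS$: it inserts the unused elements of $\fbga{t+1}$ that lie at or below the explicit threshold $w'$, together with exactly $N_{t+1}$ further elements beyond $w'$. Summing over the finitely many steps $t\le T_0$, the total legacy content $\mS_{T_0}$ is finite. Since the algorithm consumes $\mS_t$ by always removing the smallest unused entry whenever it outputs from the priority list (and also prunes any string produced by the adversary), this finite legacy set is exhausted after some finite time $T^*\ge T_0$. For all $t\ge T^*$ every string in $\mS_t$ was added after $T_0$ and therefore belongs to $K$, while the $\I{t}$-branch also outputs into $K$; hence every output of the algorithm lies in $K$ for $t\ge T^*$.

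The main technical care needed is in the last paragraph: one must confirm that the priority list can grow by only a bounded amount at each step (controlled by $w'$ and the token $N_{t+1}$), so the pre-$T_0$ ``bad'' content in $\mS$ is genuinely finite and can be drained in finite time even though the algorithm may prefer $\mS$ over $\I{t}$ for many intermediate steps. Once this bounded-accumulation property is pinned down, Lemma~\ref{lem:property} does the rest of the work.
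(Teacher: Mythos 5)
Your proposal is correct and follows essentially the same route as the paper: invoke Lemma~\ref{lem:property}(2) to rule out Case~5 after a finite time so every aggressive set is $\I{t}$ or $\I{t+1}$, invoke Lemma~\ref{lem:property}(1) to make those sets subsets of $K$, and use the fact that each aggressive guess adds only finitely many strings to handle the pre-threshold contents of $\mS$. You simply spell out the finite-legacy flushing step that the paper leaves implicit (and note that for the lemma it suffices that the finitely many legacy strings can each be output at most once, so full ``exhaustion'' of $\mS$ is not even needed).
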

\begin{proof}
Since each time we make an aggressive guess, we add only finitely many strings from the corresponding aggressive set to the priority list.  
By Lemma~\ref{lem:property}, Item~2, after some finite time~$T$, we have either $\I{t} \subset \I{t+1}$ or $\I{t+1} \subset \I{t}$.  
Thus, from that point onward, only Items~2--4 of the algorithm above apply.  
In each of these cases, the aggressive set is always either~$\I{t}$ or~$\I{t+1}$.  
The claim then follows from Lemma~\ref{lem:property}, Item~1.
\end{proof}

Again, at time~$t$, let~$w_t$ denote the string revealed by the adversary, and let~$o_t$ denote the string output by the algorithm at time~$t$.  
Define
\[
O = \bigcup_{t=1}^\infty \{o_t\}
\]
as the set of all strings output by the algorithm.  
Clearly, $O \subsetneq K$, since the algorithm never repeats a string already generated by the adversary.

Without loss of generality, we may assume that the ground set is the set of natural numbers~$\mathbb{N}$.  
Then~$K$ is a subset of~$\mathbb{N}$, consisting of the ordered integers
\[
K = \{\psi(1), \psi(2), \dots\},
\]
where the $i$-th string in~$K$ is~$\psi(i)$.  
For each string~$x \in K$, recall that $\Succ_K(x)$ denotes the successor of~$x$ in~$K$, that is,
\[
\Succ_K(x) = \psi(\psi^{-1}(x) + 1).
\]

\

We first prove a much easier version of Theorem \ref{thm:ldweak}, i.e., instead of lower density, we show that the {\it upper density} is arbitrarily close to $\alpha/2$.

\begin{thm}\label{thm:halfupperbound}
    There exists an algorithm that guarantees validity in the partial enumeration model, where the adversary enumerates a lower density at least $\alpha > 0$ in $K$. Then the output $O$ has {\it upper density} at least $\alpha/2$ in the true language $K$. Furthermore, this bound is sharp. 
\end{thm}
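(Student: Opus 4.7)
The plan is to apply the algorithm described at the start of Section~\ref{sec:tension}, which combines Algorithm~1 of Section~\ref{sec:generating} with the aggressive-guess/priority-list mechanism. By Lemma~\ref{lem:property}, the identified intersection $\I{t}$ is eventually valid (so $\I{t}\subseteq K$ for all $t\ge T^*$) and is full ($C\subseteq \I{t}$) for infinitely many $t$. Validity of the output algorithm follows because every aggressive set $\fbga{t}$ is itself a previous $\I{s}$ and thus eventually lies in $K$; hence $a_t\in K$ for all $t$ past some finite threshold.

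To obtain the $\alpha/2$ bound on upper density, I would work along the infinite subsequence of full times $t_1<t_2<\cdots$ from Lemma~\ref{lem:property} item~3, restricted to $t_i\ge T^*$. At such a $t_i$ the set $\I{t_i}\supseteq C$ inherits lower density at least $\alpha$ in $K$, so for any $\delta>0$ and $t_i$ large the $K$-position $M_i$ of the $(2t_i)$-th smallest element of $\I{t_i}$ satisfies $M_i\le 2t_i/(\alpha-\delta)+o(t_i)$. The central claim is that by time $t_i$ the $2t_i$ distinct strings of $O_{t_i}\cup S_{t_i}$ fill precisely the $2t_i$ smallest elements of $\I{t_i}$ in $K$-order. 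The inclusion $S_{t_i}\subseteq \I{t_i}$ is automatic since $S_{t_i}\subseteq C\subseteq \I{t_i}$; the inclusion $O_{t_i}\subseteq \I{t_i}$ together with the covering property follow from the greedy output rule combined with the priority-list updates to $\mS_t$, which whenever $\I{t}$ shrinks or transitions to an incomparable intersection refill from the larger previous intersection (as in Cases~3 and~5(b) of the algorithm), so that no small element of $\I{t_i}$ is permanently skipped. Once the claim is granted, $|O\cap K^{\le M_i}|\ge |O_{t_i}|\ge t_i-O(1)$, hence $|O\cap K^{\le M_i}|/M_i\ge (\alpha-\delta)/2-o(1)$; sending $\delta\to 0$ and $i\to\infty$ yields upper density at least $\alpha/2$.

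The hardest step will be the covering claim, since $\I{t}$ can grow, shrink, or jump to an incomparable set. I expect to decompose $[T^*,t_i]$ into monotone sub-segments using Lemma~\ref{lem:property} item~2 (consecutive $\I{t}$'s are comparable after finite time), and on each sub-segment argue that the priority list, fed by the aggressive sets $\fbga{s}$, absorbs the entire backlog of skipped small positions so that $O_{t_i}\cup S_{t_i}$ really does fill the initial $2t_i$-segment of $\I{t_i}$ in $K$-order. For the sharpness half, the plan is to exhibit the instance $\coll=\{L_n:n\ge 1\}$ with $L_n=\mathbb{N}\setminus\{n\}$, $K=L_1$, and $C=K$ (so $\alpha=1$): since any algorithm that must remain valid against every candidate $L_n\in\coll$ cannot safely commit to a string $n$ before the adversary has effectively ruled out $K=L_n$, the algorithm is forced to trail the adversary step-for-step, capping its upper density at $1/2=\alpha/2$.
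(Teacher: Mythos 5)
Your achievability plan takes the right entry point (the full times $t_i$ from Lemma~\ref{lem:property} and the density of $\I{t_i}$), but the central ``covering claim'' --- that $O_{t_i}\cup S_{t_i}$ fills \emph{precisely} the $2t_i$ smallest elements of $\I{t_i}$ in $K$-order --- is too strong and not something the algorithm can guarantee. Two things break it. First, $S_{t_i}$ need not be the $t_i$ smallest elements of $\I{t_i}$: the adversary can enumerate $C$ out of order, front-loading large strings and leaving arbitrarily many small positions of $\I{t_i}$ untouched at time $t_i$; the priority list can only repair one string per step, so the backlog is not cleared by time $t_i$. Second, part of $O_{t_i}$ may lie in $K\setminus\I{t_i}$, since the aggressive guesses along the way are taken from intersections that are not always subsets of $\I{t_i}$ (and Lemma~\ref{lem:property} item~2 only gives comparability of \emph{consecutive} $\I{t}$'s, not comparability with $\I{t_i}$ itself). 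The paper avoids both problems by choosing a different window: it sets $a_i$ to be the largest string seen by time $T_i$, observes that the priority list holds all of $\I{T_i}\cap[a_i]$ once a full time is reached, and then argues that from $T_i$ onward each element of $\I{T_i}\cap[a_i]$ that remains unused is claimed alternately, so the algorithm collects at least half of $\I{T_i}\cap[a_i]$ inside $K\cap[a_i]$, up to $o(a_i)$ error. Because $\I{T_i}\supseteq C$ has lower density $\ge\alpha$ in $K$, this gives $\alpha/2-o(1)$ on the subsequence, hence upper density $\ge\alpha/2$. You should replace the covering claim with this ``at least half of what remains'' bookkeeping; as stated I do not see how to repair the exact-covering version.

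On sharpness, your construction $\coll=\{\mathbb{N}\setminus\{n\}\}$ with $K=L_1$ does not obviously work: you have not specified an adversary strategy, and without one the algorithm can simply output a far-off block $N, N+1,\dots$ and reach density $1$ in $K$ (the fact that it cannot ``safely commit'' to a string $n$ only costs it finitely many mistakes under generation in the limit, not density). The paper's construction for $\alpha=1$ is different and concrete: a single language $K=\mathbb{N}$, with the adversary taking the smallest unused string at non-powers of two and at each time $2^n$ deliberately re-enumerating the smallest string the algorithm has already output, which traps the algorithm at density $\le 1/2$. You also omit the general-$\alpha$ lower bound, which the paper obtains by adding a superset language $K'\supset K$ with $d_{\mathrm{low}}(K,K')\ge\alpha$ and new strings in $K'\setminus K$ unused elsewhere, forcing a validity-preserving algorithm to confine itself to $K$ and thus inherit the $1/2$ cap within $K$ (giving $\alpha/2$ in $K'$).
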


\begin{proof}
We first see that $\alpha/2$ is sharp. 

We first prove that when~$\alpha = 1$, the upper density~$1/2$ is indeed the best upper bound benchmark one can hope for.  
This argument is essentially the same as that in~\cite{kleinberg2025density}.  

Consider, for instance, the case where there is only one language—namely, the true language~$K = \mathbb{N}$.  
At any time~$t$, the algorithm cannot output a string that has already been produced by the adversary.  
Suppose the adversary proceeds as follows:  
at each time~$t$ that is \emph{not} a power of~$2$, it outputs the smallest string that has not yet been used by either the adversary or the algorithm;  
and at each time~$2^n$, it outputs the smallest integer that has previously been output by the algorithm.  
It is easy to verify that the adversary eventually enumerates all strings.  

We now show that the algorithm can produce outputs with upper density at most~$1/2$.  
To see this, consider the first~$n$ integers.  
Suppose~$o_1$ is output by the algorithm at time~$t$.  
Then, according to the adversary’s strategy, at time~$t+1$, unless~$t+1$ is a power of two, the adversary will output the smallest unused string—which must be less than or equal to~$o_1 + 1$.  
Therefore, the number of outputs from the algorithm among the first~$n$ steps is bounded by~$n/2 + O(\log n)$.  
Consequently, the algorithm’s upper density is at most~$1/2$.

We now prove that~$\alpha/2$ is the best benchmark attainable for any~$\alpha \in [0,1]$.  
Indeed, consider the lower bound construction consisting of a collection of languages~$\mX$ and a true language~$K$, for which the full enumeration model cannot achieve an upper density exceeding~$1/2$.  
Now introduce an additional language~$K'$ satisfying~$K \subset K'$, where~$K$ has lower density at least~$\alpha$ within~$K'$, and assume that the new elements of~$K'$ (those in~$K' \setminus K$) are strings never used in any languages in $\mX$.  

Treat~$K'$ as the new true language, and let the adversary enumerate only elements of~$K$, which satisfies the partial enumeration model.  
If the algorithm guarantees validity, it should not be able to distinguish whether the true language is~$K$ or~$K'$.  
After some finite time~$T$, the algorithm must therefore output only strings from~$K$.  
By the same reasoning as in the previous argument, the upper density of~$O$ within~$K$ cannot exceed~$1/2$.  
Hence, the upper density of~$O$ within~$K'$ cannot exceed~$\alpha/2$.

\

We now show that the upper density can indeed be achieved as~$\alpha/2$.  
Let~$T_i$ denote the time stamps at which the algorithm is \emph{full}.  
We may restrict attention to the time stamps following the point at which the algorithm becomes \emph{valid}.  
Let~$a_i$ be the largest string (with respect to the universal ordering) that has been produced by either the adversary or the algorithm by time~$T_i$.  

At time~$T_i$, observe first that the number of strings output by the algorithm equals the number of strings output by the adversary.  
After time~$T_i$, since all strings in~$\I{T_i}$ smaller than~$a_i$ are placed in the priority list~$\mS_{T_i}$ and since~$C \subset \I{T_i}$, these strings have priority for output.  
That is, before all strings in~$\I{T_i}$ smaller than~$a_i$ are output, the algorithm cannot output any string in~$K$ that is larger than~$a_i$.  

Hence, for the remaining strings in~$K$ smaller than~$a_i$, before the set~$\I{T_i} \cap [a_i]$ is exhausted, the algorithm either outputs the smallest unused string in~$\I{T_i} \cap [a_i]$ or a string in~$K$ smaller than the smallest unused string in~$\I{T_i} \cap [a_i]$.  
Therefore, within~$K \cap [a_i]$, the algorithm outputs at least
\[
\frac{|\I{T_i} \cap [a_i]|}{2} - o(a_i)
\]
strings belonging to~$K \cap [a_i]$.  

Since~$C$ has lower density at least~$\alpha$ in~$K$, and since~$C \subset \I{T_i}$ for varying~$i$, it follows that each~$\I{T_i}$ has lower density at least~$\alpha$ in~$K$.  
In particular, as~$T_i \to \infty$, the set~$\I{T_i} \cap [a_i]$ has density~$\alpha - o(1)$ in~$K$.  
Consequently, the quantity
$\frac{|\I{T_i} \cap [a_i]|}{2} - o(a_i)$
has density~$\alpha/2 - o(1)$ in~$[a_i]$ as~$i \to \infty$.  
Therefore, the algorithm’s output has upper density at least~$\alpha/2$ in~$K$.
\end{proof}

The reason that Theorem~\ref{thm:halfupperbound}, which establishes the sharp upper density bound of~$\alpha/2$, cannot be directly adapted to yield a corresponding lower density bound of~$\alpha/2$ is rooted in the behavior described earlier in Theorem~\ref{thm:introaccuate}.  
There, the algorithm achieves \emph{full} breadth at infinitely many timestamps $T_i$, but there is no guarantee on the \emph{gaps} or spacing between successive $T_i$’s.  
Although the algorithm can artificially delay its output at any stage, for example, by adding many more strings to the priority list~$\mS_{T_i}$, it has no knowledge of when the next $T_{i+1}$ will occur.  
Importantly, the existence of a subsequent ``full'' timestamp is \emph{guaranteed}, yet the algorithm will never be aware of when it actually happens: there is no signal, trigger, or observable event that marks its arrival.  

The above analysis does not rule out the possibility that an adversarial enumeration could, in principle, exploit this epistemic limitation by releasing a large number of strings from~$K$ between two consecutive full timestamps.  
Such an adversary can thereby create extensive regions (``deserts'') of~$K$ where the algorithm’s outputs are only sparsely present.  
As a result, while the upper density can still be tightly bounded by~$\alpha/2$, the lack of control over the temporal distribution of full stages prevents an analogous guarantee for the lower density.

We are now ready to prove Theorem \ref{thm:ldweak}, where new ideas of the analysis are needed.
\begin{proof}[Proof of Theorem \ref{thm:ldweak} given the algorithm]
By the preceding arguments, it only remains to show that the lower density of~$O$ in~$K$ is at least~$\alpha/3$.

Let $\mB_g$ be the set of ``good" strings in $C \setminus O$, defined as \[\mB_g = \{ w_t \in C \setminus O: o_{t-1} \leq \Succ_K(w_t) \}.\]
This set consists of adversary inputs for which the algorithm's previous output is not far behind. 

Let $\mB_b$ be the set of ``bad" strings, defined as \[\mB_b = C \setminus O \setminus \mB_g.\] 

The next lemma is the main lemma in the proof. 

\begin{lem}\label{claim:infrho}
There exists a finite time $T$ and a map $\rho: \{w_t: w_t \in \mB_b, t > T\} \to O$ such that $\rho(w_t) < w_t$ in the string ordering of $K$. In addition, $\rho$ is injective. 
\end{lem}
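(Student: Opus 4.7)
The plan begins with a structural observation. For every $w_t \in \mB_b$ with $t$ past the finite time $T$ guaranteed by Lemma~\ref{lem:validityinfinite} and Lemma~\ref{lem:property}(2), I will show that $w_t \in \I{t} \setminus \I{t-1}$ and that Scenario~4 of the algorithm fires at step $t$. Indeed, if $w_t$ were in $\I{t-1}$, then since $w_t \in C \setminus O$ is unused at time $t-1$ (the adversary only reveals $w_t$ at time $t$), the smallest unused element of $\I{t-1}$ would be at most $w_t$, so the algorithm would output $o_{t-1} \le w_t$, contradicting $o_{t-1} > \Succ_K(w_t) > w_t$. After the reveal we always have $w_t \in \I{t}$, so the inclusion $\I{t-1} \subseteq \I{t}$ is strict; by Lemma~\ref{lem:property}(2) this forces Scenario~4, in which $\fbga{t} = \I{t}$ with token $N_t = 2$.

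To build $\rho$, I set $\sigma(t) := \max\{s < t : o_s < w_t\}$ and $\rho(w_t) := o_{\sigma(t)}$. The maximum exists for all but finitely many bad $w_t$, since the finitely many early outputs $o_1, o_2, \ldots$ are fixed; the exceptional indices can be absorbed into $T$. The order-reducing property $\rho(w_t) < w_t$ is then automatic. The potential failure of injectivity takes the form of two bad strings $w_t < w_{t'}$ (with $t, t' > T$) sharing $\sigma(t) = \sigma(t') = r$: this would mean $o_s \ge w_{t'} > w_t$ for every $s \in (r, t')$, i.e., the algorithm's output leaps from $o_r < w_t$ to $o_{r+1} \ge w_{t'}$ at step $r+1$ and then stays above $w_{t'}$ until after time $t'$.

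The crux is ruling out such a double leap. I plan to apply the Step~1 observation to both $w_t$ and $w_{t'}$: each must lie outside every $\I{u}$ for $u$ before its own reveal, and each triggers its own Scenario~4 firing with $\fbga{t} = \I{t}$ and $\fbga{t'} = \I{t'}$. When $w_t$ is revealed at time $t \in (r, t']$, the aggressive guess places all unused elements of $\I{t}$ up to the running maximum $w \ge o_{r+1} \ge w_{t'}$ into the priority list $\mS$. If $w_{t'}$ were in $\I{t}$, it would then enter $\mS$ at time $t$ and, being unused, would have to be output at some step in $(t, t')$, contradicting $w_{t'} \notin O$; so $w_{t'} \notin \I{t}$. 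Extending this argument backwards across the interval $(r, t]$ via the behavior of the chain $\mCh{u}$ under the three transition scenarios, together with the token value $N = 2$ in Scenarios~3, 4, and~5b that guarantees at least two successor $\I{}$-elements are always queued beyond the current maximum, I plan to derive a contradiction with the very jump from $o_r$ to $o_{r+1}$: at least one of $w_t$ or $w_{t'}$ must already be available as the smallest element of $\mS_{r+1}$ or as the next available in $\I{r+1}$.

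The main obstacle is thus this case analysis at step $r+1$ and through the intermediate steps, showing that the algorithm's priority list and the next-available element of $\I{r+1}$ cannot simultaneously exceed $w_{t'}$. Once injectivity is established, combining $|\mB_b \cap [n]| \le |O \cap [n]|$ from the injection $\rho$ with a parallel and much simpler bound on $|\mB_g \cap [n]|$ via the map $w_t \mapsto o_{t-1}$ (whose image lies essentially at $w_t$ or $\Succ_K(w_t)$ by definition of $\mB_g$) will yield the density claim of Theorem~\ref{thm:ldweak}.
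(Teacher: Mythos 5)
Your opening observation is correct and matches the paper's first step: if $w_t \in \mB_b$ then $w_t \notin \I{t-1}$ (else $o_{t-1} \le w_t$), while $w_t \in \I{t}$, so after the finite time of Lemma~\ref{lem:property} the algorithm must jump up the chain at step $t$. But the heart of the lemma is the construction of an \emph{injective} map, and there you have a genuine gap. You define $\rho(w_t) = o_{\sigma(t)}$ with $\sigma(t) = \max\{s<t : o_s < w_t\}$, and injectivity of this map is exactly the ``main obstacle'' you name but never resolve (``I plan to derive a contradiction\dots''). Your sketched route does not close it: under the collision hypothesis $\sigma(t)=\sigma(t')=r$ one can show that $w_t$ and $w_{t'}$ lie in \emph{no} $\I{u}$ and in no priority list $\mS_u$ for the relevant $u$ in $(r,t)$ resp.\ $(r,t')$ (if either were available, the algorithm would output something $\le$ it, contradicting $o_u \ge w_{t'}$); this is a \emph{consequence} of the hypothesis, not a contradiction, and nothing in the algorithm forces $\mS_{r+1}$ or $\I{r+1}$ to reach below $w_{t'}$ --- the token value $2$ only appends two strings \emph{beyond} the running maximum, and an aggressive set that is sparse or already exhausted below the running maximum contributes nothing small to the priority list. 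So the double leap you must exclude is not excluded by the mechanics you invoke, and it is unclear that your particular $\rho$ is injective at all (you also only treat the case $w_t < w_{t'}$). Your intermediate claim that $w_{t'}\in\I{t}$ ``would have to be output in $(t,t')$'' is also the wrong reason (membership in $\mS$ only caps future outputs, it does not schedule them), though the conclusion $w_{t'}\notin\I{t}$ is salvageable.

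The paper avoids this difficulty by choosing the time, not the output, cleverly: since $\I{t}$ is consistent at time $t$ it has been on the descending chain all along, and because the algorithm can move down the chain only one level per step (and we are past the second ``full'' time), there is an earlier time $t'$ with $\I{t'} = \I{t}$; define $\phi(t)$ to be the \emph{largest} such $t'$. Injectivity of $\phi$ is then automatic from maximality (if $\phi(t_1)=\phi(t_2)$ with $t_1<t_2$, then $t_1$ itself would be a later time with the same identified intersection, contradicting maximality), and $\rho(w_t) := o_{\phi(t)} < w_t$ because $w_t \in \I{\phi(t)}$ was still unused at time $\phi(t)$, so the algorithm's output there is below it. To repair your proof you would need either to adopt this choice of witness time or to supply the missing case analysis showing your ``most recent smaller output'' map cannot collide, which your current argument does not do.
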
  
\begin{proof}
By Lemma~\ref{lem:property}, we may assume that we are beyond the finite time~$T_1$ after which all identified intersections~$\I{t}$ are valid.  
Moreover, we assume that we have observed the full intersection twice after time~$T_1$; let this time be~$T$.  
For the remainder of the proof, we consider only time stamps~$t > T$.

Let $w_t \in \mB_b$. This means $o_{t-1} > \Succ_K(w_t)$. Notice that if $w_t \in \I{t-1}$, then at time $t-1$, since the string $w_t$ is still available, the algorithm should have output a string that is at most $w_t$. Therefore, the reason the algorithm outputs~$o_{t-1} > \Succ_K(w_t)$ is that~$w_t \notin \I{t-1}$.  
Consequently,~$\I{t-1}$ is not consistent at time~$t$, since it does not contain~$w_t$.

By the algorithm, at time $t$, since $\I{t-1}$ is not consistent anymore, the identfied intersection at time $t$ should be a superset of $\I{t-1}$ in the descending chain $\mCh{t-1}$. In other words, suppose the descending chain of intersection at time $t-1$ is $\mCh{t-1} = \lanj{1} \supsetneq \lanj{2} \supsetneq \dots$, then there exists a $k \geq 1$ and $k < k'$ such that at time $t-1$, $\I{t-1} = \lanj{k'}$ and at time $t$, $\I{t} = \lanj{k}$. 

By the algorithm, at each step we either remain at the same intersection, move down by one level in the current descending chain of intersections, or move up within the current descending chain of intersections.
 Since $\I{t}$ is consistent at time $t$, it must have been consistent at any time up to time $t$, and thus always have been in the descending chain up to time $t$. Since we are after the second time when we return to a full intersection, and each time if we go down a chain we can go down by one level, there must be a time $t'$ where \begin{enumerate}
    \item $t' < t$, 
    \item $\I{t'} = \I{t} = \lanj{k}$, and 
\end{enumerate}
Define~$\phi(t)$ to be the largest such time stamp~$t'$.  
Clearly,~$t'$ occurs after the first return of a full intersection following~$T_1$.  
Also note that the map~$\phi(t)$, defined for all~$t$ with~$w_t \in \mB_b$ and~$t > T$, is injective, since~$\phi(t)$ identifies the most recent time stamp~$t'$ satisfying~$\I{t'} = \I{t}$.

Consider the algorithm’s output~$o_{t'}$.  
We claim that~$o_{t'} < w_t$.  
Indeed, both identified intersections at times~$t$ and~$t'$ are the same, that is,~$\I{t} = \I{t'}$, and~$w_t \in \I{t}$.  
By the design of the algorithm, at time~$t' < t$, it can output at most the smallest unused string in~$\I{t}$.  
Since~$w_t$ is first occupied at time~$t$ by the adversary, the smallest unused string in~$\I{t'} = \I{t}$ at time~$t'$ must be less than~$w_t$.  
Therefore,~$o_{t'} < w_t$.

Therefore, we have found an injective map $\rho: \{w_t: w_t \in \mB_g, t> T\} \to O$ by $\rho(w_t) = o_{\phi(t)}$. This map is thus injective with $o_{\phi(t)} < \rho(w_t)$.  
\end{proof}

We next show  Lemma \ref{claim:infrho}  implies Theorem \ref{thm:ldweak}. 

Consider the mapping $\bar\rho: C\setminus O \to O$ defined as follows.
For each $w_t \in C \setminus O$, if  $w_t \in \mB_g$, let $\bar\rho(w_t)= o_{t-1}$; if $w_t \in \mB_b$  and $t > T$ where $T$ is given by Lemma \ref{claim:infrho}, let $\bar\rho(w_t)= \rho(w_t)$. 

By the definition of~$\mB_g$ and Lemma~\ref{claim:infrho}, we have~$\bar\rho(w_t) \le \Succ_K(w_t)$ unless~$t \le T$.  
Furthermore, each~$o \in O$ has preimage size under~$\bar\rho$ of at most~$2$.  
Therefore, for any integer~$N$, let~$[N]_K$ denote the first~$N$ strings in~$K$, and let~$\mB = \mB_g \cup \mB_b$.  
Then we have
\[
|[N]_K \cap O| \ge \frac{|\mB_g \cap [N]_K| + |\mB_b \cap [N]_K|}{2} - T.
\]

In addition, since the adversary will eventually enumerate every string in $C$, we have 
\[ |[N]_K \cap O| + |[N]_K \cap \mB_g| + |[N]_K \cap \mB_b| \geq |[N]_K \cap C|. \]
This is because the three sets $O, \mB_g, \mB_b$ are mutually exclusive, and their union  cover $C$. 
Combining these two inequalities together, we have 
\[|[N]_K \cap O| + 2(|[N]_K \cap O| + T) \geq |[N]_K \cap C|. \]
Since $T$ is a fixed constant, and $C$ has lower density $\alpha$ in $K$, we have that as $N$ goes to infinity, $O\cap [N]_K$ has density at least $\alpha/3 - o(1)$ in $[N]_K$. Therefore the lower density of $O$ in $K$ is at least $\alpha/3$. 

\end{proof}

\subsubsection{Optimal bound}
\begin{thm}\label{thm:infRankHighLD2}
    There is an algorithm $\ma$ that achieves element-based generation in the limit in the partial enumeration model and has the following property. Given any countable collection of languages $\coll$ with  an underlying ordering of the strings in all the languages, and for any adversarial partial enumeration $E$ of one of the languages $\trueL \in \coll$ such that the adversary enumerates at least a lower density at least $\alpha>0$ fraction of the strings in $\trueL$, for any $\epsilon > 0$, the set of output strings $O(E,\ma)$ generated by the algorithm has a lower density in $K$ that is at least $\alpha/2$. 
\end{thm}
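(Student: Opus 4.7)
The plan is to push the argument of Theorem~\ref{thm:ldweak} from $\alpha/3$ to $\alpha/2$ by sharpening both the algorithm and its analysis within the same structural framework: the identified intersection $\I{t}$ from Algorithm~1, aggressive guesses $\fbga{t}$, and priority list $\mS_t$. Concretely I would replace the bounded tokens $N_{t+1}\in\{2,2(j-i)\}$ with tokens that grow with $t$---for instance $N_t=\lceil t/\epsilon\rceil$ or a related $\epsilon$-dependent schedule---and trigger an aggressive refresh of $\mS_t$ at \emph{every} full timestamp $T_i$ (whose existence infinitely often is guaranteed by Theorem~\ref{thm:introaccuate}). The effect is to keep the priority list densely populated with strings of the most recent full intersection $\I{T_i}\supseteq C$, extending far beyond the adversary's current frontier, so that the algorithm does not fall behind by more than a sublinear margin.

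Validity is preserved without additional work. By Lemma~\ref{lem:property}(1), after a finite time every identified intersection and every aggressive set is a subset of $K$, so enlarging $N_t$ only adds valid strings to $\mS_t$, and the proof of Lemma~\ref{lem:validityinfinite} goes through verbatim.

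For the density analysis, I would retain the decomposition $C\setminus O=\mB_g\sqcup\mB_b$ from Theorem~\ref{thm:ldweak}, but the central new claim would be $|\mB_b\cap[N]_K|=o(N)$ as $N\to\infty$. A string $w_t\in\mB_b$ requires $o_{t-1}>\Succ_K(w_t)$, which can occur only if, at time $t-1$, both the priority list $\mS_{t-1}$ and the current intersection $\I{t-1}$ fail to supply an unused string at most $\Succ_K(w_t)$. With growing tokens and aggressive refreshing at each full timestamp, the number of available priority strings within any initial segment $[N]_K$ is at least $(\alpha-o(1))N$, and these strings are consumed before the algorithm is forced to output elements far ahead. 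Granting this bound on $|\mB_b|$, the map $w_t\mapsto o_{t-1}$ restricted to $\mB_g$ is injective, so $|O\cap[N]_K|\geq|\mB_g\cap[N]_K|-O(1)$. Combined with the coverage inequality
\begin{equation*}
|[N]_K\cap O|+|[N]_K\cap\mB_g|+|[N]_K\cap\mB_b|\geq|[N]_K\cap C|\geq(\alpha-o(1))N,
\end{equation*}
this yields $2|[N]_K\cap O|\geq(\alpha-o(1))N$, so the lower density of $O$ in $K$ is at least $\alpha/2$.

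The main obstacle is to establish $|\mB_b\cap[N]_K|=o(N)$ rigorously. The gaps $T_{i+1}-T_i$ between consecutive full timestamps are a priori unbounded, and an adversary might try to stretch them to force the algorithm to exhaust its priority list or drift into ranges of $K$ not yet covered by any recent aggressive set. The proof will therefore require an amortized analysis that tracks, between successive full timestamps, both the density of unused priority strings within the currently relevant interval of $K$ and the algorithm's offset from the adversary's frontier, showing that the growing token schedule keeps the number of ``bad'' indices sublinear. This amortized accounting---rather than any single local comparison---is the technical heart of going from the factor of $\alpha/3$ to the optimal $\alpha/2$.
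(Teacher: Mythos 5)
Your outline matches the paper's strategy in spirit: the paper also proves this theorem by making the algorithm reserve batches of strings whose size grows with $t$ (it calls them \emph{pods}, with $|P_t| = s_t$ and $s_t = t$), keeping the decomposition of the unseen adversary strings into ``good'' and ``bad'' classes, using injectivity on the good class, and closing with the same coverage inequality so that the bad class contributes only a vanishing fraction. Your skeleton (validity via Lemma~\ref{lem:property}, injectivity of $w_t \mapsto o_{t-1}$ on $\mB_g$, and $2|O\cap[N]_K| \gtrsim |C\cap[N]_K|$ once $|\mB_b\cap[N]_K| = o(N)$) is therefore aimed at the right target.

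However, the proposal has a genuine gap, and you name it yourself: the bound $|\mB_b\cap[N]_K| = o(N)$ is asserted, not proved, and the heuristic you offer for it does not withstand scrutiny. A bad string arises exactly when the adversary reveals some $w_t \notin \I{t-1}$, i.e., at a chain-collapse event, and such events can occur at a constant fraction of all time steps; merely enlarging the priority list does not by itself make them rare, because the reserved strings come from the current aggressive set, which may be a thin (though valid) intersection whose elements need not have density anywhere near $\alpha$ in $K$, and the full timestamps at which you propose to ``refresh'' can be arbitrarily sparse. The paper's resolution is a precise combinatorial argument that your amortized accounting would have to reproduce: an injective map (its Lemma~\ref{claim:infrho2}) sending each bad string $w_t$ (after a finite burn-in) to an \emph{earlier pod} $P_{t'}$ with $\I{t'} = \I{t}$ all of whose $\ge s$ elements lie below $w_t$; since pods are pairwise disjoint, at most about $|P\cap[N]_K|/s \le N/s$ bad strings can sit inside $[N]_K$, which is what makes the bad class negligible as $s \to \infty$. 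Two further steps are also needed and are absent from your plan: one must show that every reserved string is eventually either output by the algorithm or enumerated by the adversary (so reservations are never wasted, the analogue of $P\setminus C \subseteq O$), and that $|P\cap O\cap[N]_K| \ge |P\cap(C\setminus O)\cap[N]_K|$, which controls the reserved strings the adversary ``steals'' before the algorithm can output them. Without these three ingredients the claim that the algorithm ``does not fall behind by more than a sublinear margin'' is unsupported, so as written the proposal stops exactly where the real proof begins.
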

By Theorem \ref{thm:halfupperbound}, this theorem implies that we essentially achieve the best possible bound on the lower density. 

The key idea behind the improved bound is the following observation.  
In the previous argument, the reason we achieved only the bound~$1/3$ instead of~$1/2$ is that for each~$w_t \in C \setminus O$, we defined a mapping~$\bar\rho$ by  
\[
\bar\rho(w_t) =
\begin{cases}
o_{t-1}, & \text{if } w_t \in \mB_g,\\[4pt]
\rho(w_t), & \text{if } w_t \in \mB_b \text{ and } t > T,
\end{cases}
\]
where~$T$ is the time given by Lemma~\ref{claim:infrho}.  
Under this mapping, the preimage of each~$o$ has size at most~$2$.  
This preimage bound is the bottleneck in the argument, and it is unavoidable for the previous algorithm:  
at each time~$t$ when a string~$w_t \in \mB_b$ is produced by the adversary, it maps to a previous algorithm output~$o_{t'}$ for some~$t' < t$.  
However, the same output~$o_{t'}$ may also correspond to a string~$w_{t'} \in \mB_g$.  
Thus, each output~$o_{t'}$ can be counted twice, effectively reducing the achievable bound on~$|O|$ and thereby yielding the factor~$1/3$.

The new ingredient in the argument is to reduce the number of algorithm outputs~$o_{t'}$ that correspond simultaneously to elements of~$\mathcal{B}_b$ and~$\mathcal{B}_g$.  
To achieve this, each time the algorithm outputs~$o_t$, we conceptually treat it as if the algorithm had output not just a single string but a set of~$s$ consecutive strings.  
We refer to this set as a \emph{pod}. So conceptually, each string in $\mB_g$ corresponds to $s$ outputs, while each string in $\mB_b$ only corresponds to one, thus reducing the collapse.  However, notice that in reality the algorithm can only output one string at a time, so we need to be careful with the analysis. 
A more precise description of this construction is given below. 

\vspace{0.1in}
\noindent\textbf{Aggressive set, priority string list, and pods.}
Recall $\I{t}$ is the intersection of several languages identified by the algorithm $\Acc$ at time $t$, given by Algorithm 1 (described in the proof of Theorem \ref{thm:accurateoften}). 

At each time $t$, the language will decide on 
\begin{enumerate}
\item an \emph{aggressive set}~$\fbga{t}$, which consists of strings that the language guess believes could be part of the true language, though including them may risk overshooting;
    \item a \emph{pod} $P_t$, which contains the strings that have the priority to be output by the algorithm.
\end{enumerate}
These two definitions are the same as the previous section. 
In the algorithm description later, we will explain how we chose $\fbga{t}$ and the pods $P_t$ and $\mP_t$.

\vspace{0.1in}
\noindent\textbf{Algorithm with parameter $(s_t)_{t \geq 1}$.}
Fix a sequence of positive integers $s_t \geq 2$ for each $t \geq 1$. 

\begin{enumerate}
\item If $\I{t+1} = \I{t}$, do not fall back to any language. Let the pod $P_{t+1}$ be the set of $s_{t+1}$ strings consisting of the next $s_{t+1}$ unused (neither by the adversary nor the algorithm nor any of the previous pods) strings in $\I{t+1}$. 

Let $\mP_{t+1} = \bigcup_{t' \leq t+1} P_{t'}$. Pick the smallest string in $\mP_{t+1}$ and output that as $o_t$. 

\item If $\I{t+1}$ is a strict subset of~$\I{t}$, we make an aggressive guess by setting~$\fbga{t+1} = \I{t}$.  
By Algorithm~1, note that $\I{t+1}$ is a strict subset of~$\I{t}$ only when both sets belong to the descending chain of intersections at time~$t+1$, namely
\[
\mCh{t+1} = \lanj{1} \supsetneq \lanj{2} \supsetneq \lanj{3} \supsetneq \dots.
\]
Suppose $\I{t}$ corresponds to the set~$\lanj{i}$; then $\I{t+1}$ corresponds to the set~$\lanj{i+1}$.
    
    Set the pod $P_{t+1}$ as follows: 
    We will create a pod of size $s_{t+1}$. Let $P_{t+1}$ be the next $s_{t+1}$ strings haven't been used by algorithm nor adversary before which also did not appear in pods before. 
   
    Let $\mP_{t+1} = \bigcup_{t' \leq t+1} P_{t'}$. Pick the smallest string in $\mP_{t+1}$ and output that as $o_t$.

\item If $\I{t+1}$ is a strict superset of $\I{t}$. In this case we aggressively guess $\I{t+1}$, i.e., set $\fbga{t+1} = \I{t+1}$.   
Let the pod $P_{t+1}$ be the set of $s_{t+1}$ strings consisting of the next $s_{t+1}$ unused (neither by the adversary nor the algorithm nor any of the previous pods) strings in $\fbga{t+1}$.

Let $\mP_{t+1} = \bigcup_{t' \leq t+1} P_{t'}$. Pick the smallest string in $\mP_{t+1}$ and output that as $o_t$. 

\item \label{enum:alginf5} Else, we look for the minimum (in terms of set inclusion) identified intersection $I$ such that $I$ is a superset (i.e., precedes) $\I{t}$ in $\mCh{t}$ and is a superset (i.e., precedes) $\I{t+1}$ in $\mCh{t+1}$, if exists. Note that if there are identified intersection $I$ such that $I$ is a superset (i.e., precedes) $\I{t}$ in $\mCh{t}$ and is a superset (i.e., precedes) $\I{t+1}$ in $\mCh{t+1}$, then the minimum one (under set inclusion) must exist, since there are only finitely number of sets precedes $\I{t}$ in $\mCh{t}$ and finitely number of sets preceds $\I{t+1}$ in $\mCh{t+1}$. 

    \begin{enumerate}
        \item If no such $I$ exists, then do not fall back. 
        
        Let the pod $P_{t+1}$ be the set of $s_{t+1}$ strings consisting of the next $s_{t+1}$ unused (neither by the adversary nor the algorithm nor any of the previous pods) strings in $\I{t+1}$.

Let $\mP_{t+1} = \bigcup_{t' \leq t+1} P_{t'}$. Pick the smallest string in $\mP_{t+1}$ and output that as $o_t$. 
        
        \item
If such an $I$ exists, then aggressively guess $\fbga{t+1} = I$. 

Let the pod $P_{t+1}$ be the set of $s_{t+1}$ strings consisting of the next $s_{t+1}$ unused (neither by the adversary nor the algorithm nor any of the previous pods) strings in $\fbga{t+1}$.

Let $\mP_{t+1} = \bigcup_{t' \leq t+1} P_{t'}$. Pick the smallest string in $\mP_{t+1}$ and output that as $o_t$. 
    \end{enumerate}
\end{enumerate}

\begin{proof}[Proof of Theorem \ref{thm:infRankHighLD2}]
We will apply the algorithm where $s_t = t$ or for any growing sequence $(s_t)_{t \geq 1}$. 

Let $P$ be all the strings in all the pods at the end of the algorithm. 
Let $\mB_g$ be the set of ``good" strings in $C \setminus (P \cup O)$, defined as \[\mB_g = \{ w_t \in C \setminus (P \cup O): {\text{all elements in }} P_{t-1} {\text{ are at most }} \Succ_K(w_t) \}.\]
This set consists of adversary inputs for which the algorithm's previous output is not far behind. 

Let $\mB_b$ be the set of ``bad" strings, defined as \[\mB_b = C \setminus (P \cup O) \setminus \mB_g.\] 

Similar to Lemma \ref{claim:infrho}, we have the following lemma. 
\begin{lem}\label{claim:infrho2}
There exists a finite time~$T$ and an injective map
\[
\rho: \{\,w_t : w_t \in \mB_b,\, t > T\,\} \longrightarrow \{\,t' \in \mathbb{N} : P_{t'} \text{ is defined}\,\}
\]
such that every element of the pod ~$P_{t'} = \rho(w_t)$ is smaller than~$w_t$ in the underlying ordering of the strings.
\end{lem}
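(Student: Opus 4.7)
The plan is to mirror the argument of Lemma~\ref{claim:infrho}, lifting the priority-list reasoning to the pod-based algorithm. First, I would invoke Lemma~\ref{lem:property} to fix a time $T_1$ after which every identified intersection $\I{t}$ is valid, and then move to a slightly later $T \ge T_1$ by which the algorithm has returned to a full intersection at least twice after $T_1$ — the existence of $T$ is guaranteed by Lemma~\ref{lem:property}(3). All subsequent analysis is restricted to $t > T$.

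The next step is to show that for every $w_t \in \mB_b$ with $t > T$, the intersection $\I{t-1}$ is inconsistent at time $t$. The key observation is that in every branch of the algorithm, the pod $P_{t-1}$ is assembled as the $s_{t-1}$ smallest unused strings (unused by the adversary, by the algorithm, and by previous pods) inside a source set $S_{t-1}$ that is always a superset of $\I{t-1}$: either $S_{t-1} = \I{t-1}$ directly (cases~1 and 4(a)), or $S_{t-1} = \fbga{t-1}$ in the aggressive cases, and a case-by-case inspection shows $\fbga{t-1} \supseteq \I{t-1}$ in cases~2, 3, and 4(b). Since $w_t \in \mB_b$, the pod $P_{t-1}$ contains some string larger than $\Succ_K(w_t)$. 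Because $w_t \notin O \cup P$ and the adversary first produces $w_t$ at time $t$, the string $w_t$ is unused at time $t-1$. If $w_t$ lay in $S_{t-1}$, then it would be among the $s_{t-1}$ smallest unused elements of $S_{t-1}$ and thus fall into $P_{t-1}$, contradicting $w_t \notin P$. Hence $w_t \notin S_{t-1} \supseteq \I{t-1}$.

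Given that $\I{t-1}$ is now inconsistent at time $t$, I would reuse the chain-walking argument from Lemma~\ref{claim:infrho}: Algorithm~1 forces $\I{t}$ to sit strictly above $\I{t-1}$ in the descending chain $\mCh{t-1}$, and the fact that we have passed through a full intersection at least twice after $T_1$ guarantees an earlier time $t' < t$ with $\I{t'} = \I{t}$. Define $\phi(t)$ to be the largest such $t'$ and set $\rho(w_t) := \phi(t)$. Injectivity is immediate: if $\phi(t_1) = \phi(t_2) = t'$ for $T < t_1 < t_2$, then $\I{t_1} = \I{t'} = \I{t_2}$, so the most recent time before $t_2$ with identified intersection $\I{t_2}$ is at least $t_1 > t'$, contradicting $\phi(t_2) = t'$.

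Finally, I would verify that every element of $P_{\phi(t)}$ is strictly smaller than $w_t$. At time $t' = \phi(t)$, the pod-source set $S_{t'}$ satisfies $S_{t'} \supseteq \I{t'} = \I{t} \ni w_t$, and $w_t$ is unused at time $t'$ (the adversary outputs it only at $t > t'$, and $w_t \notin O \cup P$). If any element of $P_{t'}$ were at least $w_t$, then $w_t$ would itself be among the $s_{t'}$ smallest unused elements of $S_{t'}$ and would land in $P_{t'}$, again contradicting $w_t \notin P$; so every string in $P_{t'}$ is strictly less than $w_t$. The main obstacle I would flag is the bookkeeping needed to confirm $S_{t-1} \supseteq \I{t-1}$ uniformly across all five branches — especially the aggressive branches where $S_{t-1} = \fbga{t-1}$ — and to ensure that widening the pod source from $\I{t-1}$ to $\fbga{t-1}$ does not interfere with either the chain-walking step or the pod-below-$w_t$ step, since a larger source set in principle admits strings that could otherwise disrupt both conclusions.
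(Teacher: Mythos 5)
Your proposal matches the paper's proof in structure and in every essential step: fixing $T$ via Lemma~\ref{lem:property} after two returns to a full intersection, showing $\I{t-1}$ is inconsistent at time $t$ because $w_t$ would otherwise have landed in $P_{t-1}$, walking up the descending chain to find $t' = \phi(t)$ with $\I{t'}=\I{t}$, setting $\rho(w_t)=\phi(t)$, and proving injectivity via the ``most recent'' property of $\phi$. If anything, you are slightly more careful than the paper in two places: you make explicit that the pod source $S_{\tau}$ is always a superset of $\I{\tau}$ across all five branches (the paper's case~2 leaves the pod source implicit, and the paper's inconsistency argument only mentions $\I{t-1}$ directly), and you give a cleaner justification that \emph{every} element of $P_{\phi(t)}$ is below $w_t$ (arguing that otherwise $w_t$, being unused and in $S_{\phi(t)} \supseteq \I{\phi(t)}$, would itself be selected into the pod, rather than only observing that the smallest pod element is below $w_t$). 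The concern you flag about the widened pod source is well taken but, as you suspect, harmless: only the one-sided containment $S_{\tau}\supseteq\I{\tau}$ is used, both at $\tau=t-1$ and at $\tau=\phi(t)$.
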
  
\begin{proof}
Fix a positive integer~$s$.  
By Lemma~\ref{lem:property}, we may assume that we are beyond the finite time~$T_1$ after which all identified intersections~$\I{t}$ are valid, and that we have already observed the full intersection twice after time~$T_1$.  
Let this latter time be~$T$.  
For the remainder of the proof, we assume~$t > \max(T, s)$.

Let~$w_t \in \mB_b$.  
This means that some element of~$P_{t-1}$ is larger than~$\Succ_K(w_t)$.  
Note that since~$w_t \notin P \cup O$, the string~$w_t$ is unused by the algorithm, the adversary, and all previous pods at time~$t-1$.  
Thus, if~$\I{t-1}$ is consistent at time~$t$, then by the algorithm’s description we would not move down the descending chain of intersections from time~$t-1$ to~$t$.  
Moreover, consistency implies~$w_t \in \I{t-1}$; hence, the algorithm should have included~$w_t$ in the pod at time~$t-1$, since a larger string was included in that pod.  
Therefore, by the algorithm’s rules, the reason~$w_t$ was not included in the pod at time~$t-1$ must be that~$\I{t-1}$ is not consistent at time~$t$.

By the algorithm, at time $t$, since $\I{t-1}$ is not consistent anymore, the identfied intersection at time $t$ should be a superset of $\I{t-1}$ in the descending chain $\mCh{t-1}$. In other words, suppose the descending chain of intersection at time $t-1$ is $\mCh{t-1} = \lanj{1} \supsetneq \lanj{2} \supsetneq \dots$, then there exists a $k \geq 1$ and $k < k'$ such that at time $t-1$, $\I{t-1} = \lanj{k'}$ and at time $t$, $\I{t} = \lanj{k}$. 

By the algorithm, in each step we either stay at the same intersection, or choose the identified intersection by moving down {\it one level} in the current descending chain of intersections, or jumping up (one or even more levels) in the current descending chain of intersection (and discard the lower part of the chains and switch to a new tail of chains). Since $\I{t}$ is consistent at time $t$, it must have been consistent at any time up to time $t$, and thus always have been in the descending chain up to time $t$. Since we are after the second time when we return to a full intersection, and each time if we go down a chain we can go down by one level, there must be a time $t'$ where \begin{enumerate}
    \item $t' < t$, 
    \item $\I{t'} = \I{t} = \lanj{k}$, and 
\end{enumerate}
Define $\phi(t)$ be the largest such $t'$, then clearly  $t'$ is after the first return of a full intersection after $T_1$. Also notice that this map $\phi(t)$ for all $t$ with $w_t \in \mB_b$ and $t > T$ is injective, since $\phi(t)$ look for the most recent time stamp $t'$ with $\I{t'} = \I{t}$.

Consider the algorithm’s pod at time~$t'$, denoted by~$P_{t'}$.  
All elements of~$P_{t'}$ must precede~$w_t$ in the underlying ordering of strings.  
This is because the identified intersections at times~$t$ and~$t'$ are the same, that is,~$\I{t} = \I{t'}$, and~$w_t \in \I{t}$.  
By the algorithm’s design, at time~$t' < t$, it will include the $s_{t'}$ smallest unused strings in~$\I{t}$ to $P_{t'}$.  
Since~$w_t$ is first occupied by the adversary at time~$t$ and was never included in any previous pod, the smallest unused string in~$\I{t'} = \I{t}$ at time~$t'$ must be less than~$w_t$.  
Hence, every element of~$P_{t'}$ precedes~$w_t$.

We now show that~$\rho$ is injective.  
Fix~$t'$.  
For all times~$t$ such that~$\rho(t)$ is mapped to the same pod~$P_{t'}$, the previous argument implies that~$\I{t-1}$ is consistent at time~$t-1$ but not at time~$t$, and moreover that~$\I{t} = \I{t'}$.  
Suppose, for contradiction, that there exist~$t_1 < t_2$ such that~$\rho(t_1) = \rho(t_2) = P_{t'}$.  
By the definition of~$\rho$, we would then have~$\rho(t_2) = P_{t_1}$ instead of~$P_{t'}$, since~$t' < t_1 < t_2$.  
As all pods are disjoint, this yields a contradiction.  
Therefore,~$\rho$ is injective.
\end{proof}

We are now ready to show  Lemma \ref{claim:infrho2}  implies Theorem \ref{thm:infRankHighLD2}.

Since a pod will be created for each adversary input, and when $t \geq s$, $s_t \geq s$. Thus by the definition of $\mB_g$, we have 
\begin{equation}
    |P \cap [N]_K| \geq s|\mB_g \cap [N]_K|  - T -s^2. \label{eq:bg}
\end{equation}

By Lemma \ref{claim:infrho2}, we have that 
\begin{equation}
    |\mB_b \cap [N]_K| \leq |P \cap [N]_K| / s - O(T) - O(s^2)/s \label{eq:bb}
\end{equation}
since the injective mapping $\rho$ is to maps to individual pod, but each pod contains at least $s$ elements for $t \geq s$. 

For each pod, we claim that all its elements belong either to~$O$ or to~$C$.  
Suppose, for contradiction, that there exists an element~$a$ in this pod such that~$a \notin C$ and~$a$ is never output by the algorithm.  
Then, by the algorithm’s design, after this pod was created, all subsequent outputs of the algorithm must be strings smaller than~$a$, since the smallest unused element in the pods always has priority.  
However, there are only finitely many strings smaller than~$a$, which contradicts the fact that~$C$ is infinite and hence the algorithm runs for infinitely many time stamps.  
Therefore, every element in the pod must lie in~$O \cup C$.
Therefore, 
\begin{equation}
  P\setminus C \in O.  \label{eq:PinO}
  \end{equation}
Thus (\ref{eq:bg}) can be written as
\begin{equation}
    |O \cap [N]_K| + |P \cap C  \setminus O \cap [N]_K| \geq s|\mB_g \cap [N]_K|  - T -s^2 \label{eq:bg2}
\end{equation}
By the same reasoning, 
\begin{equation}
|\mB_b \cap [N]_K| \leq    ( |O \cap [N]_K| + |P \cap C  \setminus O \cap [N]_K| )/s - O(T)/s - O(s). \label{eq:bb2}
\end{equation}

We trivially have
\begin{equation}
|(P \cap C \setminus O) \cap [N]_K| + 
    |(P \cap O)\cap C \cap [N]_K| + |\mB_b \cap [N]_K| + |\mB_g \cap [N]_K|  = |C \cap [N]_K|, \label{eq:union}
\end{equation}  as $\mB = \mB_g \cup \mB_b$ and they are disjoint from $O \cup P$ by definition. 

Plugging in (\ref{eq:bb2}) and (\ref{eq:bg2}) into (\ref{eq:union}), we have 
\begin{align}
  |(P \cap C \setminus O) \cap [N]_K| + 
    |(P \cap O)\cap C \cap [N]_K| +  |O \cap [N]_K|/s + |P \cap C  \setminus O \cap [N]_K|/s - O(T)/s  \nonumber \\
    + |O \cap [N]_K|/s + |P \cap C  \setminus O \cap [N]_K|/s + T/s  + O(s) \geq |C \cap [N]_K|, \nonumber 
\end{align}
which implies 
\begin{align}
  (1+2/s)  |(P \cap C \setminus O) \cap [N]_K| + 
    |(P \cap O)\cap C \cap [N]_K| + (2/s) |O \cap [N]_K|/s 
      + O(T)/s + O(s) 
      \geq |C \cap [N]_K|,  \label{eq:unionnew}
\end{align}

We now claim \begin{equation}
    |P \cap O \cap [N]_K| \geq |P \cap C \setminus O \cap  [N]_K|. 
    \label{eq:withinP}
\end{equation}
Notice that every element in~$P \cap C$ will eventually be enumerated by the adversary.  
Hence, each element in~$P \cap C$ will either be output by the algorithm (and thus belong to~$O$) or be used exclusively by the adversary (and thus belong to~$C \setminus O$).
Thus, the set~$P \cap (C \setminus O)$ consists exactly of those elements in the pods that were generated by the adversary but never by the algorithm.  
Let~$w_t$ denote the adversary’s input at time~$t$, and suppose that~$w_t$ belongs to a pod~$P_{t'}$.  
Note that~$t' < t$; otherwise, since pods are constructed only from elements unused by either the adversary or the algorithm at the time of their creation, the pod could not have included~$w_t$.  
Hence, at time~$t$, the pod~$P_{t'}$ has already been created.  

Because~$w_t$ was not generated at time~$t-1$ and~$w_t \notin O$, it follows that the algorithm’s output at time~$t-1$ satisfies~$o_{t-1} < w_t$.  
Otherwise, by the algorithm’s description—since the pod~$P_{t'}$ already existed at time~$t-1 \ge t'$—the algorithm would have output~$w_t$ instead.  
Therefore, for each~$w_t \in P \cap (C \setminus O)$, there is a unique output element strictly smaller than~$w_t$ in the underlying ordering of strings.  
Since all algorithm outputs are drawn from the pods, we have that
$|P \cap O \cap [N]_K| \geq |P \cap C \setminus O \cap  [N]_K|$.

Plugging in (\ref{eq:withinP}) into (\ref{eq:unionnew}), we have
\begin{align*}
  (1+2/s)  |(P \cap O)\cap [N]_K| + 
    |(P \cap O)\cap C \cap [N]_K| + (2/s) |O \cap [N]_K|/s 
      + O(T)/s  + O(s)
      \geq |C \cap [N]_K|, 
      % \label{eq:union2} 
\end{align*}

This implies 
\begin{equation}
   (2+2/s)|(P \cap O)  \cap [N]_K| + 
     (2/s) |O \cap [N]_K|/s 
      + O(T)/s  + O(s) \geq  |C \cap [N]_K|. 
\end{equation}
Thus 
\begin{equation}
   (2+4/s)|O \cap [N]_K| + O(T)/s  + O(s) 
    \geq  |C \cap [N]_K|. \label{eq:union2} 
\end{equation}

 Since $T$ and $s$ are fixed constants, and $C$ has lower density $\alpha$ in $K$, we have that as $N$ goes to infinity, $O\cap [N]_K$ has density at least \[ \frac{\alpha}{2+4/s} \] in $[N]_K$. Therefore when $s$ is sufficiently large, the lower density of $O$ in $K$ is at least $\alpha(1/2 - \epsilon)$ for some fixed $\epsilon > 0$. As $s_t = t \to \infty$, we have the lower density of $O$ in $K$ is at least $\alpha / 2$. 
\end{proof}

\section{Identification with Partial Enumeration: A Topological Characterization}\label{sec:top}

Following the outline from the paper's introduction, we now turn to the classical Gold-Angluin model of {\em language identification in the limit}, where the goal is not simply to generate from the unknown language but to actually identify it.   We bring the partial enumeration perspective to this problem, asking what types of non-trivial guarantees we can give when the adversary need only enumerate a subset $C$ of the true language $K$.
It is too much to require that an algorithm identify the true language $K$ under these conditions, and so the natural analogue of the identification property in the case of partial enumeration is that the algorithm should produce a (finite representation of a) language $M_t$ in time step $t$ such that after some finite time, $M_t$ sits between $C$ and $K$: that is, $C \subseteq M_t \subseteq K$ for all sufficiently large $t$.
Note that this coincides with the standard definition of identification in the limit in the special case where $C = K$, i.e., when the adversary enumerates the full language.

We will give an exact characterization of when identification in the limit is possible in this partial enumeration model.
We will express this characterization in topological terms, where it is arguably particularly natural to formulate.

The paper~\cite{kleinberg2025density} was the first to draw a precise connection between formal language learning and topological constructions, interpreting the process of learning as the identification of a \emph{limit object} within a suitably defined topological space under the full enumeration model.  
Their work demonstrated that the convergence behavior of a learning algorithm can be understood through the topology induced by the space of hypotheses, opening a conceptual bridge between inductive inference and topology.

In this paper, we advance this perspective.  
Rather than treating the connection as a loose analogy, we develop a systematic topological framework for studying models of language learning.  
In particular, we show that the principal notions of identifiability correspond to well-defined \emph{separation axioms} in topology: for example full enumeration aligns with the $T_D$ property.  
This correspondence enables a unified language for reasoning about learnability, continuity, and robustness across different models of inference.

It also provides a more intuitive and conceptually cleaner characterization of when a countable collection of languages is identifiable especially in the partial enumeration model, making it easier to analyze structural properties of language classes.  
In particular,  this framework yields  new insights into the nature of learnability.  
For instance, we show that identifiability under partial enumeration is \emph{topologically robust} under finite removal of strings from the underlying alphabet—an invariance that fails dramatically in the full enumeration model.  
Moreover, the topological formulation renders this robustness naturally: the argument follows naturally from the preservation of $T_D$ separation under finite perturbations within our defined topology.  
These results reveal a deeper structural distinction between the two learning paradigms, allowing classical phenomena in inductive inference to be expressed and understood in purely topological terms.

\vspace{0.2in}
Our first result establishes this connection concretely by providing a topological reformulation of Angluin’s identification theorem in the full enumeration model, showing that learnability in this setting is equivalent to the $T_D$ separation property of the induced topology.

We now introduce the topological framework for this question.  
First, without loss of generality, we assume that each language in~$\coll$ is a subset of~$\mathbb{N}$, since any discrete countable set of strings can be enumerated by~$\mathbb{N}$.  
Given our countable collection of languages~${\mathcal{X}}$, and any infinite subset~$C' \subseteq \mathbb{N}$ that is contained in some language of~${\mathcal{X}}$, we define a topology~$\topo{C'}$ on~$C'$ as follows:

For each language $\lan \in \coll$ such that $C'\subset \lan$, and each finite subset $F$ of $C'$,
define one of the basic open sets to be
\[
U_{\lan, F, C'} = \{\lan' \in \mathcal{X} \mid F \subseteq \lan', |\lan' \cap C'| = \infty \}.
\]
For each language $\lan \in \coll$ such that $C'\not\subset \lan$, and each finite subset $F$ of $C'$,
define one of the basic open sets to be
\[
U_{\lan, F, C'} = \{\lan\}
\]
the singleton set. 

Let the collection of sets \( U_{\lan,F, C'} \), ranging over all languages \( \lan \in\mathcal{X} \)  and all finite subsets $F \subset C'$, serve as a basis for the open sets. This defines a topology $\topo{C'}$ on \( \mathcal{X} \).  

To see it is an open cover, notice that if we have some language $\lan' \in U_{\lan_1,F_1, C'} \cap U_{\lan_2,F_2, C'}$, then clearly $\lan' \in U_{\lan',F_1 \cup F_2, C'}$ and $U_{\lan',F_1 \cup F_2, C'} \subset U_{\lan_1,F_1, C'} \cap U_{\lan_2,F_2, C'}$. Also notice that for any language $L$, trivially $L \in U_{L, \emptyset, C'}$. 

The next definition and result establish the formal connection between language learning and the notion of \emph{limit points} in a topology, making precise how convergence of hypotheses corresponds to identification in the limit.
\begin{defn}
We say that a language $L$ is a {\it true language candidate} with respect to $C'$ if the following hold:
\begin{enumerate}\label{def:truelanguagecandidate}
    \item $C' \subset L$
\item There exists a sequence of languages~$L_1, L_2, \dots$ such that each~$L_i \ne L$, each~$L_i \cap C$ is infinite, and there exists an enumeration~$a_1, a_2, \dots$ of~$C$ satisfying  
\[
a_1 \in \bigcap_{i \ge 1} L_i, \quad 
a_2 \in \bigcap_{i \ge 2} L_i, \quad \text{and in general, } 
a_n \in \bigcap_{i \ge n} L_i \text{ for all } n \ge 1.
\]
\end{enumerate}
\end{defn}

\begin{claim}\label{claim:truelanguagelimit}
    If $L$ is a {\it true language candidate} with respect to $C'$ if and only if $L$ is a limit point in $\topo{C'}$. 
\end{claim}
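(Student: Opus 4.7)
My plan is to first reduce the claim to a clean statement about the basic neighborhoods of $L$, and then prove each direction from that description. The key preliminary observation is that the basic open sets of $\topo{C'}$ containing a given $L$ have a very restricted form. If $C' \not\subset L$, then the singleton $U_{L,\emptyset,C'} = \{L\}$ is a basic open set, so $L$ is isolated (hence not a limit point) and simultaneously fails condition~(1) of Definition~\ref{def:truelanguagecandidate} (hence is not a true language candidate), giving the equivalence trivially in this case. Otherwise $C' \subset L$. Any basic set $U_{L',F,C'}$ with $C' \not\subset L'$ equals $\{L'\}$ and cannot contain $L$, while the remaining basic open sets containing $L$ reduce to $V_{F,C'} := \{L'' \in \coll : F \subseteq L'', |L'' \cap C'| = \infty\}$ for finite $F \subset C'$, each of which really does contain $L$ since $F \subseteq C' \subseteq L$ and $|L \cap C'| = |C'| = \infty$.

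Given this, $L$ is a limit point exactly when every $V_{F,C'}$ contains some language other than $L$. For the forward direction, I will use the witnesses $L_1, L_2, \ldots$ and the enumeration $a_1, a_2, \ldots$ of $C'$ supplied by Definition~\ref{def:truelanguagecandidate}. For any finite $F \subset C'$, pick $N$ with $F \subseteq \{a_1, \ldots, a_N\}$; for each $i \geq N$ the assumption $a_1, \ldots, a_N \in L_i$ gives $F \subseteq L_i$, and then $|L_i \cap C'| = \infty$ together with $L_i \neq L$ exhibits $L_i$ as an element of $V_{F,C'} \setminus \{L\}$.

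For the reverse direction, enumerate $C' = \{c_1, c_2, \ldots\}$ arbitrarily and apply the limit point property to the neighborhood $V_{\{c_1,\ldots,c_n\}, C'}$ to choose some $L_n \neq L$ with $\{c_1, \ldots, c_n\} \subseteq L_n$ and $|L_n \cap C'| = \infty$. Setting $a_n := c_n$, we obtain an enumeration of $C'$ such that, for every $i \geq n$, $a_n \in \{c_1, \ldots, c_i\} \subseteq L_i$, so $a_n \in \bigcap_{i \geq n} L_i$, verifying both clauses of the definition. I do not expect a real obstacle here: the argument is essentially bookkeeping once one notices that in the ``infinite-intersection'' case the set $U_{L',F,C'}$ does not depend on $L'$, which is what makes the limit-point condition equivalent to a uniform existence statement across finite $F \subset C'$ and makes it align precisely with the definition of true language candidate.
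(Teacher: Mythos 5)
Your proof is correct and follows essentially the same route as the paper's: both directions translate the definition of a true language candidate into the existence of witnesses $L_n \neq L$ inside the basic open sets $U_{L,F,C'}$, using an index $N$ with $F \subseteq \{a_1,\dots,a_N\}$ one way and the sets $F_n = \{a_1,\dots,a_n\}$ the other way. Your preliminary case analysis (isolating the case $C' \not\subset L$ and noting that the relevant basic neighborhoods do not depend on the first index) only makes explicit what the paper's proof leaves implicit, so there is no substantive difference.
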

\begin{proof}
    Suppose~$L$ is a \emph{true language candidate} with respect to~$C'$.  
Then there exists a sequence of languages~$L_1, L_2, \dots$ satisfying Item~2.  
We will show that~$L$ is a limit point of the sequence~$L_1, L_2, \dots$ in~$\topo{C'}$.  

Indeed, every basic open set in~$\topo{C'}$ consists of all languages that contain some finite subset~$F \subset C'$.  
Let~$t_F$ be the smallest index such that
\[
F \subseteq \bigcup_{1 \le i \le t_F} \{a_i\}.
\]
Then, by Item~2, we have~$L_{t_F} \in U_{L, F, C'}$.  
Since~$L_{t_F} \ne L$, it follows that~$L$ is a limit point of the sequence in~$\topo{C'}$.

We now show the converse.  
Suppose that~$L$ is a limit point in~$\topo{C'}$.  
In particular, this means that the singleton set~$\{L\}$ is not open in~$\topo{C'}$.  
Hence,~$C' \subset L$.  

Let~$a_1, a_2, \dots$ be an arbitrary enumeration of~$C'$, and define~$F_n = \{a_1, a_2, \dots, a_n\}$.  
Then the basic open set~$U_{L, F_n, C'}$ always contains some language~$L_n \ne L$, since~$L$ is a limit point in~$\topo{C'}$.  
In particular, each~$L_n$ contains~$F_n = \{a_1, a_2, \dots, a_n\}$.  
Therefore,~$L$ is a true language candidate for the sequence~$L_1, L_2, \dots$.
\end{proof}

In the partial enumeration model, it is not meaningful to define ``identification in the limit'' as identifying the true language~$K$.  
Indeed, suppose that~$K$ and another language~$L$ have intersection~$C$, and the adversary fully enumerates~$C$.  
In this case, the algorithm can never distinguish between~$K$ and~$L$, since it only observes elements of~$C$.
\begin{defn}[Identification in the Limit in Partial Enumeration Model]
    In the partial enumeration model, suppose the adversary enumerates an infinite subset $C$ of the true language $K$. We say the algorithm {\it identifies in the limit} if there is a finite time $T$ after which, the algorithm always outputs a set of strings fully contains $C$ but is contained in $K$.  
\label{def:partial-ident}
\end{defn}

It is useful to talk about the representational issues involved in this Definition \ref{def:partial-ident}, which were previewed earlier.
This definition is formulated in terms of an algorithm that is allowed to output an \textbf{arbitrary infinite} set as one of its hypotheses.  
Although such an algorithm is clearly not computationally feasible, it provides a convenient idealization for delineating the theoretical boundary between what is learnable and what is not.  
Perhaps surprisingly, this idealization \textbf{does not} increase the learner’s power.  
In all of our positive results, we show that whenever identification is achievable by such an unrestricted algorithm, it is in fact achievable by an algorithm that outputs only a \emph{finite intersection} of its candidate languages.  
Conversely, our negative results apply even to algorithms that can output arbitrary infinite sets.  
In other words, allowing arbitrary infinite hypotheses, no matter how powerful this may seem, does not confer any additional learning capability beyond that obtainable with finite intersection representations.

So once an algorithm for identification in the partial enumeration model is allowed to produce finite intersections of candidate languages in $\mathcal{X}$, it does not really gain in power if we allow it to produce more elaborate representations.  
On the other hand, algorithms that can produce finite intersections have a big benefit in representational power over languages that can only output individual languages in $\mathcal{X}$ (as in the standard model with full enumeration by the adversaries). 
Indeed, it is straightforward to check that if the algorithm can only output individual languages in $\mathcal{X}$, there is very little non-trivial that one can say in the partial enumeration model.

The following example succinctly illustrates some of the basic gains in representational power that comes from allowing finite intersections in the partial enumeration model.

\begin{example}
A further illustration of the fragility of the full enumeration model arises when the learner is restricted to outputting a single language rather than intersections of hypotheses.  
Let the set of distinct primes be $p_1, p_2, \dots$.
Consider the collection of languages
\[
    L_i = \{\, n \in \mathbb{N} \mid p_i \text{ divides } n \,\}
    \quad \text{for each prime } p_i.
\]
Each $L_i$ consists of the multiples of a prime $p_i$.  
Suppose an adversary enumerates all multiples of some fixed composite number $m = p_i p_j$.  
In this case, every element enumerated belongs to both $L_i$ and $L_j$, and hence is consistent with multiple hypotheses simultaneously.  
If the learner is restricted to outputting a single language, it has no finite way to distinguish whether the target is $L_i$, $L_j$, or perhaps $L_r$ for some other prime divisor $p_r$ of $m$, as the learner will risk from overshooting.   
Therefore, no algorithm that produces only a single language as output can successfully identify this class in the limit.

By contrast, if the learner is allowed to output \emph{intersections} of hypotheses—effectively representing uncertainty over finitely many consistent languages—then it could output $L_i \cap L_j$ while the data remains ambiguous, and refine its hypothesis once disambiguating evidence appears.  
This shows that allowing intersection-based hypotheses (as in the partial enumeration model) strictly increases the expressive power of the learner, making it possible to achieve identifiability where the single-output restriction fails.\end{example}

\begin{claim}\label{claim:limitnoidentification}
If~$L$ is a true language candidate with respect to~$C'$ for a sequence of languages~$L_1, L_2, \dots$ that do not fully contain~$C'$, then no algorithm can identify~$L$ in the limit.
\end{claim}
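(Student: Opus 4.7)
My plan is to establish the claim by a Gold–Angluin style diagonalization, using the sequence $L_1, L_2, \ldots$ to drive any purported algorithm into self-contradiction. Suppose for contradiction that an algorithm $\ma$ identifies in the limit on the class $\{L, L_1, L_2, \ldots\}$. I would first play $\ma$ on the canonical input: $K = L$, $C = C'$, with the enumeration $a_1, a_2, \ldots$ provided by the true language candidate property. By the identification hypothesis, there is some finite ``locking time'' $T^*$ such that $C' \subseteq M_t \subseteq L$ for every $t \geq T^*$. The crucial consequence is that $M_{T^*}$ must already contain every element of $C'$—including the entire future enumeration $\{a_n : n > T^*\}$—since the inclusion $M_t \supseteq C$ is required for \emph{every} $t \geq T^*$ regardless of what the adversary shows in the future.

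Next, I would extract an $L_N$ from the sequence whose ``missing'' element in $C'$ lies strictly after the locking time. Since each $L_i$ fails to contain $C'$, there is some $a_{j_i} \in C' \setminus L_i$; and because $a_n \in \bigcap_{m \geq n} L_m$, we must have $j_i < i$. By passing to a subsequence if necessary, I may assume $j_i \to \infty$ (otherwise some $a_j$ with $j \leq T^*$ is absent from infinitely many $L_i$, which is a degenerate case that forces $\{L\}$ to be non-open in $\topo{C'}$ via a trivial sequence and can be handled separately). Fix $N$ so large that $j_N > T^*$. The alternative play $K = L_N$, $C_N := L_N \cap C'$, with enumeration obtained from $a_1, a_2, \ldots$ by skipping the (finitely many) indices $i$ with $a_i \notin L_N$, is a valid play, and its enumeration agrees with the canonical enumeration on every time step $t \leq j_N - 1$.

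The contradiction arises by comparing the two plays on the overlap range $T^* \leq t \leq j_N - 1$. Since $\ma$ is deterministic and both enumerations present identical prefixes through time $j_N - 1$, its output $M_t$ on the alternative play on this range equals its output on the canonical play, which by the locking argument contains $a_{j_N}$. But for $\ma$ to identify on the alternative play, at some finite time $\widetilde{T}$ we must have $M_t \subseteq L_N$, forcing $a_{j_N} \notin M_t$; hence $\widetilde{T} \geq j_N$. The main obstacle, and the step I expect to be most subtle, is upgrading this ``$\widetilde{T} \geq j_N$ for every large $N$'' statement into a definitive contradiction on a single adversarial enumeration. I would do so by a dovetailing construction analogous to the classical locking-sequence lemma: build a single enumeration of some $C \subseteq L$ (with $K = L$) that interleaves the canonical enumeration with ``probes'' mimicking alternative plays for a sequence $N_1 < N_2 < \cdots$ with $j_{N_k} \to \infty$. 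Each probe forces $\ma$ to either include some $a_{j_{N_k}}$ (violating the $L_{N_k}$-requirement) or exclude it (violating the $L$-requirement that $M_t \supseteq C'$), so on the combined enumeration $\ma$ must flip infinitely often on the membership of the $a_{j_{N_k}}$'s and hence fail to converge into $[C, L]$. The hypothesis that each $L_i$ fails to contain $C'$ is what makes these probes give \emph{genuine} obstructions that cannot be absorbed by any single limiting output $M$.
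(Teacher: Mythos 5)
Your overall strategy---drive a hypothetical identifier into non-convergence by presenting probes drawn from the witness sequence $L_1, L_2, \dots$---is the same adversarial diagonalization the paper uses. But there are two concrete problems in the execution.

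First, a sign error: since $a_n \in \bigcap_{m \geq n} L_m$, the membership $a_n \in L_i$ holds for \emph{all} $i \geq n$, so $a_{j_i} \notin L_i$ forces $i < j_i$, i.e.\ $j_i > i$, not $j_i < i$. This is good news for you: $j_i > i \to \infty$ comes for free, so the passage to a subsequence and the entire ``degenerate case'' discussion are unnecessary.

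Second, and more seriously, you have correctly identified that steps 1--4 give only ``$\widetilde{T} \geq j_N$ for each $N$'' rather than a contradiction, and that the dovetailing step is the real content---but that step is still a sketch and, as phrased, conflates two different plays. In the single combined play, $K = L$ and the eventual enumerated set $C$ is (in general) a \emph{proper} subset of $C'$; the only requirement on $\ma$ is $C \subseteq M_t \subseteq L$ for large $t$. There is no simultaneous ``$L_{N_k}$-requirement,'' so the ``$\ma$ must flip between violating one of two requirements'' framing does not express the obstruction, and your phrase ``the $L$-requirement that $M_t \supseteq C'$'' should read $M_t \supseteq C$. The correct chain of reasoning is: probe $k$ enumerates only strings in $L_{N_k} \cap C'$; because $|L_{N_k} \cap C'| = \infty$ this prefix can be continued to a valid play with $K = L_{N_k}$, so identification on \emph{that} play forces $M_t \subseteq L_{N_k}$ (hence $a_{j_{N_k}} \notin M_t$) after a finite extension of the probe; the adversary then explicitly enumerates $a_{j_{N_k}}$ so that $a_{j_{N_k}} \in C$, and moves to probe $k+1$ with $N_{k+1}$ chosen large enough that $L_{N_{k+1}}$ contains every string already shown (this uses $a_m \in L_n$ for $n \geq m$). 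The end of each probe is then a timestep where $C \not\subseteq M_t$, so no locking time exists. Once these details are filled in, your argument is essentially the paper's, which phrases it as the adversary successively ``pretending'' to be $L_1, L_2', L_3', \dots$.

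Also note that step 1 (running $\ma$ on the canonical $C = C'$ play and extracting a locking time $T^*$) does no work in the final argument: the combined dovetailed enumeration is not the canonical one, so $T^*$ is not a locking time for it. It can be dropped.
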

\begin{proof}
If~$L$ is a true language candidate with respect to~$C'$, then there exists a sequence of languages~$L_1, L_2, \dots$ satisfying the required condition.  
Let the adversary begin by pretending that the true language is~$L_1$, enumerating elements from the infinite set~$L_1 \cap C'$.  
Suppose the algorithm can identify in the limit.  
Then, after some finite time, it must conclude that the enumeration set is a subset of~$L_1$.  

At this point, since~$L$ is a true language candidate, there must exist another language~$L_2'$ that contains a string in~$C' \setminus L_1'$, contains all adversary outputs up to this point, and satisfies~$|L_2' \cap C'| = \infty$.  
The adversary can then pretend that the true language is~$L_2'$ and continue enumerating accordingly.  
After some finite time, the algorithm should again infer that the true language is a subset of~$L_2'$.  

Repeating this process, we can construct a sequence of languages~$L_1', L_2', \dots$, where each~$L_i'$ is consistent with all previously enumerated strings but contains at least one new string not in~$L_1' \cup \dots \cup L_{i-1}'$.  
The adversary then pretends the true language is~$L_i'$ by enumerating that new string together with other elements from~$C' \cap L_i'$, which is infinite.  
Since no~$L_i'$ fully contains~$C'$, the algorithm can never stabilize on a hypothesis that fully contains~$C'$.  
Hence, identification in the limit is impossible in this case.
\end{proof}

\begin{remark}
Unlike the full enumeration model, where each language is assumed to be infinite, in the partial enumeration model the intersection~$L \cap C$ may be finite.  
This introduces additional subtleties in handling such finite intersections.

The requirement that~$L_i \cap C$ be infinite is essential.  
If~$|L_i \cap C| < \infty$, then the statement no longer holds.  
For example, let~$L_i = \{1, 2, \dots, i\}$ for each~$i \ge 1$, and let~$L_0 = \mathbb{N}$ with~$C = \mathbb{N}$.  
Even though this sequence satisfies all the conditions in Definition~\ref{def:truelanguagecandidate} except that~$|L_i \cap C| < \infty$, we claim that there still exists an algorithm that identifies in the limit.  

The algorithm operates as follows: at each time~$t$, it always guesses~$L_0$ unless the adversary stops enumerating.  
At that point, the algorithm correctly identifies the true language based on the adversary’s inputs up to that time.  
Clearly, this algorithm identifies in the limit.
\end{remark}

We consider the intersection topology~$\mathcal{T}$ obtained by intersecting all topologies~$\topo{C}$, where~$C$ ranges over all infinite subsets of the languages.  
Equivalently, a set~$U$ is open in~$\mathcal{T}$ if and only if it is open in some~$\topo{C'}$ for an infinite sublanguage~$C'$.  
This is the coarsest topology with respect to all~$\topo{C}$.  
It is well known that any limit point in each of the individual topologies~$\topo{C}$ is also a limit point in their intersection topology~$\mathcal{T}$.  
In general, the converse need not hold; however, we show that, surprisingly, in our setting the converse does hold.

\begin{lem}\label{lem:limitpointequiv}
The set of limit points in the intersection topology~$\mathcal{T}$ is exactly the union of the sets of limit points in the individual topologies~$\topo{C'}$.
\end{lem}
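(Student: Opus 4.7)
My plan is to reduce the lemma to the simple characterization that a point $L$ is a limit point in a topology precisely when the singleton $\{L\}$ fails to be open there; this is the same characterization used in the proof of Claim~\ref{claim:truelanguagelimit}. Once both sides of the equality are rephrased as statements about open singletons, the lemma becomes a statement about when $\{L\}$ fails to be open in $\mathcal{T}$ versus when it fails to be open in some $\topo{C'}$.

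The containment $\bigcup_{C'} \mathrm{lim}(\topo{C'}) \subseteq \mathrm{lim}(\mathcal{T})$---the direction the preceding text flags as well known---would come first and is essentially definitional. Since $\mathcal{T}$ is the intersection of the topologies $\topo{C'}$, every open set of $\mathcal{T}$ must also be open in each $\topo{C'}$, so $\mathcal{T} \subseteq \topo{C'}$ as a collection of open sets. If $\{L\}$ is not open in some $\topo{C'}$, it cannot be open in the even smaller $\mathcal{T}$, so $L$ is a limit point in $\mathcal{T}$.

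For the reverse inclusion, suppose $L$ is a limit point in $\mathcal{T}$, i.e.\ $\{L\}$ is not open in $\mathcal{T}$. I would then produce a witness $C'$ by applying the defining property of the intersection topology in its contrapositive form: a set is open in $\mathcal{T}$ exactly when it is open in every $\topo{C'}$, so if $\{L\}$ fails to lie in $\mathcal{T}$ there must be some $C'$ for which $\{L\} \notin \topo{C'}$. For that $C'$, $L$ is a limit point of $\topo{C'}$, which places $L$ in the union.

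The main obstacle, and the reason the text singles the converse out as ``surprising,'' is justifying the cleanness of the intersection-topology characterization at the level of singletons. In general, an open set of an infimum topology can arise as a union or intersection of basic open sets drawn from several topologies $\topo{C_i}$ without being open in any one $\topo{C'}$, so one cannot simply say ``$\{L\}$ open in $\mathcal{T}$ iff $\{L\}$ open in some $\topo{C'}$'' for arbitrary $\{L\}$. The crux of the argument is thus to exploit the specific shape of the basic open sets $U_{L,F,C'}$---which are determined entirely by a finite set $F \subseteq C'$ and the cardinality constraint $|L' \cap C'| = \infty$ (or a singleton when $C' \not\subseteq L$)---to show that if $\{L\}$ is realized by any such union or intersection across different $\topo{C_i}$s, then $\{L\}$ must already be a basic open set of some single $\topo{C'}$. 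Once this structural fact is in hand, the equivalence of ``$\{L\}$ open in $\mathcal{T}$'' with ``$\{L\}$ open in every $\topo{C'}$'' follows, and the two inclusions combine to give the lemma.
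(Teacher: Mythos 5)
Your two inclusions (second and third paragraphs) are correct, and---stripped of its case analysis---this is essentially the same logic as the paper's proof: the paper also reduces the question to whether $\{L\}$ is open, observes that if $\{L\}$ were open in every $\topo{C'}$ it would be open in $\mathcal{T}$, and then converts ``$\{L\}$ not open in $\topo{C'}$'' into ``$L$ is a limit point of $\topo{C'}$'' via the true-language-candidate characterization of Claim~\ref{claim:truelanguagelimit}. (The paper additionally splits off the case where some $L' \neq L$ has $L \cap L'$ infinite, so as to exhibit the explicit witness $C' = L \cap L'$; your streamlined route does not need that case distinction.)

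Your final paragraph, however, is off the mark, and it is the only place where you defer an unproven step. The phenomenon you describe---$\{L\}$ being realized by unions or intersections of basic open sets drawn from several $\topo{C_i}$ without being open in any single $\topo{C'}$---is a feature of the \emph{supremum} (join) of topologies, i.e.\ the topology generated by $\bigcup_{C'} \topo{C'}$ as a subbasis. The space $\mathcal{T}$ here is the \emph{infimum}: its open sets are, by definition, exactly the sets open in every $\topo{C'}$ simultaneously (this is also how the paper uses $\mathcal{T}$, e.g.\ in the proof of Lemma~\ref{lem:distinct}, where showing a set is $\mathcal{T}$-open amounts to showing it is open in each individual $\topo{C'}$). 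Hence the equivalence ``$\{L\}$ open in $\mathcal{T}$ iff $\{L\}$ open in every $\topo{C'}$'' requires no structural analysis of the basic sets $U_{L,F,C'}$; it is immediate, and your second and third paragraphs already constitute a complete argument. So the ``crux'' you single out does not exist; but note that had it genuinely been needed, your proposal would be incomplete as written, since you only sketch how one might establish that structural fact rather than proving it.
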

\begin{proof}
    It suffices to show that there is no language $L$ which is a limit point in $\mathcal{T}$ but not a limit point in $\topo{C'}$ for all infinite sublanguages $C'$. 

First, observe that if there exists another language~$L' \ne L$ such that~$L' \cap L$ is infinite, then~$L'$ is a limit point in~$\topo{L'\cap L}$.  
Indeed, for any basis element of~$\topo{L'\cap L}$, say 
\[
U = \{\,L'' \in \coll \mid F \subset L'',\ |L'' \cap (L' \cap L)| = \infty\,\},
\]
we have~$F \subset L' \cap L$, which implies~$F \subset L$.  
Hence,~$L' \in U$, and thus both~$L$ and~$L'$ are limit points in~$\topo{L'\cap L}$.  
In other words,~$L$ serves as a true language candidate for the constant sequence~$L', L', L', \dots$.  
This yields a contradiction, since we assumed that~$L$ is not a limit point in~$\topo{C'}$ for any infinite sublanguage~$C'$.

Thus, we may assume that for any~$L' \ne L$, the intersection~$L \cap L'$ is finite.  
Then, for any infinite sublanguage~$C'$, if~$C'$ is not a subset of~$L$, the language~$L$ is an isolated point in~$\topo{C'}$ by the definition of~$\topo{C'}$.  
Hence, it remains only to consider the case where~$C' \subset L$.

Since we have assumed that for any~$L' \ne L$, the intersection~$L' \cap L$ is finite, it follows that the singleton~$\{L\}$ is an open set in the topology~$\topo{C'}$ for any infinite sublanguage~$C' \subset L'$ and any~$L' \ne L$.  
However, we cannot make~$\{L\}$ an open set in the intersection topology~$\mathcal{T}$, as this would imply that~$L$ is not a limit point in~$\mathcal{T}$.  

We have already shown that for any~$C'$ not contained in~$L$, the set~$\{L\}$ is open in~$\topo{C'}$.  
Thus, the only possibility for~$\{L\}$ not being open in the intersection topology is when~$C'$ is an infinite subset of~$L$.  
In particular, there must exist some infinite~$C' \subset L$ such that~$\{L\}$ is not open in~$\topo{C'}$.  
This means that for every finite subset~$F \subset C'$, the basic open set~$U_{L, F, C'}$ must contain a language other than~$L$ having infinite intersection with~$C'$; otherwise,~$\{L\}$ would be a singleton open set, which contradicts our assumption.  

Let~$a_1, a_2, \dots$ be an enumeration of the infinite set~$C' \subset L$, and define~$F_n = \{a_1, a_2, \dots, a_n\}$.  
Then there exists some language~$L_n \ne L$ in~$U_{L, F_n, C'}$.  
By our previous argument, this implies that~$L$ is a true language candidate with respect to~$C'$ for the sequence~$L_1, L_2, \dots$.  
Hence, by Lemma~\ref{lem:limitpointequiv},~$L$ is a limit point in~$\topo{C'}$, yielding a contradiction.
\end{proof} 

This topology incorporates \emph{all possible} limit-point scenarios.  
The open sets encode the information corresponding to the worst possible adversarial manipulation.  
In a sense, the open sets precisely capture the most adversarial situation under which the learner attempts to infer the true language.  

To illustrate this, consider the following toy example.  
Let~$L_0 = \mathbb{Z}$.  
For each even index~$n = 2k$ with~$k > 1$, define  
\[
L_n = \{\,1 \le i \le k\,\} \cup 2\mathbb{N},
\]
and for each odd index~$n = 2k - 1$ with~$k > 0$, define  
\[
L_n = \{\, -k \le i \le -1\,\} \cup (-2\mathbb{N}).
\]
If the adversary behaves \emph{non-adversarially} and chooses the enumeration set~$C$ containing~$\{0\}$, then~$L_0$ is the only language in the open set corresponding to any~$F$ that contains~$\{0\}$.  
(In other words, the learner can correctly identify~$L_0$ as soon as~$\{0\}$ is revealed.)  

On the other hand, if the adversary behaves adversarially and chooses~$C' \subset \mathbb{N} \setminus \{0\}$, then for any~$F \subset C'$, the language~$L_0$ belongs to~$U_F$ together with many other languages whose indices exceed the largest element of~$F$.

In other words,~$L_0$ is not an isolated point in~$\topo{C'}$, but rather a limit point in this topology.  
A similar argument holds for~$C' \subset -\mathbb{N} \setminus \{0\}$.  
Hence, in the intersection topology,~$L_0$ is not an isolated point but a limit point, even though it is isolated in the individual topology when~$C$ contains~$\{0\}$.  

Thus, the intersection topology effectively disregards non-adversarial scenarios, capturing only the worst-case adversarial behaviors.  
The open sets in the intersection topology are of the form
\[
L_0 \cup \{L_{2k-1} : k \ge k_1\} \cup \{L_{2k} : k \ge k_2\},
\quad \text{for all } k_1, k_2 \ge 1.
\]
This structure is dramatically different from that of the open sets in the full enumeration model:  
because of the intersection topology, it simultaneously records all possible ways the learner may converge to~$L_0$ under the worst adversarial scenarios for~$C$.

\

Recall that two points in a topological space are \emph{topologically indistinguishable} if every open set containing one also contains the other.  
This notion of indistinguishability induces an equivalence relation on the space.  
The quotient space obtained by identifying all topologically indistinguishable points is called the \emph{Kolmogorov quotient}; in this quotient, every two distinct points are topologically distinguishable.
\begin{lem}\label{lem:distinct}
    Our topological space $\mathcal{T}$ is homeomorphic to its Kolmogorov quotient if and only no two languages $L, L'$ satisfying $L \cap L'$ being infinite.  
\end{lem}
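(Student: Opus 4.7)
The plan is to identify ``$\mathcal{T}$ is homeomorphic to its Kolmogorov quotient'' with the $T_0$ separation axiom, i.e.\ no two distinct points of $\mathcal{T}$ are topologically indistinguishable. Using the characterization of $\mathcal{T}$-open sets given earlier in the section (a set is open in $\mathcal{T}$ precisely when it is open in each $\topo{C}$, so $\mathcal{T}$ is coarser than every $\topo{C}$), the lemma then reduces to showing that $\mathcal{T}$ is $T_0$ if and only if every pair of distinct languages in $\coll$ has finite intersection. Both directions will be handled by a case analysis on the two branches of the basis definition of $\topo{C}$ (depending on whether $C \subset L_0$).

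For the contrapositive of the ``only if'' direction, I take distinct $L, L' \in \coll$ with $|L \cap L'| = \infty$, set $C = L \cap L'$, and argue that $L$ and $L'$ lie in exactly the same $\mathcal{T}$-open sets. Given any $\mathcal{T}$-open neighborhood $U$ of $L$, openness in $\topo{C}$ yields a basis element $V = U_{L_0, F, C}$ with $L \in V \subseteq U$. The singleton branch $V = \{L_0\}$ (which occurs when $C \not\subset L_0$) is ruled out because then $L = L_0$ would force $C \subset L_0$. So $C \subset L_0$ and $V = \{L'' \in \coll : F \subseteq L'',\, |L'' \cap C| = \infty\}$; since $F \subseteq C \subseteq L'$ and $L' \cap C = C$ is infinite, we obtain $L' \in V \subseteq U$, and a symmetric argument gives the reverse containment, so $L$ and $L'$ are topologically indistinguishable.

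For the ``if'' direction, I assume pairwise finite intersection and show that every singleton $\{L\}$ is open in $\mathcal{T}$, which in fact makes $\mathcal{T}$ discrete. Fix any admissible infinite $C$: if $C \not\subset L$, then $\{L\} = U_{L, \emptyset, C}$ is already in the basis of $\topo{C}$ by the second branch of the definition; if $C \subset L$, then for any $L'' \ne L$ we have $|L'' \cap C| \le |L'' \cap L| < \infty$, so the first-branch basis element $U_{L, \emptyset, C} = \{L'' \in \coll : |L'' \cap C| = \infty\}$ collapses to $\{L\}$. Thus $\{L\}$ is open in every $\topo{C}$, hence in $\mathcal{T}$. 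The main obstacle I anticipate is only bookkeeping: keeping the two branches of the $U_{L_0, F, C}$ definition straight in each direction, and noting that the chosen $C$ is always a valid admissible set (infinite and contained in some language of $\coll$), which holds automatically for $C = L \cap L'$ in the forward direction and for any infinite $C \subset L$ in the reverse direction.
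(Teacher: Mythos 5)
Your proof is correct, and it matches the paper on the forward direction while taking a cleaner route on the converse. For the ``only if'' direction your argument is essentially the paper's: set $C = L\cap L'$, note that any basic neighborhood of $L$ in $\topo{C}$ falls into the non-singleton branch (since $C \subset L$), so $F\subseteq C\subseteq L'$ and $|L'\cap C|=|C|=\infty$ put $L'$ in it as well, and indistinguishability in one $\topo{C}$ propagates to $\mathcal{T}$ because $\mathcal{T}$ is coarser than every $\topo{C}$. (You also use the correct reading of $\mathcal{T}$ as open-in-every $\topo{C}$, consistent with what the paper's proof actually uses.) For the ``if'' direction the paper instead builds a concrete separating set $U = \bigcup_{w\notin L'}\{\lan''\mid w\in\lan''\}$ and argues its openness in each $\topo{C'}$ by casing on the unique language $J\supseteq C'$; that argument ultimately hinges on every point being isolated but never states it directly. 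You prove exactly that isolation fact head-on: under pairwise finite intersection, $\{L\}$ is itself a basis element of every $\topo{C}$ — via the singleton branch when $C\not\subset L$, and via $U_{L,\emptyset,C}=\{\lan''\mid |\lan''\cap C|=\infty\}=\{L\}$ when $C\subset L$, since any $\lan''\ne L$ meets $C\subseteq L$ only finitely. Hence each $\topo{C}$, and therefore $\mathcal{T}$, is discrete, which is a strictly stronger conclusion ($T_1$, not merely $T_0$) obtained with less machinery, and it makes the converse both shorter and more transparent than the paper's construction.
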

\begin{proof}
If two points $L, L'$ are indistinguishable in $\topo{C'}$ for some infinite sublanguage $C'$, then $L, L'$ are indistinguishable in the intersection topology, since any open set in $\topo{C'}$ either contain both or contain neither one of them. Therefore if $L \cap L'$ is infinite, then choose $C' = L \cap L'$. Then by the definition of $\topo{C'}$, every open set containing $L$ will also contain $L'$, and vice versa. Therefore if there are two languages $L, L'$ satisfying $L \cap L'$ being infinite, then $L, L'$ are indistinguishable in the intersection topology. 

Now we prove the converse. Suppose no two languages have infinite intersection. We show every two languages are topologically distinguishable. For this we just need to find an open set $U$ that is open in every individual topology $\topo{C'}$ that separates $L$ and $L'$. 

Consider the set  $U = \bigcup_{w \notin L'} \{\lan' \mid w\in \lan'\}$. Clearly $U$ does not contain $L'$, because each set $\{\lan' \mid w\in \lan'\}$ does not include $L'$ as string $w$ is not in $L'$. We claim that $U$ is an open set in each topology $\topo{C'}$ where $C'$ is an infinite sublanguage of some language. Since no two languages  have infinite intersection, there is exactly one language $J$ that contains $C'$. The other languages are in their singleton open sets. If $J$ is not $L'$, then $\bigcup_{w \in J \setminus L'} \{\lan' \mid w\in \lan'\}$ is an open set in $\topo{C'}$. For each other language $J'$ which contains an $w \notin L'$ but not  in $\bigcup_{w \in J \setminus L'} \{\lan' \mid w\in \lan'\}$, clearly this language does not fully contain $C'$ since $J$ is the unique language containing $C'$, so $J'$ is in its singleton open set. The union of open set is again open. So we have shown $U$ is open in $\topo{C'}$ if $J$ is the unique language containing $C'$ and $J \neq L'$. Now we assume $J$ is $L'$. Then again, since any language $J'$ contains some $w \notin L'$ does not fully contain $C'$, $\{J'\}$ is an open set in  $\topo{C'}$. The union of such $J'$ is again open. Thus $U$ is open in $\topo{C'}$ in all cases of $C'$. Therefore we have shown that $U$ contains $L$ but not $L'$. And thus $L$ and $L'$ are not topologically equivalent. 
\end{proof}

Note that even though for any language $L$, there is an uncountable number of choices for its infinite sublanguage $C$, but we do not have to study all the possible uncountable number of $C$. The theorems above show that we could study all the $C$'s at once. Furthermore, the theorem indicates the importance of studying the intersection of languages. In particular, any two languages which have infinite intersection will be indistinguishable in the partial enumeration model. 

We just saw that if two languages have infinite intersection, then they are indistinguishable in the intersection topology. 

Given any finite (possibly empty set $F$), we define $\topo{F}$ be the intersection topology of the previous $\topo{C'}$ where $F \subset C'$. The same proof will yield the following, as the only properties of $C'$ we've used is that in each individual topology $\topo{C'}$, $C'$ is an infinite subset of some language. 
\begin{lem}
    Lemmas \ref{lem:limitpointequiv} and \ref{lem:distinct} also hold for $\topo{F}$ for each finite $F$.
\end{lem}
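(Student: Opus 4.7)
The plan is to observe that the proofs of Lemmas~\ref{lem:limitpointequiv} and~\ref{lem:distinct} use only two facts about $C'$: that it is infinite and that it is contained in some language of $\mathcal{X}$. Both facts survive if we restrict attention to the sub-family $\{\topo{C'} : F \subset C'\}$ whose intersection defines $\topo{F}$, so the task reduces to verifying that the specific witness sets constructed in the original proofs can always be adjusted to contain the fixed finite $F$.

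For the analog of Lemma~\ref{lem:limitpointequiv}, I would first dispose of a trivial case: if $F \not\subset L$, then every $C' \supset F$ satisfies $C' \not\subset L$, so by the singleton clause in the basis definition $\{L\}$ is open in every $\topo{C'}$ with $F \subset C'$, and hence $L$ is isolated in $\topo{F}$. We may therefore assume $F \subset L$. In Case~1 of the original proof (some $L' \neq L$ with $|L \cap L'| = \infty$), provided $F \subset L'$ as well, I replace the witness $C' = L \cap L'$ by $C' = F \cup (L \cap L')$, which is still an infinite set contained in $L$ (hence in some language of $\mathcal{X}$) and contains $F$; the constant sequence $L', L', \ldots$ then exhibits $L$ as a true language candidate in $\topo{C'}$, and Claim~\ref{claim:truelanguagelimit} identifies $L$ as a limit point of $\topo{C'}$. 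If $F \not\subset L'$ for every such $L'$, one falls through to Case~2. In Case~2 (all $L' \neq L$ have $L \cap L'$ finite), the original argument that $\{L\}$ is open in every $\topo{C'}$ with $C' \subset L$ uses only finiteness of $L \cap L'$ together with $C' \subset L$, and is unaffected by the additional requirement $F \subset C'$, so the same contradiction is obtained.

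For the analog of Lemma~\ref{lem:distinct}, the same substitution handles the forward direction: if $|L \cap L'| = \infty$ and $F \subset L \cap L'$, then $C' = F \cup (L \cap L')$ makes $L$ and $L'$ indistinguishable in $\topo{C'}$, and hence indistinguishable in $\topo{F}$. If $F$ fails to lie in one of the two languages, then that language is a singleton basic open in every $\topo{C'}$ containing $F$, so $L$ and $L'$ are separated. For the converse, the separating open set $U = \bigcup_{w \notin L'} \{L'' : w \in L''\}$ constructed in the original proof remains open in each $\topo{C'}$ with $F \subset C'$, since the verification of its openness depends only on the generic structure of the topology $\topo{C'}$ and not on any specific choice of $C'$. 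The main bookkeeping obstacle is the boundary behavior when $F$ is not contained in one of the languages under comparison; in every such situation the singleton clause of the basis immediately resolves the issue, and the remainder of the argument is a word-for-word translation of the originals.
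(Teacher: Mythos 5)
There is a gap in your Case~2 handling for the Lemma~\ref{lem:limitpointequiv} analogue. After assuming $F \subset L$ and disposing of Case~1 (some $L'$ with $|L \cap L'| = \infty$ and $F \subset L'$), you claim to ``fall through to Case 2 (all $L' \neq L$ have $L \cap L'$ finite).'' This is not the negation of Case~1: the negation only rules out $L'$ with $|L \cap L'| = \infty$ and $F \subset L'$, leaving open the possibility that some $L'$ has $|L \cap L'| = \infty$ yet $F \not\subset L'$ (e.g., $L = \mathbb{N}$, $L' = 2\mathbb{N}$, $F = \{1\}$). For such $L'$, the original Case~2 argument --- which concludes $\{L\}$ is open in $\topo{C'}$ from the finiteness of $L \cap L''$ for every $L'' \neq L$ --- does not apply, and a basis element $U_{L, G, C'}$ with $G$ chosen indifferently (say $G=\emptyset$) may fail to reduce to $\{L\}$, since the offending $L'$ can still have infinite intersection with $C'$.

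The fix is to choose $G = F$ in the basis: for $C' \supset F$ with $C' \subset L$, the set $U_{L, F, C'}$ contains only languages $L''$ with $F \subset L''$, and the negation of Case~1 then guarantees $|L'' \cap L| < \infty$ for every such $L'' \neq L$; since $C' \subset L$, this forces $|L'' \cap C'| < \infty$ and hence $U_{L, F, C'} = \{L\}$, which is the needed conclusion. Your statement that the original argument ``is unaffected by the additional requirement $F \subset C'$'' is exactly where the error lives: the restriction to $G \supset F$ is not optional, it is the whole point of the adjustment. Your treatment of the Lemma~\ref{lem:distinct} analogue is essentially fine, though it would be cleaner to state explicitly the modified condition (no two languages $L, L'$ with $|L \cap L'| = \infty$ and $F \subset L \cap L'$) before checking the two directions. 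The paper itself offers only the one-line observation that the arguments use nothing beyond $C'$ being an infinite sublanguage containing $F$, so spelling out the details is a worthwhile exercise, but the Case~2 bookkeeping needs the above correction to be sound.
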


Let $a_1, a_2, \dots$ be the adversary enumeration sequence. We thus define the sequence of topologies $\mathcal{T}_1, \mathcal{T}_2, \dots$ where $\mathcal{T}_i = \bigcap_{\{a_1, \dots, a_i\} \in C} \topo{C}$. 

\subsection{A Topological Approach to Identification in the Limit}

As a first step toward the partial enumeration model, we begin by considering the classical Gold-Angluin model of identification in the limit when the adversary fully enumerates the unknown language $K$.
In this context, which we cover in this subsection, we are able to give a clean, novel topological characterization of when identification in the limit is possible, essentially reformulating and proving Angluin's characterization \cite{angluin1979finding,angluin1980inductive} by topological means.

Let $\mathcal{X}$ be a class of languages on a countable set of underlying strings $\Sigma$, and let $f: \Sigma^* \to \mathcal{C}$ be a learner (i.e., a function mapping finite sequences of strings to hypotheses).

\begin{thm}[Angluin \cite{angluin1980inductive}] Suppose the set of underlying strings is countable. 
 The collection of languages 
 is identifiable in the limit in the full enumeration model if and only if for every language $L$, there exists a finite set $D_L \subseteq L$ (called a \textit{tell-tale set}) such that:
\[
\forall L' \in \mathcal{C},\quad D_L \subseteq L' \Rightarrow L' \not\subseteq L
\]

\noindent
That is, no strict subset of $L$ in $\mathcal{C}$ contains $D_L$.
\end{thm}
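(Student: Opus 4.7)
The plan is to prove the two directions separately. Sufficiency is obtained by constructing an explicit Angluin-style learner that uses the tell-tale sets to eventually rule out all competing hypotheses; necessity is obtained by a diagonalization argument that constructs, from any purported identifier, an enumeration on which it fails. The necessity direction is the more delicate one because it requires producing an enumeration that is simultaneously a complete enumeration of a fixed language $L$ while forcing the learner to oscillate infinitely often between the index of $L$ and other indices.

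For sufficiency, I would enumerate $\mathcal{X}$ as $L_1,L_2,\ldots$, fix tell-tale sets $D_i=D_{L_i}$, and define the learner at time $t$ (with observed set $S_t$) to output the smallest index $i$ satisfying $D_i\subseteq S_t\subseteq L_i$, using a default value if no such $i$ exists. Let $K=L_k$ be the true language with $k$ minimal. Index $k$ is eventually a valid choice, since $D_k$ is finite and $S_t$ eventually contains it, while $S_t\subseteq K=L_k$ holds at every step. For each $i<k$, the tell-tale property of $L_i$ rules out the case $K\subsetneq L_i$, because $K$ would then be a strict subset of $L_i$ containing $D_i$. Hence either $L_i=K$ (in which case stabilizing at $i$ already identifies $K$) or $K\not\subseteq L_i$, and in the latter case the adversary eventually enumerates some element of $K\setminus L_i$ and destroys $S_t\subseteq L_i$. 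Consequently, every $i<k$ becomes and remains invalid after finitely many steps, and the learner stabilizes on a correct index.

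For necessity, I would argue by contrapositive: assume $L\in\mathcal{X}$ has no tell-tale, so that for every finite $F\subseteq L$ there is some $L'\in\mathcal{X}$ with $F\subseteq L'\subsetneq L$. Given any learner $M$, I construct a bad enumeration $E$ of $L$ in stages. In stage $k$, I first extend the current prefix by appending elements of $L$ (in particular, the $k$-th element in a fixed canonical enumeration of $L$) until $M$ outputs the index of $L$; this sub-phase terminates because any continuation of the current prefix to a full enumeration of $L$ must cause $M$ to converge to the index of $L$. Then I apply the no-tell-tale hypothesis to the current prefix $F_k$ to pick $L_k\subsetneq L$ with $F_k\subseteq L_k$, and enumerate unused elements of $L_k$ until $M$ outputs an index different from that of $L$; this sub-phase terminates too, for otherwise the full continuation would be a complete enumeration of $L_k\in\mathcal{X}$ on which $M$ converges to the wrong index. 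The resulting $E$ enumerates every element of $L$ (by the canonical-element requirement in each stage), yet $M$ on $E$ outputs the index of $L$ infinitely often and distinct indices infinitely often, contradicting identification in the limit.

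The main obstacle is the coordination in the necessity argument: ensuring that $E$ is a genuine enumeration of $L$ while each stage forces $M$ to deviate. The two sub-phases of a stage pull in opposite directions, one enumerating from $L$ and the other from a strict subset $L_k\subsetneq L$, and the termination of each sub-phase relies crucially on $M$'s hypothetical identification success for the corresponding language in $\mathcal{X}$.
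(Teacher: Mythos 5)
Your proof is correct and follows the classical Angluin argument. The paper does not prove the statement directly (it cites Angluin) but instead restates and proves an equivalent condition topologically: identifiability in the limit is equivalent to $\mathcal{T}_{\rm{full}}$ being a $T_D$ space, together with countability of $\mathcal{X}$ (Theorem~\ref{thm:topAngluin}). When that topological proof is unfolded, the sufficiency direction is exactly your greedy smallest-index learner --- the finite set defining a separating basic open set $U_{D_L}$ is precisely your tell-tale $D_L$, and the condition $D_i \subseteq S_t \subseteq L_i$ is the membership test for that neighborhood --- and the necessity direction is exactly your diagonalization, re-expressed as repeatedly finding points of $\overline{\{L\}} \setminus \{L\}$ inside ever-shrinking basic open neighborhoods of $L$ that remain consistent with the adversary's prefix. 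One place you are more careful than the paper's topological version is in the necessity construction: you explicitly interleave the $k$-th canonical element of $L$ into stage $k$, which guarantees that the constructed sequence $E$ really is a complete enumeration of $L$, a step the paper leaves implicit but which is needed so that the oscillation of $M$ on $E$ actually contradicts identifiability (otherwise $E$ might not be a legal input). The trade-off is that the topological packaging the paper uses is precisely what later generalizes to the partial-enumeration model and the $\topo{C}$ spaces, whereas your direct, certificate-based argument is self-contained and more elementary but does not set up that machinery.

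One cosmetic remark: as written in the statement, the displayed implication $D_L \subseteq L' \Rightarrow L' \not\subseteq L$ is technically violated by $L' = L$; the prose ``no strict subset of $L$ contains $D_L$'' is the intended reading ($L' \not\subsetneq L$), and you have correctly used that reading throughout, but it is worth noting the formula needs $\subsetneq$ there.
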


Let us use $\mathcal{T}_{\rm{full}}$ to denote the topological space where the open sets are generated by the basis: $U_F = \{\lan \mid F \subset \lan\}$.  The language learning problem is closely related to separation properties of the space, i.e., how can we distinguish two points in the space by open sets (and closed sets).

\begin{thm}[Topological restatement of the Angluin Theorem]\label{thm:topAngluin}
Suppose the set of underlying strings is countable. 
    In the Gold-Angluin full enumeration model, the identification in the limit is possible if and only if (1) the set of languages is countable, and (2) in $\mathcal{T}_{\rm{full}}$ 
    every point $L$ satisfies \( \overline{\{L\}} \setminus \{L\} \) is closed. In other words, this is $T_D$ space. 
\end{thm}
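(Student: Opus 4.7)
The plan is to translate Angluin's tell-tale characterization directly into the language of the topology $\mathcal{T}_{\rm{full}}$, and the whole proof will reduce to computing closures in this topology and unpacking the $T_D$ condition in terms of the basis $\{U_F\}$.

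My first step will be to compute closures explicitly. A language $L'$ lies in $\overline{\{L\}}$ iff every basic open neighborhood $U_F$ of $L'$ contains $L$; since $L' \in U_F$ means $F \subseteq L'$, this unwinds to the condition that every finite $F \subseteq L'$ satisfies $F \subseteq L$, i.e.\ $L' \subseteq L$. Hence
\[
\overline{\{L\}} \;=\; \{L' \in \mathcal{X} : L' \subseteq L\}, \qquad \overline{\{L\}} \setminus \{L\} \;=\; \{L' \in \mathcal{X} : L' \subsetneq L\}.
\]
Next I would reinterpret the $T_D$ condition. Standard topology tells us $\overline{\{L\}} \setminus \{L\}$ is closed iff $\{L\}$ is locally closed, i.e.\ there is an open set $U$ with $U \cap \overline{\{L\}} = \{L\}$. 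Because $\{U_F\}$ is a basis and $L \in U$, I may shrink $U$ to some basic neighborhood $U_{F_L}$ of $L$ (with $F_L \subseteq L$ finite), and the intersection with $\overline{\{L\}}$ only gets smaller, so without loss of generality $U_{F_L} \cap \overline{\{L\}} = \{L\}$. Unwinding, this says exactly: there is no $L' \in \mathcal{X}$ with $F_L \subseteq L' \subsetneq L$, which is Angluin's tell-tale condition with $D_L = F_L$. The converse direction is immediate: given a tell-tale set $D_L$, taking $U = U_{D_L}$ witnesses that $\{L\}$ is locally closed.

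Once this topological dictionary is in place, the theorem follows by invoking Angluin's classical characterization: a countable collection $\mathcal{X}$ is identifiable in the limit iff every $L \in \mathcal{X}$ admits a tell-tale set. The countability hypothesis on $\mathcal{X}$ is a separate necessary ingredient because the learner must output an index into $\mathcal{X}$ at each time step, so identification in the limit is only well-defined when $\mathcal{X}$ admits a countable indexing; one direction of this is trivial, and the other is a standard diagonal argument showing that no learner can converge to an index drawn from an uncountable pool in finite time.

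I do not expect a deep obstacle here, since the result is essentially a dictionary between Angluin's combinatorial condition and a classical topological separation axiom. The main care point will be the passage from a general open witness of local closedness to a basic one (handled by the basis argument above) and verifying that the equivalence is clean even when $L$ is itself a minimal element of $\mathcal{X}$ under inclusion, in which case $\overline{\{L\}} \setminus \{L\} = \emptyset$ is trivially closed and correspondingly any finite $F \subseteq L$ serves as a tell-tale set, so both conditions hold vacuously.
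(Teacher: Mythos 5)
Your proposal is correct, and the dictionary you build is exactly the equivalence the paper records in its preliminary ``Proof that Angluin's condition is equivalent to $\mathcal{T}_{\rm{full}}$ being a $T_D$ space'': computing $\overline{\{L\}}=\{L':L'\subseteq L\}$, rewriting the $T_D$ condition as local closedness of $\{L\}$, shrinking the open witness to a basic set $U_{F_L}$, and reading off the tell-tale $D_L=F_L$. Where you part ways with the paper is in the last step. You close by citing Angluin's theorem as a black box, so your argument is a pure reduction. The paper, after establishing the same equivalence, gives a second, self-contained proof of the theorem: a cardinality argument for the necessity of countability, an explicit identification algorithm that uses the $T_D$-provided basic open neighborhoods to exclude smaller and incomparable languages, and an adversary game that forces infinitely many hypothesis changes when $T_D$ fails. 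This self-contained route is what the paper advertises (in the introduction) as an ``algorithmic'' reproof of Angluin's theorem that does not presuppose explicit tell-tale sets. Your route is shorter and fully correct given Angluin's theorem; the paper's route is longer but yields Angluin's theorem itself as a byproduct. One small point: the uncountability direction you gesture at is really a pigeonhole/cardinality argument (a learner is a function from countably many finite histories into $\mathcal{X}$, so it cannot be surjective onto an uncountable $\mathcal{X}$), not a diagonal argument, but the paper's own proof is of exactly this pigeonhole form and your sketch captures the right idea.
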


We begin by recalling some basic definitions in topology.  

A topological space~$(X, \tau)$ is called a \emph{$T_0$ space} (or \emph{Kolmogorov space}) if, for every pair of distinct points~$x, y \in X$, there exists an open set~$U \in \tau$ such that  
\[
(x \in U \implies y \notin U) \quad \text{or} \quad (y \in U \implies x \notin U).
\]
Equivalently, in terms of the \emph{specialization preorder}~$\leq$,  
\[
x \leq y \text{ and } y \leq x \;\Rightarrow\; x = y.
\]

\medskip
\noindent
The \emph{specialization preorder}~$\leq$ on~$X$ is defined by  
\[
x \leq y 
\quad \text{if and only if} \quad 
(\forall\, U \in \tau)\, [\,x \in U \Rightarrow y \in U\,].
\]
That is, every open set containing~$x$ also contains~$y$.

\medskip
\noindent
A topological space~$(X, \tau)$ is called a \emph{$T_1$ space} if, for every pair of distinct points~$x, y \in X$, there exists an open set containing~$x$ but not~$y$, and vice versa.  
Equivalently, a space is $T_1$ if all singleton sets~$\{x\}$ are closed.

\medskip
\noindent
A topological space~$X$ is called a \emph{$T_D$ space} if, for every point~$x \in X$,  
the set~$\overline{\{x\}} \setminus \{x\}$ is closed.  
This means that each point can be \emph{separated from the other points in its closure} by an open set.  
The $T_D$ condition lies strictly between the $T_0$ and $T_1$ separation axioms.

\medskip
\noindent
There is a well-known hierarchy among these separation axioms:
\[
T_1 \;\Rightarrow\; T_D \;\Rightarrow\; T_0.
\]
These implications are strict: there exist $T_0$ spaces that are not $T_D$, and $T_D$ spaces that are not $T_1$.

\medskip
\noindent
In general, the \emph{separation axioms} describe how distinctly a topological space can differentiate between points using open or closed sets.

\begin{proof}[Proof that Angluin condition is equivalent to that $\mathcal{T}_{\rm{full}}$ being an $T_D$ space]
This is a restatement of the classical Angluin result. In the Angluin result, we need a finite set $D_L$ of strings such that no language that is a proper subset of $L$ contains $D_L$. 

Suppose the collection of language satisfies Angluin's condition. Notice that $L' \leq L$ if and only if $L' \in \overline{\{L\}}$. So Angulin's condition implies that the open set $U_{D_L}$ does not intersect any $L' \leq L$. Also notice $L' \leq L$ if and only if $L' \in \overline{\{L\}}$.

So Angulin's condition is equivalent to say that there is an open set $U_F$ for some finite $F$ containing $L$ which is disjoint from $B = \operatorname{cl}(\{x\}) \setminus \{x\}$. 
We show this is equivalent to say  $B$ is closed for every $L$. 

Indeed, if $B$ is closed for every $L$, consider $U_L$ to be the complement of $B$, which is open since $B$ is closed. Since each open set $U_L$ will contain a base element, it will contain some $U_F$ for some finite $F$. This open set $U_F$ is a subset of the complement of $B$, so it is disjoint from $B$. 

Now we assume that there is such an open set $U_F$  containing $L$ but is disjoint from $B$ for every language $L$. Since $\operatorname{cl}(\{x\}) $ is closed, $\operatorname{cl}(\{x\}) \cap (U_F)^c$ is again closed. However, the complement $(U_F)^c$ does not include $L$ and includes every point in $\operatorname{cl}(\{x\}) \setminus \{x\}$. So $\operatorname{cl}(\{x\}) \cap (U_F)^c = \operatorname{cl}(\{x\}) \setminus \{x\}$, which is a closed set.  
\end{proof}

\begin{proof}[Proof of Theorem \ref{thm:topAngluin}]

We first show that if the collection of languages is uncountable, then identification in the limit is impossible.  
Suppose, for contradiction, that there exists an algorithm~$A$ which guarantees identification in the limit.  
For each language~$L$, let~$E_L$ denote some adversarial enumeration of~$L$ such that, after some finite time~$T$, the algorithm~$A$ stabilizes at~$L$.  
This implies that there exists a finite subset of strings~$W_L$ such that~$A(W_L) = L$.  

However, the set of all finite subsets of a countable set is itself countable, whereas the set of languages~$L$ is uncountable.  
Hence,~$A$ cannot be surjective, contradicting the assumption that~$A(W_L) = L$ for all~$L$.  
Therefore, no algorithm can guarantee identification in the limit when the collection of languages is uncountable.

We now show that if~$\mathcal{T}_{\rm{full}}$ is a~$T_D$ space and is countable, then there exists an algorithm that identifies in the limit.  
This condition is equivalent to requiring that each point~$L$ has an open set~$U_L$ that is disjoint from~$\overline{\{L\}} \setminus \{L\}$.  
Without loss of generality, we may assume that each~$U_L$ is a basis element.  
In particular, for any~$L'$ with~$L < L'$, the open set~$U_{L'}$ does not contain~$L$.  

We define the algorithm as follows.  
While the adversary enumerates strings in the true language~$L$, the algorithm monitors whether the set of enumerated strings includes the finite subset determining~$U_L$ (since~$U_L$ is assumed to be a basic open set) and whether this set of strings is contained in~$L$.  
Whenever both conditions hold, the algorithm guesses that the true language is~$L$.  
If multiple such languages~$L$ satisfy these conditions, the algorithm selects the one with the smallest index.

To see why the algorithm works, first observe that for any~$L < L'$, the open set~$U_{L'}$ does not contain~$L$.  
Therefore, regardless of how the adversary enumerates the strings, the algorithm’s guesses will never enter the region corresponding to~$U_{L'}$.  
In particular, the algorithm will never guess a language~$L'$ with~$L' > L$.  

On the other hand, the open set~$U_L$ is separated from all~$L'$ with~$L' < L$.  
Hence, after some finite time—once the enumerated strings fall within the finite set defining~$U_L$—the algorithm will no longer guess any language~$L'$ with~$L' < L$.  
The only remaining candidates are the languages~$L''$ that are incomparable with~$L$ under the specialization preorder.  
For any such~$L''$ whose index is smaller than that of~$L$, the algorithm will exclude~$L''$ as soon as a string in~$L \setminus L''$ is enumerated by the adversary.  
Consequently, after some finite time,~$L$ will be the language with the smallest index satisfying all conditions, and the algorithm will stabilize on~$L$.

We now show that if the space is not a~$T_D$ space, then identification in the limit is impossible.  
The failure of the~$T_D$ property means that for some language~$L$, the set~$\overline{\{L\}} \setminus \{L\}$ is not closed.  
Equivalently,~$L$ is a limit point of the set~$\{L' : L' < L\}$.  
We now demonstrate that this implies identification in the limit cannot be achieved.

Pick any open set~$U_1$ containing~$L$.  
Then there exists some~$L_1 \ne L$ with~$L_1 \in U_1$.  
The adversary pretends that the true language is~$L_1$.  
Since~$L_1 < L$, this is valid, and after some finite time~$T_1$, the algorithm will output~$L_1$.  
Let~$E_1$ be the finite set of strings enumerated by the adversary up to time~$T_1$.  
The set~$U_{E_1}$ is then an open set containing both~$L_1$ and~$L$.

Now choose a string~$w_1 \in L \setminus L_1$, and consider the open set~$U_1 \cap U_{\{w_1\}}$, which does not contain~$L_1$.  
Define
\[
U_2 = U_{E_1} \cap U_1 \cap U_{\{w_1\}}.
\]
This open set contains~$L$ but not~$L_1$, and every language in~$U_2$ is consistent with~$E_1$.  
By assumption,~$U_2$ must contain another language~$L_2 < L$.  
The adversary now pretends that the true language is~$L_2$; after some finite time~$T_2$, the algorithm outputs~$L_2$.  
Let~$E_2$ be the set of strings enumerated by the adversary up to time~$T_2$ (so~$E_1 \subset E_2$).  
Choose~$w_2 \in L \setminus L_2$, and define
\[
U_3 = U_{E_2} \cap U_2 \cap U_{\{w_2\}}.
\]
Then~$U_3$ contains~$L$ but not~$L_1$ or~$L_2$, and all languages in~$U_3$ are consistent with~$E_2$.  
By the same reasoning,~$U_3$ must contain another language~$L_3 < L$.  

Repeating this construction inductively, we obtain a sequence of languages~$L_1, L_2, \dots, L_n, \dots$ with~$L_i < L$ and corresponding time stamps~$T_1 < T_2 < \dots$, such that at each time~$T_i$, the algorithm outputs~$L_i$.  
Since each~$U_{i+1}$ remains open and nonempty, this process continues indefinitely, producing an infinite sequence of distinct outputs.  
Thus, the algorithm never stabilizes, contradicting the assumption that identification in the limit is possible.
\end{proof}

We prove a slightly stronger statement.  
Note that identification in the limit may occur even if the algorithm is not aware that it has correctly identified the true language.  

We say that a collection of languages~$\mathcal{X}$ is \emph{separated in the limit} if, for all distinct~$L, L' \in \mathcal{X}$ and for any adversarial enumeration~$(w_0, w_1, w_2, \ldots)$ of (possibly a subset of)~$L$, there exists~$n_0 \in \mathbb{N}$ such that for all~$n \ge n_0$, the learner knows that at least one of~$L$ or~$L'$ is not the true language.

Separation in the limit implies identification in the limit.  
Indeed, consider the following reasoning.  
We begin by comparing the two languages with the smallest indices.  
After some finite time, separation in the limit ensures that at least one of them can be eliminated.  
We then compare the remaining candidate with the language having the next smallest index, and continue this process iteratively.  
Eventually, the true language~$K$ will be compared against all others, and since~$K$ can never be eliminated, it will remain as the final candidate.  
If, when distinguishing between two potential languages, the algorithm always ``guesses'' that the true language is the one with the smaller index, then this procedure guarantees identification in the limit.

\begin{cor}\label{cor:T1}
    A collection of language is separation in the limit if and only if the topological space is $T_1$.  
\end{cor}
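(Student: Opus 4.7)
The plan is to reduce the $T_1$ condition on $\mathcal{T}_{\rm{full}}$ to a simple combinatorial condition on the language collection, and then verify both directions of the equivalence by short adversarial arguments.

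First I would unpack what $T_1$ means for $\mathcal{T}_{\rm{full}}$, whose basic open sets are $U_F = \{L \in \mathcal{X} : F \subseteq L\}$ for finite $F$. The specialization preorder of this topology coincides with set inclusion: $L' \leq L$ iff every basic open set $U_F$ containing $L'$ also contains $L$, and since the $U_F$ form a basis this is equivalent to $L' \subseteq L$. Hence $\mathcal{T}_{\rm{full}}$ is $T_1$ (equivalently, every singleton is closed) precisely when the specialization preorder is trivial, i.e., when no two distinct languages in $\mathcal{X}$ are comparable under inclusion.

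For the forward direction ($T_1 \Rightarrow$ separation in the limit), I would fix a pair of distinct languages $L, L' \in \mathcal{X}$ and use the antichain reformulation of $T_1$ to produce a witness $w \in L \setminus L'$. Under the adversarial full enumeration of $L$ considered in this subsection, at some finite time $n_0$ the string $w$ is revealed; from that point on the learner sees that the true language must contain $w$, so $L'$ is immediately ruled out. This gives the required $n_0$ so that for all $n \geq n_0$ the learner knows at least one of $\{L, L'\}$ is not the true language.

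For the converse I would argue by contrapositive: if $\mathcal{T}_{\rm{full}}$ is not $T_1$, then after possibly swapping the roles of $L$ and $L'$ we have $L \subsetneq L'$ for some distinct pair. Letting the adversary enumerate $L$, every string it produces lies in $L$ and therefore also in $L'$, so both hypotheses are consistent with every finite prefix of the enumeration and neither can be eliminated at any finite stage; thus separation in the limit fails. The main conceptual step is identifying the specialization preorder of $\mathcal{T}_{\rm{full}}$ with set inclusion so that $T_1$ translates into an antichain condition; once this translation is made, the two directions are essentially immediate adversarial observations and I do not foresee a substantive obstacle.
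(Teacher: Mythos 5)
Your argument is correct and takes essentially the same route as the paper: where you use a single witness $w \in L \setminus L'$, the paper's proof works directly with a basic open set $U_F$ (with $F \subseteq L$ but $F \not\subseteq L'$), and your preliminary step identifying the specialization preorder of $\mathcal{T}_{\rm{full}}$ with inclusion—so that $T_1$ becomes the antichain condition—is exactly the equivalence the paper establishes in the corollary that immediately follows this one. The only point worth noting is that your forward direction assumes the adversary fully enumerates $L$ (so that $w$ is eventually revealed); this full-enumeration reading is the intended one in this subsection, even though the paper's definition of separation in the limit parenthetically permits enumerating a subset of $L$, a case the paper's own proof addresses (somewhat loosely) by observing that if $F$ never appears then $L$ cannot be the language being enumerated.
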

\begin{proof}
    Suppose there is $L_1 \leq L_2$ for some $L_1 \neq L_2$.  Then any open set containing $L_1$ also contains $L_2$. In particular, given the adversary enumeration $(w_0, w_1, w_2, \ldots)$, and any finite set $F_n = \{w_1, w_2, \dots, w_n\}$, the open set $U_{F_n}$ will contain both of $L_1, L_2$ if it contains $L_1$. Thus we will never able to separate $L_1$, $L_2$, and thus separation in the limit is impossible. 

    Now suppose for any $L_1$ and $L_2$, there is an open set $U$ containing $L_1$ but not $L_2$. Then $U$ will contain some finite open set $U_F$. So if the enumeration $(w_0, w_1, w_2, \ldots)$ does not contain $F$, then we know $L_1$ is not the correct language. If the enumeration contains $F$, then at the point where all the strings in $F$ are revealed, we are able to separate $L_1$ and $L_2$. 
\end{proof}

\begin{cor}
    In the full enumeration model, separation in the limit is possible if and only for any two languages $L_1 \neq L_2$, no language is a superset of the other. 
\end{cor}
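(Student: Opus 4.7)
The plan is to deduce this corollary directly from Corollary~\ref{cor:T1}, which states that separation in the limit is equivalent to $\mathcal{T}_{\rm{full}}$ being a $T_1$ space. So the task reduces to showing that, in the full enumeration topology, $T_1$ is equivalent to the antichain condition ``no language is a proper superset of another.''

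First I would identify the specialization preorder of $\mathcal{T}_{\rm{full}}$. By definition, $L_1 \leq L_2$ means every basic open set $U_F$ (with $F$ finite) containing $L_1$ also contains $L_2$. Since $L_1 \in U_F$ iff $F \subseteq L_1$, this immediately gives $L_1 \leq L_2$ iff $L_1 \subseteq L_2$. Recall that a space is $T_1$ iff its specialization preorder is trivial, i.e. $L_1 \leq L_2$ implies $L_1 = L_2$. Translated through the above, $\mathcal{T}_{\rm{full}}$ is $T_1$ iff the only way to have $L_1 \subseteq L_2$ among elements of $\mathcal{X}$ is $L_1 = L_2$, which is precisely the desired antichain condition.

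For the forward direction (contrapositive), suppose there exist $L_1 \neq L_2$ with $L_1 \subsetneq L_2$. Then every finite $F \subseteq L_1$ also satisfies $F \subseteq L_2$, so every basic open set $U_F$ containing $L_1$ also contains $L_2$. Hence no open set separates $L_1$ from $L_2$, and the space fails to be $T_1$. For the backward direction, assume neither of any two distinct $L_1, L_2$ is a superset of the other. Then there exist strings $w_1 \in L_1 \setminus L_2$ and $w_2 \in L_2 \setminus L_1$, and the basic open sets $U_{\{w_1\}}$ and $U_{\{w_2\}}$ separate the two languages in the required $T_1$ manner.

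I don't expect any genuine obstacle here: once one identifies the specialization preorder as the subset relation, the argument collapses into one line in each direction. The only mild subtlety is the purely formal step of converting between ``separation in the limit,'' $T_1$, and the antichain condition, which is all handled by stringing together Corollary~\ref{cor:T1} with the preorder computation above.
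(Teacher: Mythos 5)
Your proof is correct and takes essentially the same approach as the paper's: both reduce via Corollary~\ref{cor:T1} to showing that $\mathcal{T}_{\rm{full}}$ is $T_1$ iff the antichain condition holds, and both directions use the same elementary arguments (choosing $w \in L_1 \setminus L_2$ for one direction, and noting $F \subseteq L_1 \subseteq L_2$ for the other). The only cosmetic difference is that you phrase the computation in terms of the specialization preorder being trivial, which the paper leaves implicit.
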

\begin{proof}
    First assume no language is a superset of the other. For $L_1 \neq L_2$, we show there exists an open set that contains $L_1$ but not $L_2$. This is clear since we can choose $w \in L_1 \setminus L_2$, which is possible since $L_2$ is not a superset of $L_1$. Then the open set $U_w$ contains $L_1$ but not $L_2$. 

    Now we assume that $L_1 \subsetneq L_2$. Then we show any open set containing $L_1$ also contains $L_2$. For this it suffices to show any base open set containing $L$. Since $L_1 \subset L_2$, any base open set $U_F$ containing $L_1$ means $F \subset L_1$, but it also implies $F \subset L_2$. Thus $U_F$ also contains $L_2$. 

    Thus we have shown that the space is $T_1$ if and only if no language is a subset of the other. By Corollary \ref{cor:T1} we are done. 
\end{proof}

\subsection{Extending to the partial enumeration model}\label{subsec:toppartial}

We now turn back from the Gold-Angluin setting with full enumeration to the case of an adversary that may only partially enumerate the set $K$ --- that is, it can choose any infinite subset $C$ of $K$ to enumerate.

In this partial enumeration model, the situation becomes significantly more complex, since multiple topologies are involved—one for each possible choice of~$C'$.  
In particular, the number of such choices of~$C'$ is uncountable.  
At first glance, this may seem to make the identification problem impossible, as there are only countably many basis open sets.

\begin{thm}\label{thm:identificationpartial}
    Identification in the limit is possible in the partial enumeration model if and only if for any $L$ and any $C \subset L$, the Kolmogorov quotient of the space $\topo{C}$ is a $T_D$ space. 
\end{thm}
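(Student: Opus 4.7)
The plan is to prove both directions of the biconditional, mirroring the structure of Theorem~\ref{thm:topAngluin} while adapting to the subtleties of the partial enumeration topology. I would first unpack the $T_D$ condition in the Kolmogorov quotient of $\topo{C}$ into a concrete, Angluin-like criterion: for each $L \in \coll$ with $C \subseteq L$, there exists a finite $F$ (with $F$ sitting in a language that sits above $C$, playing the role of a partial-enumeration tell-tale) such that no $L' \in \coll$ satisfies $C \cup F \subseteq L' \subsetneq L$ up to the indistinguishability collapsed in the Kolmogorov quotient. The ``Kolmogorov quotient'' language is needed to absorb the fact that multiple languages in $\coll$ may be topologically indistinguishable in $\topo{C}$ (cf.\ the discussion after Lemma~\ref{lem:distinct}), so the algorithm's honest target is an equivalence class, not a literal language.

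For the sufficient direction ($\Leftarrow$), I would construct an identification algorithm as follows. At each time $t$, with observed enumeration $S_t$, the algorithm scans the first $t$ languages in index order and picks the smallest-index $L_{i^*}$ such that (i) $S_t \subseteq L_{i^*}$ and (ii) there exists a finite $F$ certifying that no $L_j$ with $j \leq t$ satisfies $S_t \cup F \subseteq L_j$ and $L_j \subsetneq L_{i^*}$ in the quotient. The algorithm then outputs a finite-intersection representation $M_t$ capturing $[L_{i^*}]$. Paralleling the argument in the proof of Theorem~\ref{thm:topAngluin}, I would then verify that: (a) every language preceding $K$ in the ordering that fails to contain $C$ is eventually eliminated by inconsistency, since the adversary enumerates every element of $C$; (b) the minimality of $i^*$ prevents selection of indices strictly above $K$; and (c) by the $T_D$ hypothesis at $(K,C)$, the tell-tale $F$ guaranteed for $K$ becomes verifiable once $S_t$ and the index bound $t$ are large enough, so $i^*_t$ stabilizes at an equivalence class lying in $[C,K]$.

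For the necessary direction ($\Rightarrow$), I would argue the contrapositive. If $T_D$ fails at $(L,C)$, then $\overline{\{[L]\}} \setminus \{[L]\}$ is not closed, so $[L]$ is a limit point of infinitely many pairwise distinct equivalence classes $[L_1],[L_2],\ldots$, each represented by some $L_i \in \coll$ with $C \subseteq L_i \subsetneq L$ in the specialization order. Against any purported identification algorithm $\ma$, I would build an adversary that enumerates $C$ and diagonalizes against $\ma$ using the sequence $L_1, L_2, \ldots$: at each stage, after $\ma$ commits to a candidate $M$ that would need to satisfy $C \subseteq M \subseteq L_i$ for the currently simulated true language, the adversary switches the simulated $K$ to a strictly lower $L_{i+1}$ in the chain, leveraging the distinctness of $[L_{i+1}]$ from the previously relevant classes to force $\ma$ to update $M$ (either $M$ fails to be a subset of $L_{i+1}$, or $M$ fails to contain some element of $C$ that the adversary will now enumerate from $L_{i+1}$). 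Iterating yields an infinite sequence of forced updates, contradicting identification in the limit. This extends Claim~\ref{claim:limitnoidentification} and parallels the impossibility argument inside the proof of Theorem~\ref{thm:topAngluin}. Importantly, the argument applies even to algorithms allowed to output arbitrary infinite hypotheses, matching the representational robustness claim of the paper.

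The main obstacle will be the sufficiency direction. Since the algorithm does not know $C$ in advance, the tell-tale set $F$ guaranteed by the $T_D$ hypothesis must be discovered from the growing observation $S_t$ itself, and the search must stabilize despite the unboundedly growing pool of consistent languages as $t$ increases. This requires a careful priority-based scheme that simultaneously verifies consistency of $L_{i^*}$ with $S_t$ and verifies the nonexistence of competing $L_j \subsetneq L_{i^*}$ with $S_t \cup F \subseteq L_j$, where both tests are monotone in the right direction as $t$ grows. A secondary subtlety is the bookkeeping around the Kolmogorov quotient: two languages with infinite intersection can be topologically indistinguishable, so one must be careful that the ``witnesses'' produced by the $T_D$ hypothesis genuinely separate distinct equivalence classes, and that the algorithm's finite-intersection output is well-defined as a representative of the target class.
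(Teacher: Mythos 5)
Your necessary direction is essentially the paper's: you argue by contrapositive that failure of $T_D$ at $(L,C)$ gives $[L]$ as a limit point, and you build an adversary diagonalizing against any algorithm. This parallels the paper's use of Claims~\ref{claim:truelanguagelimit} and~\ref{claim:limitnoidentification} and is fine.

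The sufficiency direction has a real gap, and the paper explicitly flags exactly the issue your plan runs into. You propose that the algorithm pick the smallest-index $L_{i^*}$ together with a finite tell-tale $F$ certifying that no competitor $L_j \subsetneq L_{i^*}$ ``in the quotient'' contains $S_t \cup F$. But the Kolmogorov quotient of $\topo{C}$ depends on $C$, which the algorithm never observes directly and which ranges over an uncountable family of possibilities; there is no way to test, from $S_t$ alone, whether a candidate $F$ is a valid tell-tale with respect to the unknown $C$. You acknowledge this as the ``main obstacle,'' but the plan never resolves it---the verification in step (ii) is simply not an operation the learner can perform. Moreover, outputting a representation of the single class $[L_{i^*}]$ is the wrong shape of hypothesis: $L_{i^*}$ may precede $K$ in the index order yet not be contained in $K$, so $M_t \subseteq K$ is not guaranteed, and the class $[L_{i^*}]$ itself is only defined relative to $C$. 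The paper's proof avoids all of this with a genuinely different, ``certificate-free'' algorithm: at each time it forms the descending chain of prefix intersections $\Ic{i}{t} = \bigcap_{j \le i} \Lc{j}{t}$ of the consistent languages in index order, and outputs the deepest prefix intersection that remains infinite (falling back to $\bigcap_{j \le t}\Lc{j}{t}$ if needed). The $T_D$ property is used only in the \emph{analysis}: once the finite $F$ from the separating basic open $U_F$ has appeared in the enumeration, every remaining consistent language either contains $C$ or has finite intersection with $C$, and the infinite-cardinality requirement filters out the latter, so the output intersection eventually satisfies $C \subseteq M_t \subseteq K$. You would also need to build in this infinite-cardinality filter, which your proposal omits entirely; it is what keeps languages with $|L \cap C| < \infty$ from poisoning the intersection.

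In short: adapt the sufficiency direction to the paper's prefix-intersection construction rather than a tell-tale search. The whole point of the theorem's proof is that Angluin-style tell-tales cannot be found by the learner here, yet identification is still achievable by intersecting consistent candidates in order and using $T_D$ only post hoc in the correctness argument.
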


Here recall the definition of specialization preorder. In this case, it is equivalent to: $L_1 \leq L_2$ if $|L_1 \cap C|, |L_2 \cap C| = \infty$ and $L_1 \cap C \subset L_2 \cap C$. 
The proof of this theorem turns out to be much more involved compared to the full enumeration modal.

% {\color{red}
Before proceeding to the proof, we make an important remark.  
 It is essential to emphasize the significant technical difference between \emph{partial enumeration} and \emph{full enumeration}.  
At first glance, each topology $\topo{C}$ may appear to resemble the full enumeration model restricted to the sublanguage $C$.  
However, the crucial difficulty and difference lies in the fact that the learner \textbf{does not know in advance} which $C$ is relevant, and there are uncountably many possible choices of $C$.  
Hence, one cannot simply apply techniques from the full enumeration setting to each $\topo{C}$ individually.  
This distinction is not merely technical: as shown in~\cite{kleinberg2025density}, when the collection of languages is uncountable, the class may fail even to be \emph{generatable in the limit}.  
Thus, the partial enumeration model represents a \textbf{genuinely different regime} of learnability, where uncertainty over the underlying context $C$ fundamentally alters what can be identified.
% }

\begin{proof}[Proof of Theorem \ref{thm:identificationpartial}]
    We first prove the easier direction: if identification in the limit is possible, then  for any $L$ and any $C \subset L$, the space $\topo{C}$ is a $T_D$ space.
To see why this is true, suppose the adversary is enumerating~$C \subset L$, where~$L$ is the true language.  
Effectively, this setting is equivalent to identification in the limit in the full enumeration model, where each language~$L'$ is replaced by~$L' \cap C$.  
If~$\topo{C}$ is not a~$T_D$ space, then there exists a language~$J$ such that for every open set~$U$ in~$\topo{C}$, there is another language~$J' \in U$ satisfying~$(J' \cap C) \subsetneq (J \cap C)$.  
Note that~$|J' \cap C| = \infty$.  

In other words, the set~$\overline{\{J\}} \setminus \{J' : J' \cap C = J \cap C\}$ is open, and thus~$J$ is a limit point.  
By Claims~\ref{claim:limitnoidentification} and~\ref{claim:truelanguagelimit}, it follows that identification in the limit is impossible.

We now prove the converse: if each space~$\topo{C}$ is a~$T_D$ space, then identification in the limit is possible.  
In retrospect, however, this condition alone may sound to be too weak to guarantee identification in the limit.  
Indeed, in the full enumeration model, if the adversary enumerates an infinite set~$C$, then the~$T_D$ property of~$\topo{C}$ ensures identification in the limit: each superset of the true language~$C$ has its own ``certificate''—an open set that separates it from all of its proper subsets.  

The caveat in the partial enumeration model is that the algorithm does \emph{not} know which~$C$ is being enumerated.  
Among the uncountably many possible subsets~$C$, the learner has no way of knowing which~$\topo{C}$ to use to locate the appropriate certificate.  
By contrast, in the full enumeration model, each language~$L$ corresponds to a unique choice of~$C$, namely~$C = L$ itself, so each language admits a unique certificate of identification.  
In the partial enumeration model, however, each language~$L$ may be associated with uncountably many possible subsets~$C$—the uncountably many infinite subsets of~$L$—making the identification task fundamentally more complex.

To address this issue, we need to find a ``certificate-free" way of identification. We define our algorithm as follow: 

At time $t$, only considering the languages which are consistent up to time $t$, i.e., the adversary enumeration $w_1, \dots, w_t$ are all strings in the language. Write these languages from left to right depending on their original labels in the list of languages, say $\Lc{1}{t},  \Lc{2}{t}, \dots$. 
Write $\Ic{1}{t} = \Lc{1}{t}$, $\Ic{2}{t} = \Lc{1}{t}\cap \Lc{2}{t}$, and in general, for $i \geq 1$, write 
\begin{equation}
    \Ic{i}{t} = \bigcap_{j=1}^i \Lc{j}{t}. 
\end{equation}
If some $\Lc{i}{t}$ has finite cardinality for some $i$, output the largest $k$
 such that $\Lc{k}{t}$ has infinite cardinality; if every $\Lc{i}{t}$ has infinite cardinality and $\bigcap_{j\geq 1} \Lc{j}{t}$ has infinite cardinality, then output $\bigcap_{j\geq 1} \Lc{j}{t}$; if every $\Lc{i}{t}$ has infinite cardinality but $\bigcap_{j\geq 1} \Lc{j}{t}$ has finite cardinality, then output $\bigcap_{1 \leq j\leq t} \Lc{j}{t}$. 

We first show that after some finite time, all the remaining languages coming before $K$ will contain $C$. This is because if some $L$ does not contain $c \in C$, then once $c$ is output by the adversary, $L$ will be eliminated. Therefore after some finite time, all the remaining languages whose ordering in the original language list are before $K$ all fully contain $C$. Therefore after some finite time $T$, our algorithm will always output a subset of strings of $K$. 

We now show that this algorithm also identifies in the limit. This is because, as $\topo{C}$ is a $T_D$ space, there is an open set $U$ that separates $K$ from the rest of the languages $L$ where $L \cap C \subsetneq C$, and $|L \cap C| = \infty$. This open set must contain a base open set of the form $U_F$ for some finite subset $F \subset C$. Then once all the strings in $F$ were enumerated by the adversary, all the remaining consistent languages, i.e., all the languages in $U_F$ will satisfy $(L \cap C) \supset C$, except for the languages $L$ containing $F$ but $|L \cap C| < \infty$. By our algorithm, since we will try to output the intersection of as many consistent languages as possible where their intersection is infinite, it means we will never include a language $L$ where $|L \cap C| < \infty$. Therefore the intersection should fully contain $C$. 
\end{proof}

In the special case of full enumeration, our argument provides an \textbf{alternative, certificate-free proof}of Angluin’s classical characterization of language identification in the limit.  
Unlike Angluin’s original proof, our approach does not require knowledge of the specific tell-tale sets for each language.  
Instead, the topological $T_D$ property itself guarantees their existence implicitly, offering a geometric and conceptually simpler interpretation of learnability.

Note that any collection of languages that is identifiable in the partial enumeration model is also identifiable in the full enumeration model. The converse is not true.

\subsection{Robustness of Partial Enumeration, and illustrative examples}

An additional noteworthy property is that identification by partial enumeration is robust under finite deletions from the underlying set of strings. In other words, removing finitely many strings from the domain does not affect whether a class is identifiable in the limit with partial enumeration. By contrast, this robustness fails for full enumeration: finite deletions can destroy identifiability (a class identifiable before deletion may cease to be so), though not the other way around. Below is the restatement of Theorem \ref{thm:introRobustness}. 
\begin{thm}
    Let $\mathcal{X}$ be a countable collection of languages, and let $\mathcal{X}'$ be obtained from $\mathcal{X}$ by removing a finite set of strings from the underlying ground set (thus removing those strings from each language and eliminating any duplicate languages that may result). Then $\mathcal{X}$ is identifiable in the limit with partial enumeration if and only if $\mathcal{X}'$ is identifiable in the limit with partial enumeration. 
\end{thm}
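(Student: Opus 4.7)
The plan is to invoke the topological characterization established in Theorem~\ref{thm:identificationpartial}: a countable collection $\mathcal{X}$ is identifiable in the limit with partial enumeration if and only if, for every infinite $C$ contained in some $L \in \mathcal{X}$, the Kolmogorov quotient of $\topo{C}$ is a $T_D$ space. I would then argue that this topological condition is preserved when passing from $\mathcal{X}$ to $\mathcal{X}'$ by removing the finite set $W$ of strings from the ground set.

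First, I would set up the natural projection $\pi : \mathcal{X} \to \mathcal{X}'$ defined by $\pi(L) = L \setminus W$. For any infinite subset $C'$ of some $L' \in \mathcal{X}'$ we automatically have $C' \cap W = \emptyset$, and any preimage $L$ of $L'$ under $\pi$ satisfies $C' \subset L' \subseteq L$. The key observation is that, whenever $C' \cap W = \emptyset$, the conditions $C' \subset L$, $F \subset L$ for finite $F \subset C'$, and $|L \cap C'| = \infty$ are all invariant under $\pi$. Consequently the basic open sets of $\topo{C'}$ on $\mathcal{X}$ correspond under $\pi$ to those on $\mathcal{X}'$, and the topological indistinguishability relation is compatible with the fibers of $\pi$: distinct languages in $\mathcal{X}$ sharing the same $\pi$-image are either all indistinguishable (when $C' \subset L$) or all isolated (when $C' \not\subset L$), and extra isolated singletons play no role in the $T_D$ condition. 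Hence the Kolmogorov quotients of $\topo{C'}$ on $\mathcal{X}$ and on $\mathcal{X}'$ have the same $T_D$ status.

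Second, I would handle the infinite $C \subset L \in \mathcal{X}$ with $C \cap W \ne \emptyset$, which have no direct counterpart on the $\mathcal{X}'$ side. The claim is that the $T_D$ status of the Kolmogorov quotient of $\topo{C}$ equals that of $\topo{C \setminus W}$, both as topologies on $\mathcal{X}$; since $W$ is finite and $C$ is infinite, $C \setminus W$ is again infinite and contained in $L$. A $T_D$ failure in either topology is characterized by the existence of a language $J$ lying in the closure of the set of $J'$ with $J' \cap C \subsetneq J \cap C$ and $|J' \cap C| = \infty$ (or the analogue with $C$ replaced by $C \setminus W$); since $W$ is finite, I plan to transfer such witnesses between the two topologies by leveraging the infinite supply of candidates guaranteed by such a failure, replacing any witness $J'$ whose strict containment fails to persist after restricting to $C \setminus W$ by a nearby witness for which the strictness survives. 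Combining the two steps, the $T_D$ condition holds for every infinite $C \subset L \in \mathcal{X}$ iff it holds for every such $C$ disjoint from $W$ iff, by the $\pi$-correspondence, it holds for every infinite $C' \subset L' \in \mathcal{X}'$, and Theorem~\ref{thm:identificationpartial} then yields the desired equivalence.

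The main obstacle will be the finite-perturbation invariance in the second step. In passing from $\topo{C}$ to $\topo{C \setminus W}$, basic opens $V_F$ with $F$ meeting $W$ disappear, and a strict containment $J' \cap C \subsetneq J \cap C$ can degenerate to equality once one intersects with $C \setminus W$, precisely when $(J \cap C) \setminus (J' \cap C) \subseteq W$. The delicate part is to carry out the witness-replacement while (a) keeping the new witness inside all required basic opens of $\topo{C \setminus W}$ and (b) ensuring that strictness of the containment survives the removal of $W$. I expect this to follow from a pigeonhole argument exploiting the finiteness of $W$: among the infinitely many candidate witnesses available in each basic open, only finitely many can have their discrepancy with $J$ entirely confined to $W$, so a valid witness always remains.
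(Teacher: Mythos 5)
Your overall plan is the same as the paper's: invoke Theorem~\ref{thm:identificationpartial}, set up the projection $\pi(L)=L\setminus W$, and transfer the $T_D$ condition across the removal of $W$. Your Step 1 (the $\pi$-correspondence for $C'$ disjoint from $W$) is exactly the paper's forward direction, with the same observation that fibers of $\pi$ are either indistinguishable or extra isolated singletons that cannot affect $T_D$. Your Step 2 corresponds to the paper's backward direction, which the paper handles by taking a sequence of witnesses $L_i$ for the $T_D$ failure of $\topo{C}$ and discarding the finitely many $L_i$ with $L_i\setminus W = L\setminus W$; you instead factor this into a standalone ``finite-perturbation invariance of $T_D$'' lemma. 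So the decomposition differs slightly, but the technical content is the same.

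Two substantive issues with Step 2 as you describe it. First, the claim that the Kolmogorov quotients of $\topo{C}$ and $\topo{C\setminus W}$ have the \emph{same} $T_D$ status is false in one direction, so it should not be stated as an equivalence. Take $W=\{0\}$, $L=\{0\}\cup 2\mathbb{N}^{+}$, $L_n = 2\mathbb{N}^{+}\setminus\{2n\}$ for $n\geq 1$, and $C=L$. In $\topo{C}$ the basic open determined by $F=\{0\}$ isolates $L$ (it is the only language containing $0$), so the $T_D$ condition holds at $L$. But in $\topo{C\setminus W}=\topo{2\mathbb{N}^{+}}$ every basic open containing $L$ also contains cofinitely many of the $L_n$, each of which satisfies $L_n\cap 2\mathbb{N}^{+}\subsetneq L\cap 2\mathbb{N}^{+}$, so $L$ is a limit point of its strict predecessors and $T_D$ fails. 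Luckily your chain of equivalences only uses the direction you can actually prove: a $T_D$ failure of $\topo{C}$ must persist in $\topo{C\setminus W}$ (equivalently, $T_D$ for $\topo{C\setminus W}$ implies $T_D$ for $\topo{C}$). You should state the lemma as a one-sided implication.

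Second, the pigeonhole step as written does not hold. You assert that among the witnesses $J'$ in a basic open, ``only finitely many can have their discrepancy with $J$ entirely confined to $W$.'' But there can be infinitely many such $J'$: all languages that agree with $J$ on $C\setminus W$, miss a fixed element of $J\cap C\cap W$, and differ from one another only outside $C$ will all be bad witnesses, and nothing bounds their number. (They are, however, all indistinguishable from $J$ in $\topo{C\setminus W}$, which is the reason the argument can be rescued.) The clean fix is the one the paper also implicitly needs: given a basic open $U_F$ of $\topo{C\setminus W}$ with $J\in U_F$, enlarge $F$ to $F'=F\cup(J\cap C\cap W)$, which is finite and contained in $J\cap C$, so $U_{F'}$ is a basic open of $\topo{C}$ containing $J$. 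Any witness $J'\in U_{F'}$ then contains $J\cap C\cap W$, so the nonempty set $(J\cap C)\setminus(J'\cap C)$ must avoid $W$, giving $J'\cap(C\setminus W)\subsetneq J\cap(C\setminus W)$; and $F\subseteq F'\subseteq J'$ keeps $J'\in U_F$. This replaces the pigeonhole with a direct ``enlarge the finite constraint'' argument and closes the gap (the same enlargement also repairs a corresponding imprecision in the paper's backward direction, which removes only the $L_i$ with $L_i\setminus W=L\setminus W$ and does not address $L_i$ with $L_i\cap(C'\setminus W)=L\cap(C'\setminus W)$ that differ from $L$ only on $C'\cap W$).
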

\begin{proof}
Let $W$ be the finite set of strings removed from the underlying alphabet. 
    Recall that for each infinite sublanguage $C'$, the topology $\topo{C'}$ on $\mathcal{X}$ is generated by the basic open sets
    \[
        U_{\lan, F, C'} = 
        \begin{cases}
            \{\lan' \in \mathcal{X} \mid F \subseteq \lan',\ |\lan' \cap C'| = \infty\}, & \text{if } C' \subseteq \lan, \\[4pt]
            \{\lan\}, & \text{if } C' \not\subseteq \lan,
        \end{cases}
    \]
    where $F$ ranges over all finite subsets of $C'$.
    Identifiability in the limit with partial enumeration is characterized by the requirement that, for every infinite $C'$, the Kolmogorov quotient of $\topo{C'}$ is a $T_D$ space by Theorem~\ref{thm:identificationpartial}.

    Define the natural map
    \[
        f \colon \mathcal{X} \to \mathcal{X}', \qquad 
        f(\lan) = \lan \setminus W.
    \]
    Since $W$ is finite, for each infinite $C'$ with $C' \cap W = \emptyset$, we have a bijective correspondence between the basic open sets of $\topo{C'}$ and those of $\topo{C'\setminus W}$:
    \[
        f(U_{\lan, F, C'}) = U_{f(\lan), F, C'\setminus W}.
    \]
    Hence $f$ induces a homeomorphism between $\topo{C'}$ on $\mathcal{X}$ and $\topo{C'\setminus W}$ on $\mathcal{X}'$. 
    In particular, the Kolmogorov quotient of $\topo{C'}$ is a $T_D$ space if and only if the Kolmogorov quotient of $\topo{C'\setminus W}$ is a $T_D$ space.
    Therefore, if $\mathcal{X}$ is identifiable in the limit with partial enumeration, then so is $\mathcal{X}'$.

  For the converse, assume that $\mathcal{X}'$ is identifiable in the limit with partial enumeration. 
    Suppose, for contradiction, that $\mathcal{X}$ is not. 
    Then there exists some infinite $C'$ for which $\topo{C'}$ fails to be $T_D$. 
    Thus, there is a language $L \in \mathcal{X}$ that is a limit point of an infinite sequence $\{L_i\} \subseteq \mathcal{X}$ in $\topo{C'}$, meaning that no basic open neighborhood of $L$ can separate it from all preceding points in the specialization preorder.

    Consider now the finitely many languages among the $L_i$ that differ from $L$ only on the finite set $W$. 
    Removing these finitely many exceptions does not affect the existence of a limit sequence: since $W$ is finite, the set of $L_i$ differing from $L$ on $W$ is also finite.
    The remaining $L_i$ form an infinite sequence for which $L$ remains a limit point with respect to $\topo{C'\setminus W}$ in $\mathcal{X}'$.
    But this contradicts the assumption that $\mathcal{X}'$ is identifiable in the limit with partial enumeration, since that would imply that $\topo{C'\setminus W}$ is $T_D$ and thus admits no such nontrivial limit point.
    Therefore, $\mathcal{X}$ must also be identifiable in the limit with partial enumeration.
\end{proof}

\paragraph{Failure of robustness in the full enumeration model.}

On the contrary, in the \emph{full enumeration model}, removing even a finite number of strings may destroy identifiability. 
For instance, consider the collection of languages
\[
    L_i = \mathbb{N} \setminus \{i\} \quad (i \geq 1),
    \qquad L_0 = \mathbb{N} \cup \{-1\}.
\]
This collection is identifiable in the limit under full enumeration, since $L_0$ contains the unique distinguishing element $-1$.  
In the full enumeration model, the adversary is required to eventually enumerate every element of the target language; hence, if the true language is $L_0$, the learner will eventually observe the special symbol $-1$ and can then conclusively identify $L_0$.

However, if we remove the single string $-1$ from the underlying alphabet (that is, let $W = \{-1\}$), we obtain the modified collection
\[
    L_i' = \mathbb{N} \setminus \{i\} \quad (i \geq 1),
    \qquad L_0' = \mathbb{N}.
\]
This new collection $\{L_i' : i \ge 0\}$ is \emph{not} identifiable in the limit under full enumeration.  
Intuitively, $L_0'$ and the tail of $\{L_i'\}$'s are now in some sense ``indistinguishable": there is no finite stage of enumeration at which the learner can determine whether a missing element $i$ will eventually appear or not.  
Topologically, this corresponds to the failure of even the $T_1$ separation property (as in this case $L_i'$ is always a subset of $L_0'$): the language $L_0'$ cannot be separated from all the $L_i'$ by any open set in the enumeration topology.  
Thus, while identification by partial enumeration is robust under finite deletions, identifiability in the full enumeration model is not: it can be lost, though never gained, by removing finitely many strings.

\vspace{0.3in}

We now illustrate more examples to show that it is more difficult for a set of languages to be identifiable with partial enumeration. 
% }

\begin{example}
    For each $i \geq 1$ let $L_i = \mathbb{N} \setminus \{i\}$, and $L_0 = \mathbb{N}$. In this example, the collection of languages is not identifiable in the full enumeration model, and thus is not identifiable in the partial enumeration model. The reason why it not identificable in the full enumeration model is that there is no open set that separates $L_0$  from all the other $L_i$'s simultanueously. 
\end{example}

\begin{example}
We now look at an example very similar to the previous one. 
    For each $i \geq 1$ let $L_i = (\mathbb{N} \setminus \{i\}) \cup \{-2\}$, and $L_0 = \mathbb{N} \cup \{-1\}$. Even by just pending two special strings $-2, -1$, this collection of languages becomes identifiable in the full enumeration model. This is because $L_0$ now is separated from the rest of the $L_i$'s by the special string $-1$. Since in the full enumeration model, all the strings need to be revealed, so $-1$ will be revealed eventually if $L_0$ is the true language. Another way to see it is that the specialization preorder is simply an antichain. However, this set of languages is not identifiable in the partial enumeration model. This is because by choosing $C = \mathbb{N}$,  the quotient space of $\topo{C}$ is not a $T_D$ space, as $\mathbb{N}$ is a limit point of languages $L_i$'s and $L_i \leq L_0$ in the specialization preorder for each $i \geq 1$. On the other hand, if $C$ contains either $-1$ or $-2$, then it becomes identifiable. But since the algorithm does not know which is the correct $C$, unless it sees $-1$ or $-2$, it can make the wrong guess infinitely often. 
    This also indicates that, this example is less robust in the sense that one can easily break the property that it is identifiable in the limit with full enumeration by deleting a finite number of strings ($-1 $ and $-2$). 
\end{example}

\begin{example}
   Consider the set of languages where for each $i \geq 1$, $L_i = \mathbb{N} \setminus \{i\}$. This set of language is more robust to be identifiable in the limit with full enumeration compared to the previous example in the sense that by removing a finite set of strings (and thus removing duplicate languages), it is still identifiable in the limit in the full enumeration model. However, this set again is identifiable in the full enumeration model since it is a $T_D$ space. However, this is again not identifiable in the partial enumeration model, even though the union of the "core" of each $L_i$'s is not a subset of one language.  Pick $C = 2\mathbb{N} \subset L_1$. Then again $\topo{C}$ is not a $T_D$ space since $L_1$ is a limit point of all the other languages whose specialization preorder is less than $L_1$ in the quotient of  $\topo{C}$. 
\end{example}

\begin{example}
    If the set of languages is finite, then clearly it is always identifiable in both models, since in our setting, any Kolmogorov quotient of a finite space is a $T_D$ space. 
\end{example}

\begin{example}
The following example is an non-trivial collection of languages that is identifiable in the partial enumeration model.  
Consider the family $L_i = \mathbb{N} i$, that is, the set of integer multiples of $i$.  
This family is identifiable because the only information required is the greatest common divisor (gcd) of the observed values in $C$.  
At each time step $t$, the algorithm can compute the gcd of all adversarial inputs seen up to time $t$, and identify all values that are multiples of this gcd.  
After some finite time $T$, this running gcd will coincide with the true gcd of all elements in $C$.  
Hence, from time $T$ onward, the algorithm can correctly output all multiples of this greatest common divisor.

  We can also extend this example to all arithmetic progressions.  
For each pair of positive integers $i, d$, define the language
\[
L_{i,d} = \{\, i + k d : k \in \mathbb{N} \,\},
\]
that is, the set of positive integers forming an arithmetic progression with initial term $i$ and common difference $d$.  
The key observation is that the algorithm only needs to determine the value of $d$ and the smallest representative element $i$ in $L_{i,d}\cap C$.  
At each time step $t$, the algorithm computes the greatest common divisor (gcd) of all adversarial inputs observed up to time $t$, and identifies all values equal to the smallest adversarial input seen so far plus multiples of this gcd.  
After some finite time $T$, this running gcd will coincide with the true gcd of all elements in $C$, and the smallest element of $C$ will also have appeared.  
Hence, from time $T$ onward, the algorithm can correctly identify a set that contains $C$.  
It is straightforward to verify that this set is always a subset of the true language.

\end{example}

\section{Conclusion}

We have considered language generation in the limit in a model with {\em partial enumeration}, in which an adversary only enumerates an infinite subset $C$ of the true language $K$. Note that such an adversaries now has an uncountable set of strategies at its disposal --- one for each subset of a language in the collection $\mathcal{X} = \{L_1, L_2, L_3, \ldots\}$ --- rather than just the countable set of strategies of an adversary that must choose a language in $\mathcal{X}$. 

We have shown that there is an algorithm that can achieve language generation in the limit even with this model of partial enumeration; and moreover, when the set $C$ the adversary produces has lower density $\alpha$ in the true language $K$, then the algorithm will produce an output set of lower density at least $\alpha/2$, and this bound is sharp. Specializing to the case of full enumeration from prior work, where $C = K$, this establishes a tight lower density of $1/2$, resolving the central open question from prior work on density in language generation. 

Our algorithm works by representing its hypothesized language at every step as a finite intersection of languages in $\mathcal{X}$, and we further show that with finite intersections as a representation, we can characterize when an algorithm is able to achieve {\em language identification in the limit} --- in the sense of the classical model of Gold and Angluin --- with an adversary performing partial enumeration. Our characterization is topological, and as a by-product of our analysis, we give a novel, succinct topological formulation of Angluin's classical characterization theorem for language identification in the limit, showing that it is equivalent to a topological space defined in \cite{kleinberg2025density} having the $T_D$ separation property. 

As a further direction, it would be interesting to ask what further results could be achieved with the topological methods developed here, and whether there are other clean characterizations of language learning properties that are expressible in terms of the topological spaces we work with. It would also be interesting to consider whether there are other strengthenings of the adversary's power that naturally reflect the properties faced by algorithms in real language generation scenarios, and whether they could be captured by the types of analysis considered in this work. 

\bibliographystyle{plain}
\bibliography{AIbib}

\begin{thebibliography}{10}

\bibitem{allen2024physics}
Zeyuan Allen-Zhu and Yuanzhi Li.
\newblock Physics of language models: Part 1, learning hierarchical language
  structures.
\newblock 2024.

\bibitem{angluin1979finding}
D.~Angluin.
\newblock Finding patterns common to a set of strings.
\newblock In {\em Proceedings of the 11th annual ACM Symposium on Theory of
  Computing}, pages 130--141, 1979.

\bibitem{angluin1980inductive}
D.~Angluin.
\newblock Inductive inference of formal languages from positive data.
\newblock {\em Information and Control}, 45(2):117--135, 1980.

\bibitem{arjovsky2017towards}
M.~Arjovsky and L.~Bottou.
\newblock Towards principled methods for training generative adversarial
  networks.
\newblock In {\em International Conference on Learning Representations (ICLR)},
  2017.

\bibitem{arjovsky2017wasserstein}
M.~Arjovsky, S.~Chintala, and L.~Bottou.
\newblock In {\em International Conference on Machine Learning (ICML)}, 2017.

\bibitem{bai2025noise}
Yannan Bai, Debmalya Panigrahi, and Ian Zhang.
\newblock Language generation in the limit: Noise, loss, and feedback.
\newblock {\em arXiv preprint arXiv:2507.15319}, 2025.

\bibitem{charikar-pabbarju}
M.~Charikar and C.~Pabbaraju.
\newblock Exploring facets of language generation in the limit.
\newblock {\em arXiv preprint arXiv:2411.15364}, 2024.

\bibitem{gold1967language}
E.~M. Gold.
\newblock Language identification in the limit.
\newblock {\em Information and Control}, 10(5):447--474, 1967.

\bibitem{hanneke2025union}
Steve Hanneke, Amin Karbasi, Anay Mehrotra, and Grigoris Velegkas.
\newblock On union‑closedness of language generation.
\newblock {\em arXiv preprint arXiv:2506.18642}, 2025.

\bibitem{kalai2023calibrated}
A.~Kalai and S.~Vempala.
\newblock Calibrated language models must hallucinate.
\newblock In {\em ACM Symposium on Theory of Computing (STOC)}, 2024.

\bibitem{kalavasis-stoc25}
A.~Kalavasis, A.~Mehrotra, and G.~Velegkas.
\newblock On the limits of language generation: Trade-offs between
  hallucination and mode-collapse.
\newblock In {\em Proceedings of the 57th annual ACM Symposium on Theory of
  Computing}, 2025.

\bibitem{kalavasis2024breadth}
Alkis Kalavasis, Anay Mehrotra, and Grigoris Velegkas.
\newblock Characterizations of language generation with breadth.
\newblock {\em arXiv preprint arXiv:2412.18530}, 2024.

\bibitem{kalavasis2025limits}
Alkis Kalavasis, Anay Mehrotra, and Grigoris Velegkas.
\newblock On the limits of language generation: Trade‑offs between
  hallucination and mode collapse.
\newblock In {\em STOC}, 2025.

\bibitem{kleinberg2024limit}
Jon Kleinberg and Sendhil Mullainathan.
\newblock Language generation in the limit.
\newblock {\em NeurIPS 2024, The 38th Annual Conference on Neural Information
  Processing Systems}, 2024.

\bibitem{kleinberg2025density}
Jon Kleinberg and Fan Wei.
\newblock Density measures for language generation.
\newblock In {\em 66th Annual Symposium on Foundations of Computer Science
  (FOCS 2025}, 2025.

\bibitem{li-generation-pac}
J.~Li, V.~Raman, and A.~Tewari.
\newblock Generation through the lens of learning theory.
\newblock {\em arXiv preprint arXiv:2410.13714}, 2024.

\bibitem{li2025learning}
Jiaxun Li, Vinod Raman, and Ambuj Tewari.
\newblock Generation through the lens of learning theory.
\newblock In {\em COLT}, 2025.

\bibitem{peng-transformer}
B.~Peng, S.~Narayanan, and C.~Papadimitriou.
\newblock On limitations of the transformer architecture.
\newblock In {\em Conference on Language Modeling}, 2024.

\bibitem{raman2025noisy}
Ananth Raman and Vinod Raman.
\newblock Generation from noisy examples.
\newblock In {\em ICML}, 2025.

\bibitem{sanford-transformer}
C.~Sanford, D.~J. Hsu, and M.~Telgarsky.
\newblock Representational strengths and limitations of transformers.
\newblock In {\em Advances in Neural Information Processing Systems (NeurIPS)},
  2023.

\bibitem{wang-transformer}
S.~Wang, Y.~Shen, S.~Feng, H.~Sun, S.~H. Teng, and W.~Chen.
\newblock Alpine: Unveiling the planning capability of autoregressive learning
  in language models.
\newblock {\em arXiv preprint arXiv:2405.09220}, 2024.

\bibitem{weiss-transformer}
G.~Weiss, Y.~Goldberg, and E.~Yahav.
\newblock Thinking like transformers.
\newblock In {\em Proceedings of the 38th International Conference on Machine
  Learning (ICML)}, volume 139, pages 11080--11090, 2021.

\end{thebibliography}

\end{document}